\newtheorem{mypro}{Proposition}
\newtheorem{lemma}{Lemma}
\newtheorem{corollary}{Corollary}
\begin{document}

\title{Reconfigurable Intelligent Surface Aided Constant-Envelope Wireless Power Transfer}

\author{
	\IEEEauthorblockN{Huiyuan Yang, Xiaojun Yuan,~\IEEEmembership{Senior Member,~IEEE}, Jun Fang,~\IEEEmembership{Senior Member,~IEEE}, and Ying-Chang Liang,~\IEEEmembership{Fellow,~IEEE}}
\thanks{This work was presented in part in IEEE Globecom 2020 \cite{yang2020reconfigurable}.}
\thanks{The authors are with the National Key Laboratory of Science and Technology on Communication, University of Electronic Science and Technology of China, Chengdu 611731, China (e-mail: hyyang@std.uestc.edu.cn; xjyuan@uestc.edu.cn; JunFang@uestc.edu.cn; liangyc@ieee.org). }
}

%\author{Huiyuan~Yang,
%        Xiaojun~Yuan,~\IEEEmembership{Senior Member,~IEEE},~Jun~Fang,~\IEEEmembership{Senior Member,~IEEE} and~Ying-Chang Liang,~\IEEEmembership{Fellow,~IEEE}% <-this % stops a space
%}
\maketitle
\begin{abstract}

By reconfiguring the propagation environment of electromagnetic waves artificially, reconfigurable intelligent surfaces (RISs) have been regarded as a promising and revolutionary hardware technology to improve the energy and  spectrum efficiency of wireless networks. In this paper, we study a RIS aided multiuser multiple-input multiple-output (MIMO) wireless power transfer (WPT) system, where the transmitter is equipped with a constant-envelope analog beamformer. First, we maximize the total received power of the users by jointly optimizing the beamformer at transmitter and the phase-shifts at the RIS, and propose two alternating optimization based suboptimal solutions by leveraging the semidefinite relaxation (SDR) and the successive convex approximation (SCA) techniques respectively. Then, considering the user fairness, we formulate another problem to maximize the total received power subject to the users' individual minimum received power constraints. A low complexity iterative algorithm based on both alternating direction method of multipliers (ADMM) and SCA techniques is proposed to solve this problem. In the case of multiple users, we further analyze the asymptotic performance as the number of RIS elements approaches infinity, and bound the performance loss caused by RIS phase quantization. Numerical results show the correctness of the analysis results and the effectiveness of the proposed algorithms.

\end{abstract}

% Note that keywords are not normally used for peerreview papers.
\begin{IEEEkeywords}
Reconfigurable intelligent surface, constant-envelope beamforming, wireless power transfer.
\end{IEEEkeywords}

% For peer review papers, you can put extra information on the cover
% page as needed:
% \ifCLASSOPTIONpeerreview
% \begin{center} \bfseries EDICS Category: 3-BBND \end{center}
% \fi
%
% For peerreview papers, this IEEEtran command inserts a page break and
% creates the second title. It will be ignored for other modes.
\IEEEpeerreviewmaketitle

\section{Introduction}

The proliferation of wireless devices has brought a lot of convenience to modern lives, but their limited battery life requires frequent battery replacement/recharging, which is costly and even infeasible in many applications. By powering wireless devices with radio frequency (RF) energy over the air using a dedicated power transmitter, wireless power transfer (WPT) technology provides an attractive solution \cite{bi2015wireless,zeng2017communications}.

%In this paper, we consider radiative WPT, which is a far-field wireless power transfer technology with the transmitter and receiver completely decoupled electrically, i.e., the energy absorption by the receiver does not affect the power radiation of the transmitter. Specifically, the energy-bearing signals at the transmitter are up-converted into the designated radio frequency and radiated into the air by using the antenna array. After propagating through the wireless channel, the RF signal in the air is picked up by the receiving antennas, and finally converted into the usable direct current (DC) by rectifier \cite{zeng2017communications}. Radiative wireless power transfer has been extensively studied in \cite{wagih2019rectennas,kurs2019radiative,kim2019efficiency,liu2018transmit,yang2015throughput,wang2017beamforming,xu2014multiuser} and the references therein.

The low efficiency of WPT for users over long distances has been regarded as the performance bottleneck in practical systems \cite{zeng2017communications}. To alleviate this problem, array-based energy beamforming techniques are widely used to obtain a beamforming gain by concentrating the energy of the radiated electromagnetic waves in a particular direction \cite{yang2015throughput,wang2017beamforming,xu2014multiuser}. However, conventional digital beamforming requires that each antenna has its own radio frequency chain, which is costly especially when the antenna number becomes very large \cite{zhang2014massive}. One way to reduce this cost is to apply analog beamforming, in which multiple transmit antennas share a common radio frequency chain with a radio frequency frontend to control the amplitude and phase of the signal at each antenna \cite{hur2013millimeter,zhang2017multi}. In this paper, we consider constant-envelope analog beamforming, which reduces the frontend to a set of phase shifters (one for each transmit antenna), thereby further reducing the hardware cost \cite{zhang2017multi,zhang2014massive,mohammed2013per}. More importantly, constant-envelope signaling allows the transmitter to use non-linear RF power amplifiers, which can significantly improve the energy efficiency\footnote{For wireless power transfer, ``energy efficiency'' refers to the ratio of the total received power of the users to the power consumed by the transmitter.} of the system since power-efficient radio frequency components are generally non-linear \cite{mohammed2013per,mancuso2011reducing}.

Another promising way to increase the efficiency of WPT is to place reconfigurable intelligent surfaces (RISs) in the wireless propagation environment so as to recofigure the propagation environment of electromagnetic waves artificially. RISs is a kind of programmable and reconfigurable passive meta-surfaces which are able to change the transmission direction of the signals \cite{yu2019miso, liang2019large, Chen2019Intelligent, Guo2020Weighted}. Typically, a RIS is a planar array composed of low-cost passive elements, e.g., printed dipoles, where each of the elements can independently reflect incident electromagnetic waves with a adjustable phase-shift to change the reflected signal propagation collaboratively \cite{wu2019intelligent,yuan2020reconfigurable}. In this way, IRSs are able to create favorable wireless propagation environments through intelligent placement and passive beamforming \cite{yuan2020reconfigurable, liang2019large}, so as to increase the efficiency of wireless power transfer.

Recently, the new research paradigm of RIS aided wireless power transfer has been intensively studied \cite{Mishra2019Channel, Wu2020Weighted, Pan2020Intelligent, Wu2020Joint}. Specifically, in \cite{Mishra2019Channel}, the authors proposed a channel estimation scheme and a low-complexity beamforming algorithm in a RIS aided multiple-input single-output (MISO) WPT system. \cite{Wu2020Weighted} and \cite{Pan2020Intelligent} respectively studied the weighted sum-power maximization and weighted sum-rate maximization problems in the RIS aided simultaneous wireless information and power transfer (SWIPT) system, and WPT was also studied as a special case of SWIPT. In a RIS-aided SWIPT system, the authors of \cite{Wu2020Joint} studied the problem of minimizing the total transmitting power with QoS constraints and proposed an effective iterative algorithm.

\begin{figure}[h]
	\centerline{\includegraphics[width=1\columnwidth]{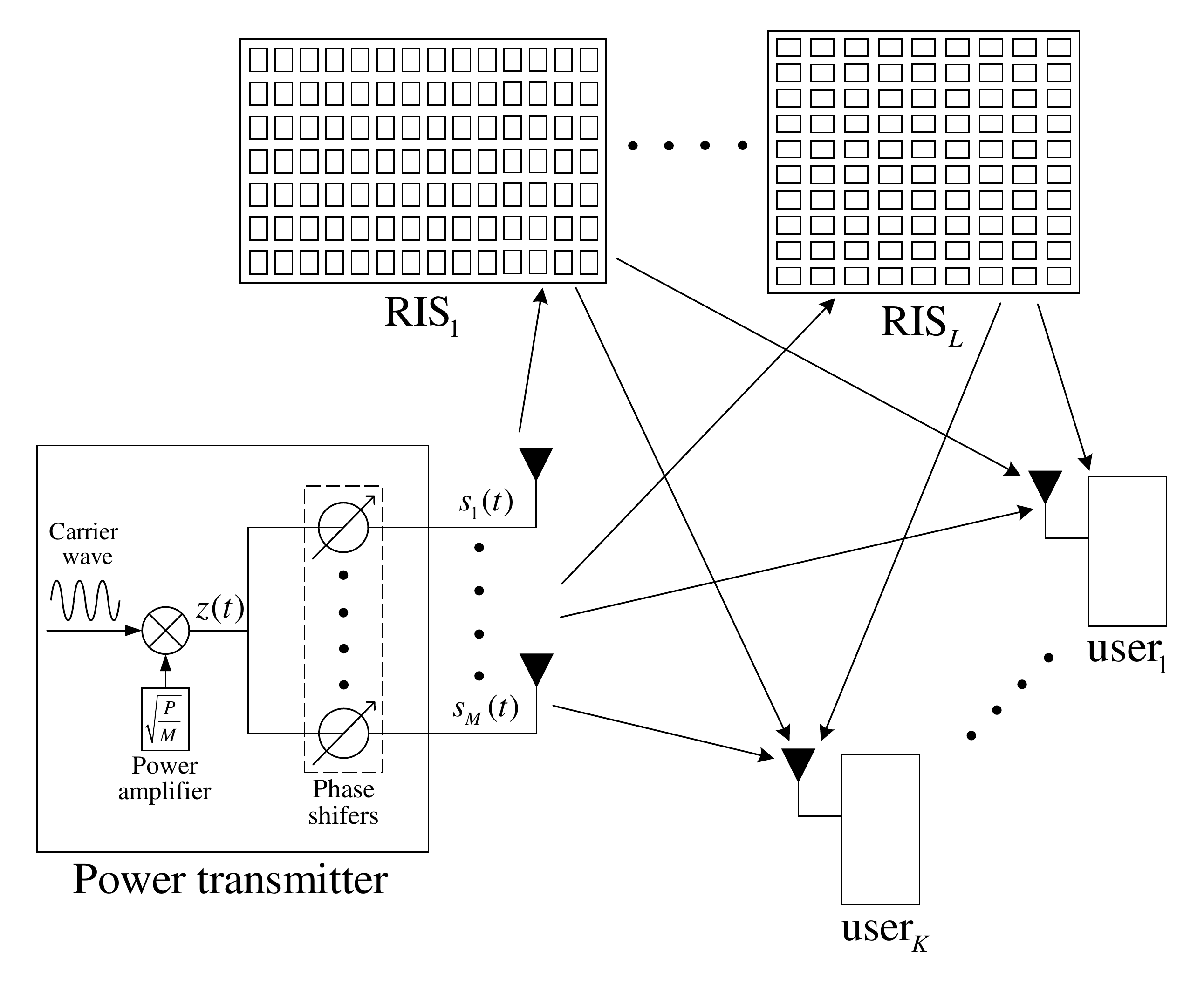}}
	\caption{ RIS aided wireless power transfer in a multiuser MIMO system with the constant-envelope analog beamforming at the multi-antenna transmitter.
	}
\end{figure}

This paper is focused on a multiuser multiple-input multiple-output (MIMO) WPT system aided by several RISs. As shown in Fig. 1, the system consists of $L$ RISs, $K$ single-antenna receivers and a transmitter equipped with $M$ antennas. Following the idea of constant-envelope analog beamforming, all the transmitter antennas share a common radio frequency chain. Therefore, the envelopes of the signals radiated by the transmitting antennas are the same. To the best of our knowledge, this is the first work to study the joint optimization of RIS and transmitter beamforming in the RIS aided constant-envelope WPT system. A typical application of the considered system is to charge the sensors of wireless sensor networks (WSNs). Since these sensors are often placed in hard-to-reach locations, replacing their batteries manually can be inconvenient and even dangerous. WPT provides an attractive solution by remotely recharging the batteries. In addition, since the ubiquitous and uncontrolled barriers may block the propagation path of electromagnetic waves from transmitter to receiver, IRS can greatly increase the energy efficiency of the system by providing new paths. As illustrated in Fig. 2(a), a mobile charger moves in the sensor layout area to remotely charge the batteries of the sensors. Another application is to change mobile devices in office areas. As illustrated in Fig. 2(b), a fixed wireless charger charges multiple mobile devices in the office simultaneously.

\begin{figure}[htb]
	\centerline{\subfigure[A mobile charger with fixed devices]{
			\begin{minipage}[t]{0.5\columnwidth}
				\centerline{\includegraphics[width=\columnwidth]{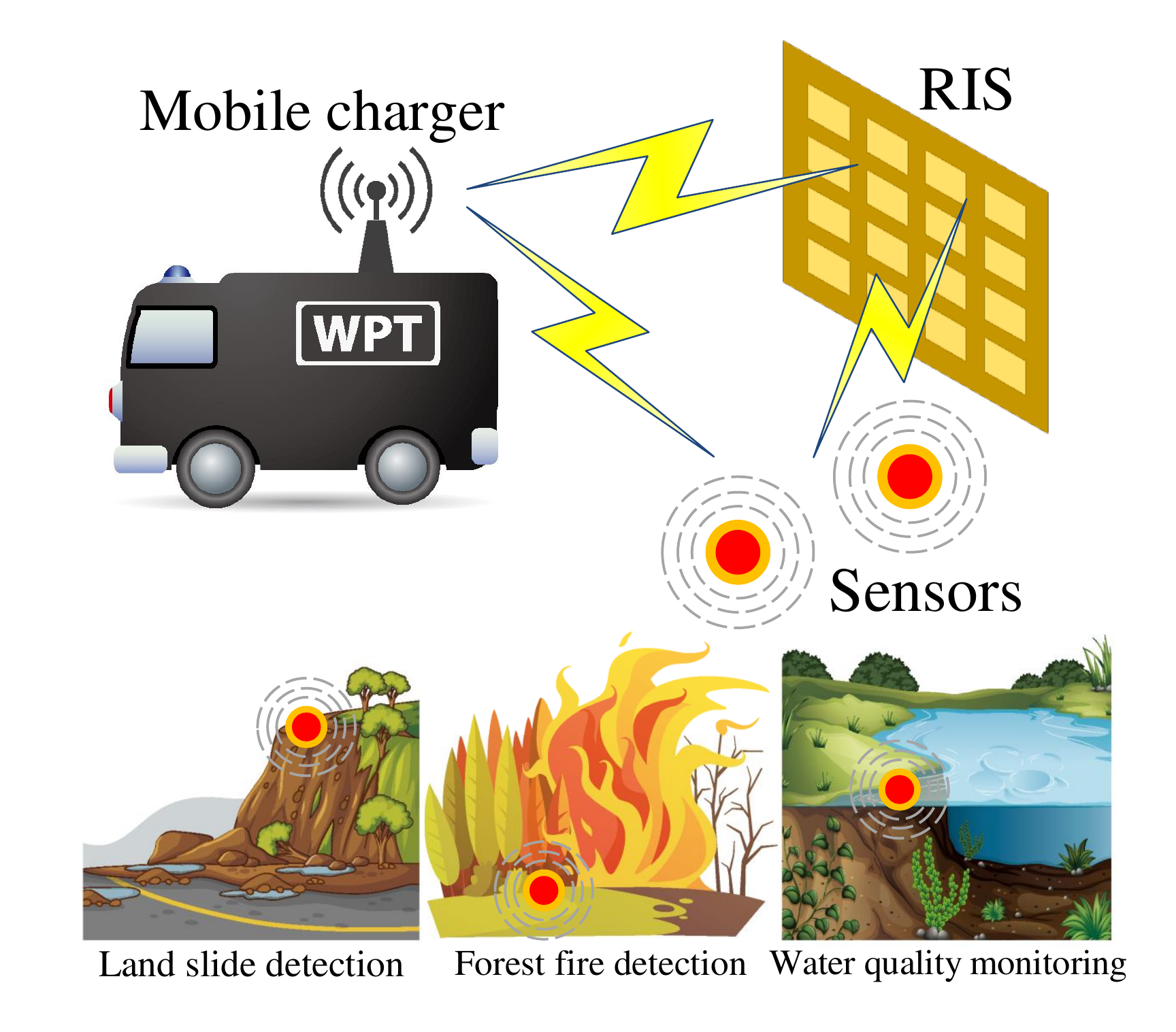}}
				%\caption{fig1}
			\end{minipage}%
		}%
		\subfigure[A fixed charger with mobile devices]{
			\begin{minipage}[t]{0.5\columnwidth}
				\centerline{\includegraphics[width=\columnwidth]{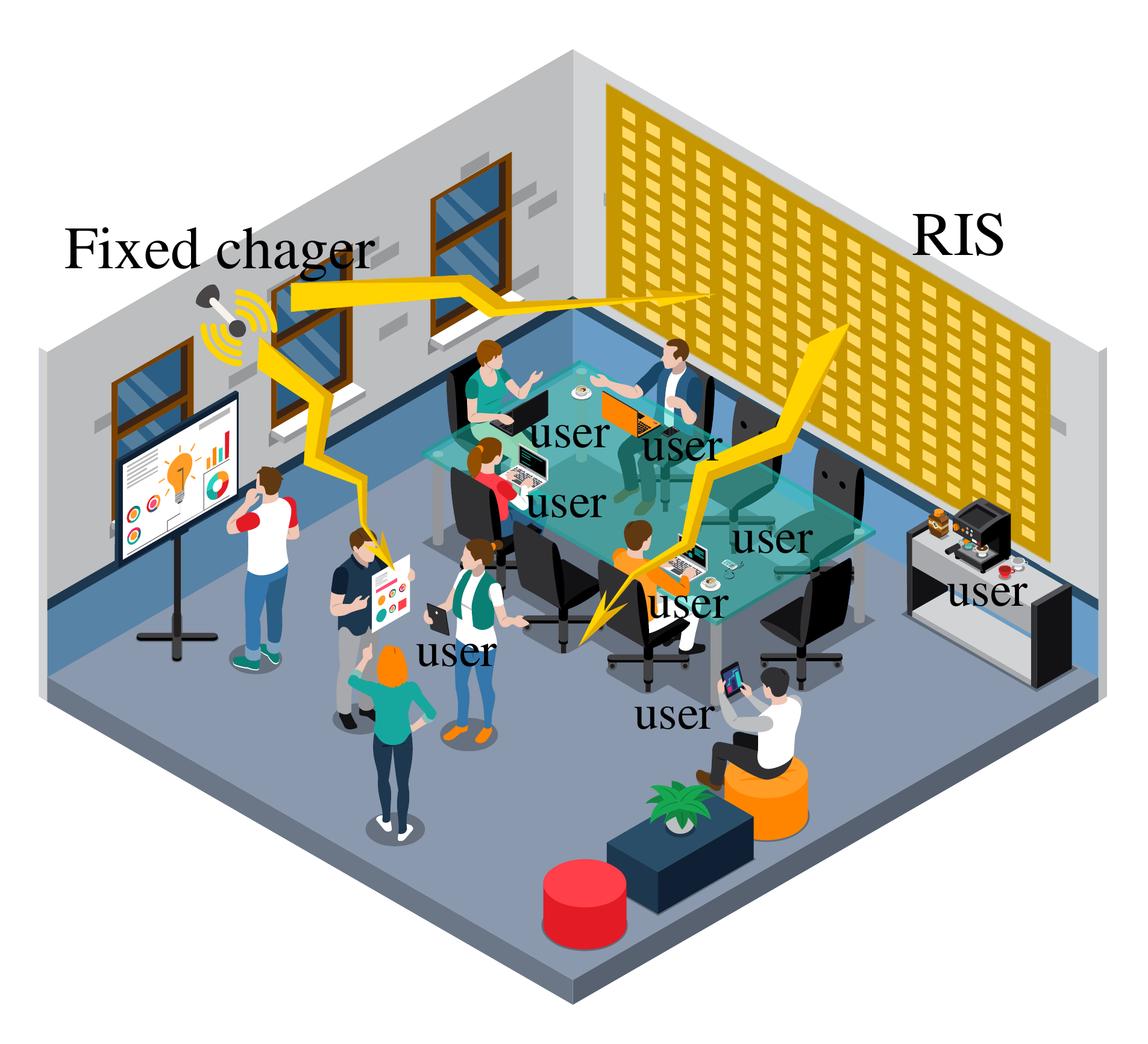}}
				%\caption{fig2}
			\end{minipage}%
	}}
	\caption{Two applications of RIS aided wireless power transfer.}
\end{figure}

In this paper, we propose two optimization problems that are applicable to different scenarios. The first one is to maximize the total received power. This problem is reasonable in some scenarios, such as in Fig. 2(a). In this scenario, each user has roughly an equal chance to access the charger due to the mobility of the charger, and since each sensor can work for days and even weeks after it is fully charged, the charging service is not delay-sensitive. Therefore, the total received power is an appropriate performance evaluation indicator in such a system. Combined with the physical properties of the transmitter and RIS, our first optimization problem is thus to maximize the total received power of the users by jointly optimizing the beamformer at transmitter and the phase-shifts at the RIS, subject to the constant-envelope constraints which include two parts: 1) the signals radiated by all the transmitting antennas are required to have the same constant envelope; 2) each element of the RIS generates a fixed gain on the reflected signal. One feature of this optimization problem is that the optimal transmitter and RIS beamformers focus electromagnetic energy in the direction of nearby users, at the expense of the quality-of-service (QoS) degradation for far away users. However, the charging service is delay-sensitive in many applications, such as the scenario in Fig. 2(b), in which the mobile devices cannot stay in the charging area for long periods of time, so the charging service of each device must be completed in a relatively short time. Further, different devices may have different requirements on charging speed due to the difference of remaining power or other reasons. This inspires us to consider a new optimization problem that takes into account both energy efficiency and user fairness, i.e., to maximize the total received power of the users subject to the constant-envelope constraints and a minimum QoS constraint for each user.

The main contributions of this paper are summarized as follows: 

\begin{itemize}
	\item We propose a novel energy-efficient RIS aided constant-envelope wireless power transfer scheme.
	
	\item For delay-insensitive applications, we formulate the total received power maximization problem, which is non-convex and difficult to solve optimally. We first develop a semidefinite relaxation (SDR) based algorithm to provide an approximate solution to this problem. To reduce computational complexity, we further design a successive convex approximation (SCA) based low-complexity algorithm.
	
	\item For delay-sensitive applications, we formulate the problem of total received power maximization with QoS constraints. This problem is again non-convex and difficult to solve optimally. To tackle this problem, we combine the ideas of alternating optimization (AO), SCA and alternating direction method of multipliers (ADMM) to develop a effective algorithm with low-complexity.
	
	\item The performance limit of the RIS aided multiuser WPT system as the number of RIS elements $N \to \infty$ is analyzed. We show that, if the RIS is large enough, the average total received power of the $K$ users scales in the order of $N^2$. 
	
	\item The performance loss caused by RIS phase quantization is analyzed. We derive the explicit expression of an upper bound on this performance loss, which shows that, even if the phases can only be quantified to a few given values (e.g. four values), the performance loss caused by quantization is acceptable.
	
\end{itemize}

The rest of this paper is organized as follows. Section $\textrm{\uppercase\expandafter{\romannumeral2}}$ introduces the system model and the problems proposed in this paper. In Section $\textrm{\uppercase\expandafter{\romannumeral3}}$, we consider the maximization of the total received power. In Section $\textrm{\uppercase\expandafter{\romannumeral4}}$, we address the maximization of the total received power under user fairness consideration. The performance analysis is presented in Section $\textrm{\uppercase\expandafter{\romannumeral5}}$. Section $\textrm{\uppercase\expandafter{\romannumeral6}}$ presents the numerical results. Finally, conclusions are drawn in Section $\textrm{\uppercase\expandafter{\romannumeral7}}$.

$Notation$: Scalars, vectors and matrices are denoted by italic letters, bold-face lower-case letters and bold-face upper-case letters, respectively. $d^{\star}$ is the conjugate of $d$. $\textrm{arg}(d)$ denotes the argument of complex number $d$ (set $\textrm{arg}(0)=0$). diag$(\boldsymbol{d})$ denotes a diagonal matrix with each diagonal element being the corresponding element in $\boldsymbol{d}$. $[\boldsymbol{d}]_n$ denotes the $n$-th element in vector $\boldsymbol{d}$. $[\boldsymbol{d}]_{(M:N)}$ denotes the vector that contains from the $M$-th element to the $N$-th element of vector $\boldsymbol{d}$ and  $\textrm{arg}(\boldsymbol{d})=[\text{arg}([\boldsymbol{d}]_1),\dots,\text{arg}([\boldsymbol{d}]_N)]^T$. Let $\textrm{exp}\left(\boldsymbol{d}\right)=[e^{[\boldsymbol{d}]_1},\dots,e^{[\boldsymbol{d}]_N}]^T$. $\|\boldsymbol{d}\|$ represents the Frobenius norm of vector $\boldsymbol{d}$. diag$(\boldsymbol{D})$ denotes a column vector with each element being the corresponding diagonal element in $\boldsymbol{D}$. $[\boldsymbol{D}]_{n,n}$ denotes the element of matrix $\boldsymbol{D}$ in row $n$ and column $n$. $\boldsymbol{D} \succeq \boldsymbol{0}$ means that matrix $\boldsymbol{D}$ is positive semidefinite. $\boldsymbol{D}^{\star}$, $\boldsymbol{D}^T$ and $\boldsymbol{D}^H$ denote the conjugate, the transpose and the conjugate transpose of matrix $\boldsymbol{D}$, respectively. $\textrm{rank}(\boldsymbol{D})$ represents the rank of matrix $\boldsymbol{D}$. $\textrm{tr}(\boldsymbol{D})$ denotes the trace of matrix $\boldsymbol{D}$. $\mathbb{C}^{a \times b}$ denotes the space of $a \times b$ complex-valued matrices. $j$ denotes the imaginary unit. $\boldsymbol{I}_N$ denotes the $N\times N$ identity matrix. $\mathcal{C}\mathcal{N}(\boldsymbol{0},\boldsymbol{I}_N)$ represents the circularly symmetric complex Gaussian (CSCG) distribution with zero mean and covariance matrix $\boldsymbol{I}_N$.

\section{System Model And Problem Formulation}

\subsection{System Model}

Consider a multiuser MIMO wireless power transfer system assisted by a number of reconfigurable intelligent surfaces, where a base station (BS) with $M$ antennas transfers power to $K$ single-antenna\footnote{Most of the discussions in this paper can be literally extended to the case of multi-antenna users by treating each user antenna as an individual ``user''. The only exception is that the minimum received power constraint (13) in (P2) needs to be modified to cover all the antennas of each user. The corresponding algorithm for solving (P2) also calls for slight modifications to accommodate the change of (13).} users simultaneously. The system consists of $L$ RISs, where the $l$-th RIS, denoted by $\textrm{RIS}_l$, has $N_l$ elements, $\forall l$. Since the path loss is significant, we assume that the power of the signals reflected by the RIS twice or more is negligible and thus ignored \cite{wu2019towards}. All the RISs are connected by a controller for reflecting phase adjustment. We assume that perfect channel state information (CSI) is available to the BS, the BS computes the optimal phase coefficients of the RISs, and the BS sends these coefficients to the RIS controller via a separate wired or wireless link. In practice, the CSI can be acquired, e.g., by employing the channel estimation methods in \cite{He2020Cascaded, Wang2020Channel, chen2019channel, Liu2020Matrix}.

%\color{red} We consider that this WPT system operates in time-division duplexing (TDD) mode, and exploit channel reciprocity to acquire the downlink channel state information (CSI) based on the uplink training \cite{Liu2020Matrix, chen2019channel}. Further, we consider a quasi-static flat-fading channel model and assume that all the CSIs are perfectly known at the BS by applying the estimation methods in \cite{He2020Cascaded, Liu2020Matrix, chen2019channel, Wang2020Channel}. Since the BS has much stronger computing capability than the RIS controllers, it is assumed to be responsible for calculating the reflection coefficients of the RIS elements and then transmitting them to the IRS controllers. \color{black}

The baseband equivalent channels of the BS-user link, the $\textrm{RIS}_l$-user link and the BS-$\textrm{RIS}_l$ link are denoted by $\boldsymbol{H}_{d} \in \mathbb{C}^{K \times M}$, $\boldsymbol{H}_{r,l} \in \mathbb{C}^{K \times N_l}$ and $\boldsymbol{S}_l \in \mathbb{C}^{N_l \times M}$, respectively. Let $N=\sum_{l=1}^{L}N_l$ be the total number of reflecting elements. The baseband equivalent channels of the RIS-user link and the BS-RIS link are denoted by $\boldsymbol{H}_{r} \in \mathbb{C}^{K \times N}$ and $\boldsymbol{S} \in \mathbb{C}^{N \times M}$, respectively, where $\boldsymbol{H}_r=[\boldsymbol{H}_{r,1},\dots,\boldsymbol{H}_{r,L}]$ and $\boldsymbol{S}=[\boldsymbol{S}_1^T,\dots,\boldsymbol{S}_L^T]^T$. 

The gain vector of the $l$-th RIS is denoted by $\boldsymbol{\phi}_l=[\beta_{l,1} e^{j\theta_{l,1}},\ldots,\beta_{l,N_l} e^{j\theta_{l,N_l}}]^T$, where $\beta_{l,n} \in [0,1]$ and $\theta_{l,n} \in [-\pi,\pi)$ are the amplitude and phase of the reflection coefficient of the $n$-th element on the RIS, respectively. We assume that the amplitudes of the reflection coefficients are known and unchangeable. Let $\boldsymbol{\phi}=[\boldsymbol{\phi}_1^T,\dots,\boldsymbol{\phi}_L^T]^T=[\beta_1 e^{j\theta_{1}},\ldots,\beta_n e^{j\theta_{n}},\ldots,\beta_N e^{j\theta_{N}}]^T$ be the gain vector of all the RISs. Then, the multiuser MIMO channel matrix $\boldsymbol{H} \in \mathbb{C}^{K \times M}$ can be represented as
\begin{equation} 
\boldsymbol{H}=\boldsymbol{H}_{r} \boldsymbol{\Phi} \boldsymbol{S} + \boldsymbol{H}_{d},
\end{equation}
where $\boldsymbol{\Phi} = \textrm{diag}(\boldsymbol{\phi})$. For the beauty of the subsequent formulas in the paper, let phase-shift vector $\boldsymbol{v}=[e^{j\theta_{1}},\ldots, e^{j\theta_{N}}]^H$. Note that $\boldsymbol{\Phi}=\boldsymbol{\Psi}\boldsymbol{\Lambda}$, where $\boldsymbol{\Psi}=\textrm{diag}\left(\boldsymbol{v}^{\star}\right)$ and $\boldsymbol{\Lambda}=\textrm{diag}\left([\beta_1,\ldots,\beta_N]\right)$. Then, (1) can be rewritten as
\begin{equation} 
\boldsymbol{H}=\boldsymbol{H}_{r} \boldsymbol{\Psi} \boldsymbol{G} + \boldsymbol{H}_{d},
\end{equation}
where $\boldsymbol{G}=\boldsymbol{\Lambda}\boldsymbol{S}$ is the modified baseband equivalent channel matrix of the BS-RIS link.

We assume that all the $M$ BS antennas share a common radio frequency chain to reduce the implementation cost. The radio frequency signal is generated as 
\begin{equation} 
z(t)=\sqrt{\frac{P}{M}} e^{j 2\pi f_c t},
\end{equation}
where $f_c$ is the carrier frequency and $P$ is the total transmission power at the BS. Then, $z(t)$ goes through the phase shifter at antenna $m$, yielding 
\begin{equation} 
s_m (t)=z(t)e^{j\alpha_m}=x_m e^{j 2\pi f_c t}, m=1,\dots,M,
\end{equation}
where $x_m=\sqrt{P/M} e^{j \alpha_m}$ and $\alpha_m \in [-\pi,\pi)$ is the phase-shift at the $m$-th antenna.  At time $t$, the received signal vector of all the $K$ users is given by
\begin{equation} 
\boldsymbol{y}(t)=\boldsymbol{H}\boldsymbol{s}(t),
\end{equation}
where $\boldsymbol{s}(t) = [s_1(t),s_2(t),\dots,s_M(t)]^T \in \mathbb{C}^{M \times 1}$ with the element $s_m(t)$ given by (4) and $\boldsymbol{y}(t) = [y_1(t),y_2(t),\dots,y_K(t)]^T \in \mathbb{C}^{K \times 1} $ with $y_k(t)$ being the received signal of user $k$. Note that in (5), the additive noise is ignored for energy harvesting, since it is too weak for the received signals \cite{zhang2017multi}. In addition, the channel is quasi-static, i.e., $\boldsymbol{H}$ remains unchanged within the channel coherence time.

Each user aims to harvest energy from the received radio frequency signal $y_k(t)$. The received power of user $k$ is given by
\begin{equation} 
Q_k=\frac{\eta}{T} \int_{0}^{T} \left|y_k(t)\right|^2 dt =\eta\left|\boldsymbol{h}_k^H\boldsymbol{x}\right|^2,
\end{equation}
where $\eta \in \left[0,1\right]$ is the energy conversion efficiency of the receivers, $T=1/f_c$ is the period of the carrier signal, $\boldsymbol{h}_{k}^H$ is the $k$-th row of the channel matrix $\boldsymbol{H}$, and $\boldsymbol{x}=[x_1,\dots,x_M]^T$.

\subsection{Problem Formulation}

In this subsection we introduce the two optimization problems studied in this paper.

\subsubsection{Sum Power Maximization (SPM)}

The objective of the first optimization problem is to maximize the total received power by jointly optimizing the transmit signal vector $\boldsymbol{x}$ and the phase-shift matrix $\boldsymbol{\Psi}$, subject to the constant-envelope constraints of the transmit signals $\left\{[\boldsymbol{x}]_m\right\}_{m=1}^{M}$ and the RIS phase-shift elements $\{[\boldsymbol{\Psi}]_{n,n}\}_{n=1}^{N}$. Note that the total received power can be expressed as $\eta\left\|(\boldsymbol{H}_{r} \boldsymbol{\Psi} \boldsymbol{G} + \boldsymbol{H}_{d})\boldsymbol{x}\right\|^2$. Since $\eta$ is a constant, it can be ignored in the sum power maximization problem, as shown below:
\begin{alignat}{2}
\textrm{(P1):} \quad \max_{\boldsymbol{\Psi}, \boldsymbol{x}} & \ Q = \left\|(\boldsymbol{H}_{r} \boldsymbol{\Psi} \boldsymbol{G} + \boldsymbol{H}_{d})\boldsymbol{x}\right\|^2 & \\
\mbox{s.t.}\quad
&[\boldsymbol{\Psi}]_{n,n}=e^{j\theta_{n}}, \theta_n \in [-\pi,\pi), n=1,\dots,N,  & \\
&[\boldsymbol{x}]_m=\sqrt{\frac{P}{M}}e^{j\alpha_m}, \alpha_m \in [-\pi,\pi), m=1,\dots,M.
\end{alignat}
Note that, since the power consumed by the transmitter is fixed, maximizing the total received power is identical with maximizing the energy efficiency. Due to the constant-envelope constraints and the non-concave objective funtion with respect to $\boldsymbol{\Psi}$ and $\boldsymbol{x}$, (P1) is non-convex and is in general difficult to solve exactly. however, if we fix $\boldsymbol{\Psi}$ (or $\boldsymbol{x}$), (P1) is a non-convex quadratically constrained quadratic program (QCQP) with respect to $\boldsymbol{x}$ (or $\boldsymbol{\Psi}$). This observation inspired us to leverage alternating optimization technique, which is an iterative procedure for maximizing the objective function by alternating maximizations over the individual subsets of the variables.

\subsubsection{Sum Power Maximization with minimum received power Constraints (SPMC)}
The sum power maximization optimization objective sacrifices the fairness among individual users to improve the total received power, especially when the user channel suffers from the near-far effect. In practical system design, fairness of wireless power transfer among users needs to be traded off with power efficiency.

To strike a balance between power efficiency and user fairness, we consider the optimization problem of maximizing the total received power with the minimum received power constraints, formulated as

\begin{alignat}{2}
\textrm{(P2):} \quad \max_{\boldsymbol{\Psi},\boldsymbol{x}}& \  
\sum_{k=1}^{K} \left(Q_k= \left|\boldsymbol{h}_{r,k}^{H} \boldsymbol{\Psi}\boldsymbol{G}\boldsymbol{x}+\boldsymbol{h}_{d,k}^{H}\boldsymbol{x} \right|^2\right) & \\
\mbox{s.t.}\quad
&[\boldsymbol{\Psi}]_{n,n}=e^{j\theta_{n}}, \theta_n \in [-\pi,\pi), n=1,\dots,N,\\
&[\boldsymbol{x}]_m=\sqrt{\frac{P}{M}}e^{j\alpha_m}, \alpha_m \!\in\! [-\pi,\pi), m=1,\dots,M,\\
&\left|\boldsymbol{h}_{r,k}^{H} \boldsymbol{\Psi}\boldsymbol{G}\boldsymbol{x}+\boldsymbol{h}_{d,k}^{H}\boldsymbol{x} \right|^2 \geq \frac{p_k}{\eta},k=1,\dots,K,
\end{alignat}
where $p_k$ refers to the minimum received power allowed for user $k$, $\boldsymbol{h}_{r,k}^{H}$ and $\boldsymbol{h}_{d,k}^{H}$ represent the $k$-th rows of matrix $\boldsymbol{H}_r$ and $\boldsymbol{H_d}$, respectively. Note that $\eta\left|\boldsymbol{h}_{r,k}^{H} \boldsymbol{\Psi}\boldsymbol{G}\boldsymbol{x}+\boldsymbol{h}_{d,k}^{H}\boldsymbol{x} \right|^2$ refers to the power received by user $k$. For the same reason, we solve this problem by leveraging alternating optimization technique.

\section{Sum Power Maximization}

In this section, we propose two suboptimal solutions based on alternating optimization for sum power maximization by leveraging the SDR and SCA techniques. The corresponding algorithms are called SPM-SDR and SPM-SCA.

%namely, SPM-SDR algorithm and SPM-SCA algorithm \color{red}(the term ``SPM'' means ``sum power maximization'')\color{black}, by leveraging the SDR and the SCA techniques, respectively.

\subsection{SPM-SDR Algorithm}

In this subsection, we apply the SDR technique in the two steps of alternating optimization separately to solve (P1).

\subsubsection{Optimization of $\boldsymbol{x}$ for a fixed $\boldsymbol{\Psi}$}
At this point, multiuser MIMO channel $\boldsymbol{H}$ is fixed because of the fixed phase-shift matrix $\boldsymbol{\Psi}$, we then focus on solving
\begin{alignat}{2}
\textrm{(P3):} \quad \max_{\boldsymbol{x}} \quad & \|\boldsymbol{H}\boldsymbol{x}\|^2 & \\
\mbox{s.t.}\quad
&\left|[\boldsymbol{x}]_m\right|=\sqrt{\frac{P}{M}}, m=1,\dots,M.
\end{alignat}

Note that $\|\boldsymbol{H}\boldsymbol{x}\|^2=\textrm{tr}(\boldsymbol{H}^{H}\boldsymbol{H}\boldsymbol{x}\boldsymbol{x}^{H})$. Define $\boldsymbol{X}=\boldsymbol{x}\boldsymbol{x}^{H}$ where $\boldsymbol{X} \succeq \boldsymbol{0}$ and rank$(\boldsymbol{X})=1$. Then (P3) can be equivalently rewritten as
\begin{alignat}{2}
\textrm{(P4):} \quad \max_{\boldsymbol{X}} \quad & \textrm{tr}(\boldsymbol{H}^{H}\boldsymbol{H}\boldsymbol{X}) & \\
\mbox{s.t.}\quad
&\boldsymbol{X} \succeq \boldsymbol{0};\
[\boldsymbol{X}]_{m,m}=\frac{P}{M}, m=1,\dots,M,\\
&\textrm{rank}(\boldsymbol{X}) =1.
\end{alignat}

By dropping the rank-one constraint, (P4) is a standard convex semidefinite programming (SDP) problem and hence can be solved by existing convex optimization solvers such as CVX. Note that, in general, the solution to the SDP problem is not a rank-one matrix.  Thus, additional steps are needed to construct a rank-one solution from this optimal higher-rank solution \cite{Luo2010Semidefinite, Wiesel2005Semidefinite}. Specifically, let the eigenvalue decomposition of $\boldsymbol{X}$ as $\boldsymbol{X}=\boldsymbol{U}\boldsymbol{\Sigma} \boldsymbol{U}^H$, where $\boldsymbol{U}=\left[\boldsymbol{e}_1,\dots,\boldsymbol{e}_M\right] \in \mathbb{C}^{M \times M}$ is a unitary matrix with each column is an eigenvector of $\boldsymbol{X}$, and $\boldsymbol{\Sigma}=\textrm{diag}\left([\lambda_1,\dots,\lambda_M]^T\right) \in \mathbb{C}^{M \times M}$ is a diagonal matrix with diagonal elements are the eigenvalues of $\boldsymbol{X}$. Then, generate $s$ complex vectors $\{\boldsymbol{u}_i =\textrm{exp}(j\textrm{arg}(\boldsymbol{U}\boldsymbol{\Sigma}^{1/2}\bar{\boldsymbol{u}}_i))\}_{i=1}^s$, where $\bar{\boldsymbol{u}}_i$ subject to $\mathcal{C}\mathcal{N}(\boldsymbol{0},\boldsymbol{I}_M)$, $\forall i$. Finally, we construct a vector $\hat{\boldsymbol{x}}=\sqrt{P/M}\hat{\boldsymbol{u}}$ as a suboptimal solution to (P3), where $\hat{\boldsymbol{u}}=\mathop{\arg\max}_{i=1,\dots,s}\left\|\boldsymbol{H}\boldsymbol{u}_i\right\|^2$.

\subsubsection{Optimization of $\boldsymbol{\Psi}$ for a fixed $\boldsymbol{x}$}
With $\boldsymbol{x}$ fixed, (P1) can be written as follows:
\begin{alignat}{2}
\textrm{(P5):} \quad \max_{\boldsymbol{\Psi}} \quad & 
\sum_{k=1}^{K}\left|\boldsymbol{h}_{r,k}^{H} \boldsymbol{\Psi}\boldsymbol{G}\boldsymbol{x}+\boldsymbol{h}_{d,k}^{H}\boldsymbol{x} \right|^2 & \\
\mbox{s.t.}\quad
&\left|[\boldsymbol{\Psi}]_{n,n}\right|=1, n=1,\dots,N.
\end{alignat}

Recall that $\boldsymbol{\Psi}=\textrm{diag}\left(\boldsymbol{v}^{\star}\right)$. Let $\boldsymbol{c}_k=\textrm{diag}(\boldsymbol{h}_{r,k}^{\star})\boldsymbol{G}\boldsymbol{x}$ and $a_k=\boldsymbol{h}_{d,k}^H\boldsymbol{x}$, we have $\left|\boldsymbol{h}_{r,k}^{H} \boldsymbol{\Psi}\boldsymbol{G}\boldsymbol{x}+\boldsymbol{h}_{d,k}^{H}\boldsymbol{x} \right|^2 = \left|\boldsymbol{v}^H\boldsymbol{c}_k+a_k\right|^2$. Note that $a_k a_k^{\star}$ is independent of $\boldsymbol{v}$, thus (P5) is equivalent to
\begin{alignat}{2}
\textrm{(P6):} \quad \max_{\boldsymbol{v}} \quad & 
\sum_{k=1}^{K} \boldsymbol{v}^H\boldsymbol{c}_k\boldsymbol{c}_k^H \boldsymbol{v}+a_k^{\star}\boldsymbol{v}^H\boldsymbol{c}_k+a_k\boldsymbol{c}_k^H\boldsymbol{v}& \\
\mbox{s.t.}\quad
&\left|[\boldsymbol{v}]_n\right|=1, n=1,\dots,N.&
\end{alignat}
By introducing an auxiliary variable $t$, (P6) can be recast as
\begin{alignat}{2}
\textrm{(P7):} \quad \max_{\bar{\boldsymbol{v}}} \quad & 
\bar{\boldsymbol{v}}^H\boldsymbol{R}\bar{\boldsymbol{v}}& \\
\mbox{s.t.}\quad
&\left|[\bar{\boldsymbol{v}}]_n\right|=1, n=1,\dots,N+1,&
\end{alignat}
where 
\begin{equation} 
\boldsymbol{R}=\sum_{k=1}^K
\begin{bmatrix}
\boldsymbol{c}_k\boldsymbol{c}_k^H& a_k^{\star}\boldsymbol{c}_k\\
a_k\boldsymbol{c}_k^H & 0
\end{bmatrix}
\ \textrm{and} \
\bar{\boldsymbol{v}}=
\begin{bmatrix}
t\boldsymbol{v}\\
t
\end{bmatrix}.
\end{equation}

Note that $\bar{\boldsymbol{v}}^H\boldsymbol{R}\bar{\boldsymbol{v}} = \textrm{tr}(\boldsymbol{R}\bar{\boldsymbol{v}}\bar{\boldsymbol{v}}^H)$. Define $\boldsymbol{V}=\bar{\boldsymbol{v}}\bar{\boldsymbol{v}}^H$ where $\boldsymbol{V} \succeq \boldsymbol{0}$ and rank$(\boldsymbol{V})=1$. Then, (P7) is equivalent to
\begin{alignat}{2}
\textrm{(P8):} \quad \max_{\boldsymbol{V}} \quad & 
\textrm{tr}(\boldsymbol{R}\boldsymbol{V})& \\
\mbox{s.t.}\quad
&\boldsymbol{V} \succeq \boldsymbol{0};\ 
[\boldsymbol{V}]_{n,n}=1, n=1,\dots,N+1,\\
&\textrm{rank}(\boldsymbol{V})=1,
\end{alignat}

Again, by dropping the rank-one constraint, (P8) is an SDP problem and can be solved by CVX. Through the randomization approach mentioned before, we can get a suboptimal solution of (P7), denoted by $\hat{\bar{\boldsymbol{v}}}$. Then the suboptimal solution of (P5) is given by $\hat{\boldsymbol{\Psi}}=\textrm{diag}\left(\textrm{exp}\left(j\textrm{arg}\left(\right.\right.\right.$ $\left.\left.\left.\left([\hat{\bar{\boldsymbol{v}}}]_{(1:N)}/[\hat{\bar{\boldsymbol{v}}}]_{N+1}\right)^{\star}\right)\right)\right)$.

\subsubsection{Overall Algorithm}

We summarize the proposed SDR based alternating optimization method in Algorithm 1. Specifically, the algorithm starts with certain feasible values of $\boldsymbol{x}^{(0)}$ and $\boldsymbol{\Psi}^{(0)}$. Next, given $\{\boldsymbol{x}^{(i)},\boldsymbol{\Psi}^{(i)}\}$ in the $i$-th iteration, we first update $\boldsymbol{x}$ to obtain solution $\{\boldsymbol{x}^{(i+1)},\boldsymbol{\Psi}^{(i)}\}$ as described in the first part of this subsection, and then on this basis, update $\boldsymbol{\Psi}$ to obtain solution $\{\boldsymbol{x}^{(i+1)},\boldsymbol{\Psi}^{(i+1)}\}$ as described in the second part of this subsection. Keep iterating until the termination condition is satisfied. 

The worst-case complexities of solving (P4) and (P8) using the SDR algorithm are $\mathcal{O}(M^{6.5})$ and $\mathcal{O}(N^{6.5})$, respectively. Thus the complexity of the SPM-SDR algorithm is $\mathcal{O}(I_{sdr} (M^{6.5} + N^{6.5}))$, where $I_{sdr}$ denotes the maximum iteration number. Clearly, the SDR based algorithm developed in this subsection is computationally involving. To reduce complexity, we propose the SCA based algorithm in next subsection.

\begin{algorithm}[h]
	\caption{SPM-SDR Algorithm}
	\label{alg::conjugateGradient}
	\begin{algorithmic}[1]
		\Require
		$\boldsymbol{H}_d, \boldsymbol{H}_r, \boldsymbol{G}, P$.
		\Ensure
		solution $\{\boldsymbol{x}^{op}$, $\boldsymbol{\Psi}^{op}\}$.
		\State Initialize $\boldsymbol{x}^{(0)}$ and $\boldsymbol{\Psi}^{(0)}$ to feasible values, initialize iteration number $i=0$ and threshold $\epsilon > 0$.
		\Repeat
		\State With $\boldsymbol{\Psi}^{(i)}$ fixed, the SDR technique is used to solve (P3) to obtain $\boldsymbol{x}^{(i+1)}$.
		\State With $\boldsymbol{x}^{(i+1)}$ fixed, the SDR technique is used to solve (P5) to obtain $\boldsymbol{\Psi}^{(i+1)}$.
		\State Update $i=i+1$.
		\Until{the fractional increase of the objective value is below the threshold $\epsilon$ or the maximum number of iterations is reached.}
	\end{algorithmic}
\end{algorithm}

\subsection{SPM-SCA Algorithm}

We first transform the form of the problem.
Let $\boldsymbol{A}_k=\textrm{diag}(\boldsymbol{h}_{r,k}^{\star})\boldsymbol{G}$, we have $\left|\boldsymbol{h}_{r,k}^{H} \boldsymbol{\Psi}\boldsymbol{G}\boldsymbol{x}+\boldsymbol{h}_{d,k}^{H}\boldsymbol{x} \right|^2 = \left|\boldsymbol{v}^H\boldsymbol{A}_k\boldsymbol{x}+\boldsymbol{h}_{d,k}^H\boldsymbol{x}\right|^2$. Then (P1) can be rewritten as
\begin{alignat}{2}
\textrm{(P9):} \quad \max_{\boldsymbol{v},\boldsymbol{x}} \quad & 
Q=\sum_{k=1}^{K}\left|\boldsymbol{v}^H\boldsymbol{A}_k\boldsymbol{x}+\boldsymbol{h}_{d,k}^H\boldsymbol{x}\right|^2 & \\
\mbox{s.t.}\quad
&\left|[\boldsymbol{v}]_n\right|=1, n=1,\dots,N,\\
&\left|[\boldsymbol{x}]_m\right|=\sqrt{\frac{P}{M}}, m=1,\dots,M.
\end{alignat}

In this subsection, we solve (P9) by leveraging SCA technique. SCA is an iterative procedure: at each step, the objective function of the original problem is replaced by an approximation around a feasible point, and then the resulting approximation problem is solved to obtain the approximation point for the next iteration.

\subsubsection{Optimization of $\boldsymbol{x}$ for a fixed $\boldsymbol{v}$}

Given a fixed $\boldsymbol{v}$ and a feasible point $\hat{\boldsymbol{x}}$, we obtain from convexity that
\begin{equation} 
\begin{split}
\left|\boldsymbol{v}^H\boldsymbol{A}_k\boldsymbol{x}+\boldsymbol{h}_{d,k}^H\boldsymbol{x}\right|^2 \geq 2\textrm{Re}\{\boldsymbol{x}^H\boldsymbol{B}_k(\boldsymbol{v})\hat{\boldsymbol{x}}\}-
\hat{\boldsymbol{x}}^H \boldsymbol{B}_k(\boldsymbol{v})\hat{\boldsymbol{x}},
\end{split}
\end{equation}
where $\boldsymbol{B}_k(\boldsymbol{v})=(\boldsymbol{v}^H\boldsymbol{A}_k+\boldsymbol{h}_{d,k}^H)^H(\boldsymbol{v}^H\boldsymbol{A}_k+\boldsymbol{h}_{d,k}^H)$. Let $\boldsymbol{B}(\boldsymbol{v})=\sum\limits_{k=1}^K\boldsymbol{B}_k(\boldsymbol{v})=\boldsymbol{H}^H\boldsymbol{H}$, we have
\begin{equation} 
\begin{split}
Q \geq 2\textrm{Re}\{\boldsymbol{x}^H\boldsymbol{B}(\boldsymbol{v})\hat{\boldsymbol{x}}\}-
\hat{\boldsymbol{x}}^H \boldsymbol{B}(\boldsymbol{v})\hat{\boldsymbol{x}}.
\end{split}
\end{equation}
where the equality holds at point $\boldsymbol{x}=\hat{\boldsymbol{x}}$. (33) gives an approximation of the objective function of (P9) when $\boldsymbol{v}$ is fixed. Next, we maximize this approximation function subject to the constraint (31). It is easy to see that the maximizer is given by
\begin{equation} 
\begin{split}
\boldsymbol{x}^{op}=\sqrt{\frac{P}{M}} \textrm{exp}\left(j\textrm{arg}\left(\boldsymbol{B}(\boldsymbol{v})\hat{\boldsymbol{x}}\right)\right).
\end{split}
\end{equation}

\subsubsection{Optimization of $\boldsymbol{v}$ for a fixed $\boldsymbol{x}$}
Similarly, for a given feasible point $\hat{\boldsymbol{v}}$, we have
\begin{equation} 
\begin{split}
\left|\boldsymbol{v}^H\boldsymbol{A}_k\boldsymbol{x}+\boldsymbol{h}_{d,k}^H\boldsymbol{x}\right|^2 \geq 2\textrm{Re}\{\boldsymbol{v}^H\boldsymbol{C}_k(\boldsymbol{x},\hat{\boldsymbol{v}})\}\\
-(\hat{\boldsymbol{v}}^H\boldsymbol{A}_k\boldsymbol{x}\boldsymbol{x}^H\boldsymbol{A}_k^H\hat{\boldsymbol{v}}-\boldsymbol{h}_{d,k}^H\boldsymbol{x}\boldsymbol{x}^H\boldsymbol{h}_{d,k}),
\end{split}
\end{equation}
where $\boldsymbol{C}_k(\boldsymbol{x},\hat{\boldsymbol{v}})=\boldsymbol{C}_k^1(\boldsymbol{x})\hat{\boldsymbol{v}}+\boldsymbol{C}_k^2(\boldsymbol{x})$, $\boldsymbol{C}_k^1(\boldsymbol{x})=\boldsymbol{A}_k\boldsymbol{x}\boldsymbol{x}^H\boldsymbol{A}_k^H$, and $\boldsymbol{C}_k^2(\boldsymbol{x})=\boldsymbol{A}_k\boldsymbol{x}\boldsymbol{x}^H\boldsymbol{h}_{d,k}$. 

Let $\boldsymbol{C}^1(\boldsymbol{x})=\sum\limits_{k=1}^K\boldsymbol{C}_k^1(\boldsymbol{x})$, $\boldsymbol{C}^2(\boldsymbol{x})=\sum\limits_{k=1}^K\boldsymbol{C}_k^2(\boldsymbol{x})$, and $\boldsymbol{C}(\boldsymbol{x},\hat{\boldsymbol{v}})=\boldsymbol{C}^1(\boldsymbol{x})\hat{\boldsymbol{v}}+\boldsymbol{C}^2(\boldsymbol{x})$. Then
\begin{equation} 
\begin{split}
Q \!\geq\! 2\textrm{Re}\{\!\boldsymbol{v}^H\boldsymbol{C}\!(\boldsymbol{x},\hat{\boldsymbol{v}})\!\}\!-\!
\sum_{k=1}^{K}\!(\hat{\boldsymbol{v}}^H\boldsymbol{A}_k\boldsymbol{x}\boldsymbol{x}^H\boldsymbol{A}_k^H\hat{\boldsymbol{v}}\!-\!
\boldsymbol{h}_{d,k}^H\boldsymbol{x}\boldsymbol{x}^H\boldsymbol{h}_{d,k}),
\end{split}
\end{equation}
where the equality holds at point $\boldsymbol{v}=\hat{\boldsymbol{v}}$. With the constraint (30), the point that maximizes this lower bound is given by 
\begin{equation} 
\begin{split}
\boldsymbol{v}^{op}=\textrm{exp}\left(j\textrm{arg}\left(\boldsymbol{C}^1(\boldsymbol{x})\hat{\boldsymbol{v}}+\boldsymbol{C}^2(\boldsymbol{x})\right)\right).
\end{split}
\end{equation}

\subsubsection{Overall Algorithm}

We summarize the proposed SCA based alternating optimization method in Algorithm 2. Specifically, the algorithm starts with certain feasible values of $\boldsymbol{x}^{(0)}$ and $\boldsymbol{v}^{(0)}$. Then, iteratively update $\boldsymbol{x}$ and $\boldsymbol{v}$ using the update formulas (34) and (37) until the termination condition is satisfied. In the end, we obtain $\boldsymbol{x}^{op}=\boldsymbol{x}^{(\bold i)}$ and $\boldsymbol{\Psi}^{op}=\textrm{diag}\left((\boldsymbol{v}^{(\bold i)})^{\star}\right)$, where $\bold i$ is the number of iterations when the algorithm terminates. This algorithm is convergent since the alternating optimization and SCA techniques ensure the objective value of (P1) monotonically non-decreasing in the iterative process.

It is not difficult to see that in each iteration step, the complexities of updating $\boldsymbol{x}$ and $\boldsymbol{v}$ are $\mathcal{O}(M^2)$ and $\mathcal{O}(N^2)$, respectively. Thus the complexity of the SPM-SCA algorithm is $\mathcal{O}(I_{sca}(M^2 + N^2))$, where $I_{sca}$ denotes the number of iterations required for the algorithm to converge. Recall that the SDR based algorithm proposed in the previous subsection requires a much higher complexity of $\mathcal{O}(I_{sdr} (M^{6.5} + N^{6.5}))$.

\begin{algorithm}[h]
	\caption{SPM-SCA Algorithm}
	\label{alg::conjugateGradient}
	\begin{algorithmic}[1]
		\Require
		$\boldsymbol{H}_d, \boldsymbol{H}_r, \boldsymbol{G}, P$.
		\Ensure
		solution $\{\boldsymbol{x}^{op}$, $\boldsymbol{\Psi}^{op}=\textrm{diag}\left((\boldsymbol{v}^{op})^{\star}\right)\}$.
		\State Initialize $\boldsymbol{x}^{(0)}$ and $\boldsymbol{v}^{(0)}$ to feasible values, initialize iteration number $i=0$ and threshold $\epsilon > 0$.
		\Repeat
		\State $\boldsymbol{x}^{(i+1)}=\sqrt{P/M} \textrm{exp}\left(j\textrm{arg}\left(\boldsymbol{B}(\boldsymbol{v}^{(i)})\boldsymbol{x}^{(i)}\right)\right)$.
		\State $\boldsymbol{v}^{(i+1)}=\textrm{exp}\left(j\textrm{arg}\left(\boldsymbol{C}^1(\boldsymbol{x}^{(i+1)})\boldsymbol{v}^{(i)}+\boldsymbol{C}^2(\boldsymbol{x}^{(i+1)})\right)\right)$.
		\State Update $i=i+1$.
		\Until{the fractional increase of the objective value is below the threshold $\epsilon$ or the maximum number of iterations is reached.}
	\end{algorithmic}
\end{algorithm}

\section{Sum Power Maximization with Minimum Received Power Constraints}

In this section, we propose a suboptimal solution to (P2) by dividing (P2) into two subproblems and solving the two subproblems iteratively by leveraging the SCA and ADMM techniques. The proposed algorithm is called SPMC-SCA-ADMM.

%based on alternating optimization to solve (P2), namely, SPMC-SCA-ADMM algorithm, which solve the two subproblems by leveraging the SCA and ADMM techniques simultaneously. 

%It is worth noting that the SDR method is difficult to apply to this problem since the SDR randomization method will not give us a feasible point \cite{huang2016consensus}.

\subsection{Optimization of $\boldsymbol{x}$ for a fixed $\boldsymbol{\Psi}$}

With $\boldsymbol{\Psi}$ fixed, (P2) can be written as follows:
\begin{alignat}{2}
\textrm{(P10):} \quad \max_{\boldsymbol{x}} \quad & 
\|\boldsymbol{H}\boldsymbol{x}\|^2 & \\
\mbox{s.t.}\quad
&\left|\boldsymbol{h}_k^H\boldsymbol{x}\right|^2 \geq \frac{p_k}{\eta},k=1,\dots,K,\\
&\left|[\boldsymbol{x}]_m\right|=\sqrt{\frac{P}{M}}, m=1,\dots,M.
\end{alignat}
Recall that $\boldsymbol{h}_k^H$ denote the $k$-th row of the channel matrix $\boldsymbol{H}$. For such a non-convex QCQP, the traditional method to tackle this problem is through the SDR method: first solve the corresponding SDP problem, then a approximate solution of the original problem is obtained by applying the SDR randomization method \cite{Luo2010Semidefinite, Wiesel2005Semidefinite}. However, for (P10), it is difficult to obtain a feasible point from the randomization method \cite{huang2016consensus}. More specifically, it is difficult to map a randomly selected rank-one candidate (from the SDR solution) to the feasible region determined jointly by (39) and (40). Alternatively, to tackle this problem, we first use the SCA framework to transform the problem form, and then use the ADMM algorithm to solve it.

To apply the SCA method, we need to find a suitable lower bound of $\|\boldsymbol{H}\boldsymbol{x}\|^2$. To do so, we expand $\|\boldsymbol{H}\boldsymbol{x}\|^2$ at a feasible point $\hat{\boldsymbol{x}}$ to obtain a linear lower bound given by
\begin{equation}
\|\boldsymbol{H}\boldsymbol{x}\|^2 \geq 2 \text{Re}\left\{\hat{\boldsymbol{x}}^H\boldsymbol{H}^H\boldsymbol{H}\boldsymbol{x}\right\}-\hat{\boldsymbol{x}}^H\boldsymbol{H}^H\boldsymbol{H}\hat{\boldsymbol{x}},
\end{equation}
where the equality holds at point $\boldsymbol{x}=\hat{\boldsymbol{x}}$. Next, we use the ADMM algorithm to maximize this lower bound under the constraints of (39) and (40). The corresponding optimization problem is as follows
\begin{alignat}{2}
\textrm{(P11):} \quad \max_{\boldsymbol{x}} \quad & \text{Re}\left\{\hat{\boldsymbol{x}}^H\boldsymbol{H}^H\boldsymbol{H}\boldsymbol{x}\right\}& \\
\mbox{s.t.}\quad
&\textrm{(39), (40)}.
\end{alignat}
(P11) can be written in the following form
\begin{alignat}{2}
\textrm{(P12):} \quad \min_{\boldsymbol{x},\left\{\boldsymbol{e}_k \right\}_{k=1}^{K}} &
\text{Re}\left\{-\hat{\boldsymbol{x}}^H\boldsymbol{H}^H\boldsymbol{H}\boldsymbol{x}\right\} \\
\mbox{s.t.}\quad
&\left|\boldsymbol{h}_k^H \boldsymbol{e}_k\right|^2 \geq \frac{p_k}{\eta}, k=1,\dots,K,\\
&\left|\left[\boldsymbol{x}\right]_m\right|=\sqrt{\frac{P}{M}}, m=1,\dots,M,\\
&\boldsymbol{e}_k=\boldsymbol{x}, k=1,\dots,K.
\end{alignat}

Define the feasible region of constraint (45) as $\mathcal{G}$, whose indicator function is given by 
\begin{equation}
\mathbb{I}_{\mathcal{G}}\left(\{\boldsymbol{e}_k\}_{k=1}^{K}\right)=\left\{\begin{array}{ll}
0, & \text { if } \{\boldsymbol{e}_k\}_{k=1}^{K} \in \mathcal{G}, \\
+\infty, & \text { otherwise }.
\end{array}\right.
\end{equation}
Similarly, define the feasible region of constraint (46) as $\mathcal{H}$, and its indicator function as
\begin{equation}
\mathbb{I}_{\mathcal{H}}(\boldsymbol{x})=\left\{\begin{array}{ll}
0, & \text { if } \boldsymbol{x} \in \mathcal{H}, \\
+\infty, & \text { otherwise }.
\end{array}\right.
\end{equation}
Thus, we obtain the equivalent ADMM form of (P12) as
\begin{alignat}{2}
\textrm{(P13):}  \min_{\boldsymbol{x},\left\{\boldsymbol{e}_k \right\}_{k=1}^{K}}  
&\text{Re}\left\{-\hat{\boldsymbol{x}}^H\boldsymbol{H}^H\boldsymbol{H}\boldsymbol{x}\right\} + \mathbb{I}_{\mathcal{G}}\left(\{\boldsymbol{e}_k\}_{k=1}^{K}\right) + \mathbb{I}_{\mathcal{H}}(\boldsymbol{x})\\
\mbox{s.t.}\quad 
&\textrm{(47)}.
\end{alignat}
Then, the augmented Lagrangian of (P13) can be formulated as
\begin{equation}
\begin{split}
\mathcal{L}_{\rho}\left(\boldsymbol{x},\left\{\boldsymbol{e}_k \right\}_{k=1}^{K}, \left\{\boldsymbol{u}_k \right\}_{k=1}^{K}\right)
=\text{Re}\left\{-\hat{\boldsymbol{x}}^H\boldsymbol{H}^H\boldsymbol{H}\boldsymbol{x}\right\} \\ +\mathbb{I}_{\mathcal{G}}\left(\{\boldsymbol{e}_k\}_{k=1}^{K}\right) + \mathbb{I}_{\mathcal{H}}(\boldsymbol{x}) 
+ \rho \sum_{k=1}^K \left\|\boldsymbol{e}_k-\boldsymbol{x}+\boldsymbol{u}_k\right\|^2,
\end{split}
\end{equation}
where $\rho \textgreater 0$ is the penalty parameter, $\{\boldsymbol{u}_k\}_{k=1}^K$ are the scaled dual variables. Applying the ADMM method, we update the global variable $\boldsymbol{x}$, the local variables $\left\{\boldsymbol{e}_k\right\}_{k=1}^{K}$ and the scaled dual variables $\left\{\boldsymbol{u}_k \right\}_{k=1}^{K}$ alternatively. It is worth mentioning that the element variables of each of the above three sets can be updated synchronously, while the variables of different sets need to be updated sequentially, which leaves room for parallel computation. 

In the $i$-th iteration, given $\boldsymbol{x}^{(i)},\{\boldsymbol{e}_k^{(i)}\}_{k=1}^{K}$ and $\{\boldsymbol{u}_k^{(i)}\}_{k=1}^{K}$, we update each of the above variables as follows.

\paragraph{Update $\boldsymbol{x}$}
The subproblem for updating the global variables $\boldsymbol{x}$ is expressed as
\begin{equation}
\begin{split}
\boldsymbol{x}^{(i+1)}=&\mathop{\arg\min}_{\boldsymbol{x}} \ \mathcal{L}_{\rho}\left(\boldsymbol{x},\left\{\boldsymbol{e}_k^{(i)}\right\}_{k=1}^{K},\left\{\boldsymbol{u}_k^{(i)}\right\}_{k=1}^{K}\right),\\
=&\mathop{\arg\min}_{\boldsymbol{x}} \
\mathbb{I}_{\mathcal{H}}(\boldsymbol{x})+ K\rho\boldsymbol{x}^H\boldsymbol{x}\\
&-\text{Re}\left\{\left(\hat{\boldsymbol{x}}^H\boldsymbol{H}^H\boldsymbol{H} +2\rho \sum_{k=1}^K\left(\boldsymbol{u}^{(i)}_k+\boldsymbol{e}^{(i)}_k\right)^H\right)\boldsymbol{x}\right\}.
\end{split}
\end{equation}
Since $\boldsymbol{x}^H\boldsymbol{x}$ is a constant under the constraint (46), it is easy to see that the optimum is given by
\begin{equation}
\begin{split}
\boldsymbol{x}^{(i+1)}
&\!=\!\sqrt{\frac{P}{M}} \textrm{exp}\left(j\textrm{arg}\left(\boldsymbol{H}^H\boldsymbol{H}\hat{\boldsymbol{x}}\!+\!2\rho\sum_{k=1}^{K}\left(\boldsymbol{u}^{(i)}_k\!+\!\boldsymbol{e}^{(i)}_k\right)\right)\right).
\end{split}
\end{equation}

\paragraph{Update $\left\{\boldsymbol{e}_k\right\}_{k=1}^{K}$}
The subproblem for updating the local variables $\left\{\boldsymbol{e}_k\right\}_{k=1}^{K}$ is expressed as
\begin{equation}
\begin{split}
\left\{\boldsymbol{e}^{(i+1)}_k\right\}_{k=1}^{K}&=\mathop{\arg\min}_{\left\{\boldsymbol{e}_k\right\}_{k=1}^{K}} \ \mathcal{L}_{\rho}\left(\boldsymbol{x}^{(i+1)},\left\{\boldsymbol{e}_k\right\}_{k=1}^{K},\left\{\boldsymbol{u}_k^{(i)}\right\}_{k=1}^{K}\right),
\end{split}
\end{equation}
which can be rewritten as
\begin{alignat}{2}
\textrm{(P14):} \quad \min_{\left\{\boldsymbol{e}_k \right\}_{k=1}^{K}}\quad &
\sum_{k=1}^K \left\|\boldsymbol{e}_k-\boldsymbol{x}^{(i+1)}+\boldsymbol{u}^{(i)}_k\right\|^2 \\
\mbox{s.t.}\quad
&\left|\boldsymbol{h}_k^H \boldsymbol{e}_k\right|^2 \geq \frac{p_k}{\eta}, k=1,\dots,K.
\end{alignat}
See Appendix A for the derivation of the optimal solution to this problem. Here we directly give the optimal solution as follows: For $k=1,\dots,K$,
\begin{equation}
\begin{cases}
\boldsymbol{e}_k^{(i+1)}=\boldsymbol{x}^{(i+1)}-\boldsymbol{u}_k^{(i)};& \textrm{if} \left|\boldsymbol{h}_k^H\left(\boldsymbol{x}^{(i+1)}-\boldsymbol{u}_k^{(i)}\right)\right|^2 \geq \frac{p_k}{\eta},\\
\boldsymbol{\Gamma_k}\left(\boldsymbol{x}^{(i+1)},\boldsymbol{u}_k^{(i)}\right);& \textrm{otherwise},
\end{cases}
\end{equation}
where
\begin{equation}
\begin{split}
\boldsymbol{\Gamma_k}&\left(\boldsymbol{x}^{(i+1)},\boldsymbol{u}_k^{(i)}\right)=\boldsymbol{x}^{(i+1)}-\boldsymbol{u}_k^{(i)}\\
+&\frac{\sqrt{\frac{p_k}{\eta}}-\left|\boldsymbol{h}_k^H \left(\boldsymbol{x}^{(i+1)}-\boldsymbol{u}_k^{(i)}\right)\right|}{\left\|\boldsymbol{h}_k\right\|^2  \left|\boldsymbol{h}_k^H \left(\boldsymbol{x}^{(i+1)}-\boldsymbol{u}_k^{(i)}\right)\right|}\boldsymbol{h}_k \boldsymbol{h}_k^H  \left(\boldsymbol{x}^{(i+1)}-\boldsymbol{u}_k^{(i)}\right).
\end{split}
\end{equation}

\paragraph{Update  $\left\{\boldsymbol{u}_k\right\}_{k=1}^{K}$}
According to the ADMM method, the update formulas for scaled dual variables $\left\{\boldsymbol{u}_k\right\}_{k=1}^{K}$ are shown below:
\begin{equation}
\boldsymbol{u}_k^{(i+1)}=\boldsymbol{u}_k^{(i)}+\boldsymbol{e}_k^{(i+1)}-\boldsymbol{x}^{(i+1)}, k=1,\dots,K.
\end{equation}

Iterating using (54), (58) and (60), we finally obtain a suboptimal solution of (P11).

\subsection{Optimization of $\boldsymbol{\Psi}$ for a fixed $\boldsymbol{x}$}

With $\boldsymbol{x}$ fixed, we can rewrite (P2) as
\begin{alignat}{2}
\textrm{(P15):} \quad \max_{\boldsymbol{v}} \quad & 
\sum_{k=1}^{K} \left|\boldsymbol{v}^H\boldsymbol{c}_k+a_k\right|^2& \\
\mbox{s.t.}\quad
&\left|[\boldsymbol{v}]_n\right|=1, n=1,\dots,N,\\
&\left|\boldsymbol{v}^H\boldsymbol{c}_k+a_k\right|^2 \geq \frac{p_k}{\eta},k=1,\dots,K,
\end{alignat}
recall that $\boldsymbol{v}=[e^{j\theta_{1}},\ldots, e^{j\theta_{N}}]^H$, $\boldsymbol{c}_k=\textrm{diag}(\boldsymbol{h}_{r,k}^{\star})\boldsymbol{G}\boldsymbol{x}$ and $a_k=\boldsymbol{h}_{d,k}^H\boldsymbol{x}$. By introducing an auxiliary variable $t$, (P15) can be equivalently written as
\begin{alignat}{2}
\textrm{(P16):} \quad \max_{\boldsymbol{b}} \quad & 
\boldsymbol{b}^H\boldsymbol{L}\boldsymbol{b}& \\
\mbox{s.t.}\quad
&\left|\boldsymbol{l}_k^H\boldsymbol{b}\right|^2 \geq \frac{p_k}{\eta},k=1,\dots,K,\\
&\left|\left[\boldsymbol{b}\right]_n\right|=1, n=1,\dots,N+1,
\end{alignat}
where 
\begin{equation} 
\boldsymbol{l}_k=
\begin{bmatrix}
\boldsymbol{c}_k\\
a_k
\end{bmatrix},
\quad
\boldsymbol{L}=\sum_{k=1}^K \boldsymbol{l}_k \boldsymbol{l}_k^H
\ \textrm{and}\
\boldsymbol{b}=
\begin{bmatrix}
t\boldsymbol{v}\\
t
\end{bmatrix}.
\end{equation}

Again, we expand $\boldsymbol{b}^H\boldsymbol{L}\boldsymbol{b}$ at feasible point $\hat{\boldsymbol{b}}$ to obtain a linear lower bound of it as follow
\begin{equation}
\boldsymbol{b}^H\boldsymbol{L}\boldsymbol{b} \geq 2 \text{Re}\left\{\hat{\boldsymbol{b}}^H\boldsymbol{L}\boldsymbol{b}\right\}-\hat{\boldsymbol{b}}^H\boldsymbol{L}\hat{\boldsymbol{b}},
\end{equation}
where the equality holds at point $\boldsymbol{b}=\hat{\boldsymbol{b}}$. Next, we use the ADMM algorithm to maximize this lower bound under the constraints of (65) and (66). The corresponding optimization problem is as follows
\begin{alignat}{2}
\textrm{(P17):} \quad \max_{\boldsymbol{b}} \quad & 
\text{Re}\left\{\hat{\boldsymbol{b}}^H\boldsymbol{L}\boldsymbol{b}\right\}& \\
\mbox{s.t.}\quad
&\textrm{(65), (66)}.
\end{alignat}
Since (P17) has the same form as (P11), it can be solved by using the method proposed in the previous part. In the following, we first explain the meaning of the symbols, and then directly give the corresponding update formulas.

Let $\bar{\rho} \textgreater 0$ denote the penalty parameter, $\boldsymbol{b}$ denote the corresponding global variables, $\{\bar{\boldsymbol{e}}_k\}_{k=1}^{K}$ denote the corresponding local variables, and $\{\bar{\boldsymbol{u}}_k\}_{k=1}^{K}$ denote the corresponding scaled dual variables. We omit the derivation steps and directly give their update formulas as follows.

\paragraph{Update $\boldsymbol{b}$}
\begin{equation}
\begin{split}
\boldsymbol{b}^{(i+1)}
&=\textrm{exp}\left(j\textrm{arg}\left(\boldsymbol{L}\hat{\boldsymbol{b}}+2\bar{\rho}\sum_{k=1}^{K}\left(\bar{\boldsymbol{u}}^{(i)}_k+\bar{\boldsymbol{e}}^{(i)}_k\right)\right)\right).
\end{split}
\end{equation}

\paragraph{Update $\{\bar{\boldsymbol{e}}_k\}_{k=1}^{K}$}
For $k=1,\dots,K$, let 
\begin{equation}
\begin{split}
\boldsymbol{\Gamma_k}&\left(\boldsymbol{b}^{(i+1)},\bar{\boldsymbol{u}}_k^{(i)}\right)=\boldsymbol{b}^{(i+1)}-\bar{\boldsymbol{u}}_k^{(i)}\\
+&\frac{\sqrt{\frac{p_k}{\eta}}-\left|\boldsymbol{l}_k^H \left(\boldsymbol{b}^{(i+1)}-\bar{\boldsymbol{u}}_k^{(i)}\right)\right|}{\left\|\boldsymbol{l}_k\right\|^2  \left|\boldsymbol{l}_k^H \left(\boldsymbol{b}^{(i+1)}-\bar{\boldsymbol{u}}_k^{(i)}\right)\right|}\boldsymbol{l}_k \boldsymbol{l}_k^H \left(\boldsymbol{b}^{(i+1)}-\bar{\boldsymbol{u}}_k^{(i)}\right),
\end{split}
\end{equation}
we have
\begin{equation}
\bar{\boldsymbol{e}}_k^{(i+1)}=
\begin{cases}
\boldsymbol{b}^{(i+1)}-\bar{\boldsymbol{u}}_k^{(i)};& \textrm{if} \left|\boldsymbol{l}_k^H\left(\boldsymbol{b}^{(i+1)}-\bar{\boldsymbol{u}}_k^{(i)}\right)\right|^2 \geq \frac{p_k}{\eta},\\
\boldsymbol{\Gamma_k}\left(\boldsymbol{b}^{(i+1)},\bar{\boldsymbol{u}}_k^{(i)}\right) ;& \textrm{otherwise}.
\end{cases}
\end{equation}
\paragraph{Update $\{\bar{\boldsymbol{u}}_k\}_{k=1}^{K}$}
\begin{equation}
\bar{\boldsymbol{u}}_k^{(i+1)}=\bar{\boldsymbol{u}}_k^{(i)}+\bar{\boldsymbol{e}}_k^{(i+1)}-\boldsymbol{b}^{(i+1)}, k=1,\dots,K.
\end{equation}

The above formulas iterate and finally yield a suboptimal solution of (P17), denoted by $\tilde{\boldsymbol{b}}$.  Then, since $\tilde{\boldsymbol{b}}=[(t\boldsymbol{v})^T, t]^T$, the corresponding $\tilde{\boldsymbol{v}}=\left[\tilde{\boldsymbol{b}}\right]_{(1:N)} \bigg{ /}\left[\tilde{\boldsymbol{b}}\right]_{N+1}$, $\tilde{\boldsymbol{\Psi}}=\textrm{diag}\left(\left(\tilde{\boldsymbol{v}}\right)^{\star}\right)$.

\subsection{Overall Algorithm}

We summarize the proposed SPMC-SCA-ADMM algorithm in Algorithm 3. The convergence of this algorithm can be readily shown since the objective value of (P2) is monotonically non-decreasing in the iterative process. 

The complexity of Algorithm 3 is mainly due to updating $\boldsymbol{x}$ and $\boldsymbol{b}$ by using (54) and (71). The complexity of updating $\boldsymbol{x}$ is $\mathcal{O}(M^2 + KM)$ and the complexity of updating b is $\mathcal{O}(N^2 + KN)$. Therefore, the overall complexity of Algorithm 3 is $\mathcal{O}(I_{sca}(I_{x}(M^2 + KM) + I_{\Psi}(N^2 + KN)))$, where $I_{x}$ and $I_{\Psi}$ denote the numbers of ADMM iterations required to solve (P11) and (P17) respectively, and $I_{sca}$ denotes the maximum outer iteration number.

\begin{algorithm}[h]
	\caption{SPMC-SCA-ADMM Algorithm}
	\label{alg::conjugateGradient}
	\begin{algorithmic}[1]
		\Require
		$\boldsymbol{H}_d, \boldsymbol{H}_r, \boldsymbol{G}, P$.
		\Ensure
		solution $\{\boldsymbol{x}^{op}$, $\boldsymbol{\Psi}^{op}\}$.
		\State Initialize $\hat{\boldsymbol{x}}^{(0)}$ and $\hat{\boldsymbol{v}}^{(0)}$ to feasible values, initialize $\hat{\boldsymbol{\Psi}}^{(0)}=\textrm{diag}\left(\hat{\boldsymbol{v}}^{(0)\star}\right)$, iteration number $i=0$, $\rho > 0$, $\bar{\rho} > 0$ and threshold $\epsilon > 0$.
		\Repeat
		\State Initialize $\boldsymbol{H}=\boldsymbol{H}_{r} \hat{\boldsymbol{\Psi}}^{(i)} \boldsymbol{G} + \boldsymbol{H}_{d}$ and iteration number $t_1=0$, initialize $\boldsymbol{x}^{(0)},\{\boldsymbol{e}_k^{(0)}\}_{k=1}^{K}$ and $\{\boldsymbol{u}_k^{(0)}\}_{k=1}^{K}$ to feasible values.\footnotemark[1]
		\Repeat
		\State Update $\boldsymbol{x}^{(t_1+1)}$ with (54).
		\State Update $\{\boldsymbol{e}_k^{(t_1+1)}\}_{k=1}^{K}$ with (58).
		\State Update $\{\boldsymbol{u}_k^{(t_1+1)}\}_{k=1}^{K}$ with (60).
		\State Update $t_1=t_1+1$.
		\Until{Convergence or the maximum number of iterations is reached.}
		\State $\hat{\boldsymbol{x}}^{(i+1)}=\boldsymbol{x}^{(t_1)}$.
		\State Initialize
		$\boldsymbol{c}_k=\textrm{diag}(\boldsymbol{h}_{r,k}^{\star})\boldsymbol{G}\hat{\boldsymbol{x}}^{(i+1)}$, $a_k=\boldsymbol{h}_{d,k}^H\hat{\boldsymbol{x}}^{(i+1)}$, $\boldsymbol{l}_k=\left[\boldsymbol{c}_k^T, a_k\right]^T$, $\boldsymbol{L}=\sum_{k=1}^K \boldsymbol{l}_k \boldsymbol{l}_k^H$ and iteration number $t_2=0$, initialize $\boldsymbol{b}^{(0)},\{\bar{\boldsymbol{e}}_k^{(0)}\}_{k=1}^{K}$ and $\{\bar{\boldsymbol{u}}_k^{(0)}\}_{k=1}^{K}$ to feasible values.\footnotemark[1]
		\Repeat
		\State Update $\boldsymbol{b}^{(t_2+1)}$ with (71).
		\State Update $\{\bar{\boldsymbol{e}}_k^{(t_2+1)}\}_{k=1}^{K}$ with (73).
		\State Update $\{\bar{\boldsymbol{u}}_k^{(t_2+1)}\}_{k=1}^{K}$ with (74).
		\State Update $t_2=t_2+1$.
		\Until{Convergence or the maximum number of iterations is reached.}
		\State 
		$\hat{\boldsymbol{v}}^{(i+1)}=\left[\boldsymbol{b}^{(t_2)}\right]_{(1:N)}\bigg{/}\left[\boldsymbol{b}^{(t_2)}\right]_{N+1}$ and $\hat{\boldsymbol{\Psi}}^{(i+1)}=\textrm{diag}\left(\hat{\boldsymbol{v}}^{(i+1)\star}\right)$.
		\State Update $i=i+1$.
		\Until{the fractional increase of the objective value is below the threshold $\epsilon$ or the maximum number of iterations is reached.}
	\end{algorithmic}
\end{algorithm}

\footnotetext[1]{\label{initialization_method}More details of the initialization method can be found in \cite{huang2016consensus}.}

\section{Performance Analysis for Sum Power Maximization}

\subsection{Multiuser Power Scaling Law with Infinitely Large RIS}

In this subsection, we characterize the multiuser power scaling law of the total received power in terms of the number of RIS elements, $N$, as $N \to \infty$. Note that the single-user power scaling law has been studied in \cite{wu2019intelligent}. In the single-user case, the expression of the optimal solution is known, so the power scaling law is not difficult to obtain. However, in the multiuser case, since the optimal solution is not known, it is difficult to directly analyze the power scaling law. Below, we show that the single-user power scaling law is still valid in the multiuser case.

 To highlight the role of RIS, we consider a wireless power transfer system without the BS-user direct channel. For simplicity, assume that the amplitudes of reflection coefficients of all the RIS elements are the same, and that $\boldsymbol{G}$ is a MIMO channel with only the line-of-sight path between the antennas. Without loss of generality, set
\begin{equation} 
\begin{split}
\boldsymbol{G}=\varrho_g e^{-\frac{j2\pi d}{\lambda_c}}\boldsymbol{e}_r \boldsymbol{e}_t^H,
\end{split}
\end{equation}
where
\begin{equation} \nonumber
\boldsymbol{e}_r=
\begin{bmatrix}
1\\
e^{-j2\pi\Delta_r}\\
e^{-j2\pi(2\Delta_r)}\\
\vdots\\
e^{-j2\pi((N-1)\Delta_r)}
\end{bmatrix}, 
\quad
\boldsymbol{e}_t=
\begin{bmatrix}
1\\
e^{-j2\pi\Delta_t}\\
e^{-j2\pi(2\Delta_t)}\\
\vdots\\
e^{-j2\pi((M-1)\Delta_t)}.
\end{bmatrix},
\end{equation}
$\lambda_c$ is the carrier wavelength, $d$ denotes the distance between BS and RIS, $\Delta_r$ and $\Delta_t$ are constants calculated using Angle of Arrival (AoA), Angle of Departure (AoD), carrier wavelength and antenna separation. In this system setting, the total received power is given by $Q=\|\boldsymbol{H}_{r} \boldsymbol{\Psi} \boldsymbol{G}\boldsymbol{x}\|^2$ and the maximum total received power $Q^{op}=\mathop{\max}_{\boldsymbol{x}, \boldsymbol{\Psi}}\|\boldsymbol{H}_{r} \boldsymbol{\Psi} \boldsymbol{G}\boldsymbol{x}\|^2$. The multiuser power scaling law is described in the following proposition.

\begin{mypro}
	Assume the elements of $\boldsymbol{H}_r$ are independently drawn from a common distribution $\mathcal{CN}\left(0,\varrho_h^2\right)$ and $\boldsymbol{G}=\varrho_g e^{-\frac{j2\pi d}{\lambda_c}}\boldsymbol{e}_r \boldsymbol{e}_t^H$. Then 
\begin{equation}
\begin{aligned} 
\frac{\pi}{4}\varrho_g^2 \varrho_h^2  P M \leq \lim_{N \to +\infty} \frac{\mathbb{E}\left[Q^{op}\right]}{N^2} \leq \varrho_g^2 \varrho_h^2  P M.
\end{aligned}
\end{equation}
\end{mypro}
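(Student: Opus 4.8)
The plan is to first exploit the rank-one structure of $\boldsymbol{G}$ to collapse $Q^{op}$ into a single unit-modulus maximization, and then sandwich that quantity. Since $\boldsymbol{G}=\varrho_g e^{-j2\pi d/\lambda_c}\boldsymbol{e}_r\boldsymbol{e}_t^H$, we have $\boldsymbol{G}\boldsymbol{x}=\varrho_g e^{-j2\pi d/\lambda_c}(\boldsymbol{e}_t^H\boldsymbol{x})\boldsymbol{e}_r$, hence $\boldsymbol{H}_r\boldsymbol{\Psi}\boldsymbol{G}\boldsymbol{x}=\varrho_g e^{-j2\pi d/\lambda_c}(\boldsymbol{e}_t^H\boldsymbol{x})\,\boldsymbol{H}_r(\boldsymbol{\Psi}\boldsymbol{e}_r)$. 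Because $|[\boldsymbol{x}]_m|=\sqrt{P/M}$ and $\boldsymbol{e}_t$ has unit-modulus entries, $|\boldsymbol{e}_t^H\boldsymbol{x}|\le\sqrt{PM}$ with equality attained by phase alignment; independently, writing $\boldsymbol{w}:=\boldsymbol{\Psi}\boldsymbol{e}_r$, the entries $[\boldsymbol{w}]_n=[\boldsymbol{v}^\star]_n[\boldsymbol{e}_r]_n$ have unit modulus, and as $\boldsymbol{v}$ ranges over all unit-modulus vectors $\boldsymbol{w}$ ranges over the whole unit-modulus torus. The objective factors as a product of a term depending only on $\boldsymbol{x}$ and a nonnegative term depending only on $\boldsymbol{\Psi}$, so the two maximizations decouple and
\[
Q^{op}=\varrho_g^2 PM\max_{|[\boldsymbol{w}]_n|=1,\,\forall n}\|\boldsymbol{H}_r\boldsymbol{w}\|^2.
\]
It then suffices to prove $\frac{\pi}{4}\varrho_h^2\le\lim_{N\to\infty}\mathbb{E}[\max_{\boldsymbol{w}}\|\boldsymbol{H}_r\boldsymbol{w}\|^2]/N^2\le\varrho_h^2$.

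For the lower bound I would restrict the maximization to the single feasible vector that phase-matches the first user, $[\boldsymbol{w}]_n=e^{j\,\textrm{arg}([\boldsymbol{H}_r]_{1,n})}$, which gives $\max_{\boldsymbol{w}}\|\boldsymbol{H}_r\boldsymbol{w}\|^2\ge|\boldsymbol{h}_{r,1}^H\boldsymbol{w}|^2=(\sum_{n=1}^N|[\boldsymbol{H}_r]_{1,n}|)^2$. Since the entries are $\mathcal{CN}(0,\varrho_h^2)$, each magnitude is Rayleigh with $\mathbb{E}[|[\boldsymbol{H}_r]_{1,n}|]=\frac{\sqrt{\pi}}{2}\varrho_h$ and $\mathbb{E}[|[\boldsymbol{H}_r]_{1,n}|^2]=\varrho_h^2$; expanding the square and using independence across $n$ yields $\mathbb{E}[(\sum_n|[\boldsymbol{H}_r]_{1,n}|)^2]=N\varrho_h^2+N(N-1)\frac{\pi}{4}\varrho_h^2$, whose ratio to $N^2$ tends to $\frac{\pi}{4}\varrho_h^2$. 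This establishes the left inequality and recovers the single-user scaling law of \cite{wu2019intelligent} as a lower bound.

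For the upper bound I would relax the unit-modulus constraint to the energy constraint $\|\boldsymbol{w}\|^2=N$, enlarging the feasible set, so that $\max_{|[\boldsymbol{w}]_n|=1}\|\boldsymbol{H}_r\boldsymbol{w}\|^2\le N\lambda_{\max}(\boldsymbol{H}_r\boldsymbol{H}_r^H)$. The key fact is that, with $K$ fixed, the $K\times K$ Gram matrix concentrates: by the strong law of large numbers the diagonal entries $\frac{1}{N}\sum_n|[\boldsymbol{H}_r]_{k,n}|^2\to\varrho_h^2$ while the off-diagonal entries $\frac{1}{N}\sum_n[\boldsymbol{H}_r]_{k,n}[\boldsymbol{H}_r]_{k',n}^\star\to0$ almost surely, so $\frac{1}{N}\boldsymbol{H}_r\boldsymbol{H}_r^H\to\varrho_h^2\boldsymbol{I}_K$ and, by continuity of the eigenvalues of a fixed-size matrix, $\lambda_{\max}(\boldsymbol{H}_r\boldsymbol{H}_r^H)/N\to\varrho_h^2$ almost surely. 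The remaining — and I expect most delicate — step is to upgrade this almost-sure limit to convergence of the expectation $\mathbb{E}[\lambda_{\max}(\boldsymbol{H}_r\boldsymbol{H}_r^H)]/N\to\varrho_h^2$. The crude bound $\lambda_{\max}\le\textrm{tr}(\boldsymbol{H}_r\boldsymbol{H}_r^H)$ only yields the loose constant $K\varrho_h^2$, so instead I would observe that $\lambda_{\max}/N$ is dominated by $\frac{1}{N}\textrm{tr}(\boldsymbol{H}_r\boldsymbol{H}_r^H)=\frac{1}{N}\sum_{k,n}|[\boldsymbol{H}_r]_{k,n}|^2$, which converges to $K\varrho_h^2$ in $L^1$ and is therefore uniformly integrable; domination by a uniformly integrable family makes $\{\lambda_{\max}/N\}$ uniformly integrable, so the almost-sure convergence lifts to $L^1$ convergence and the expectation converges to $\varrho_h^2$. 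Combining the two bounds and multiplying by $\varrho_g^2 PM$ gives the claim. The main obstacle is precisely this interchange of limit and expectation, which cannot be pushed through the trace bound and requires the Gram-matrix concentration together with a uniform-integrability (equivalently, generalized dominated-convergence) argument.
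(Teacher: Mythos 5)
Your proof is correct, and while it opens with the same reduction as the paper (using the rank-one structure of $\boldsymbol{G}$ to decouple the maximization over $\boldsymbol{x}$, which gives the factor $PM$, from the unit-modulus maximization of $\|\boldsymbol{H}_r\boldsymbol{w}\|^2$, and then relaxing the unit-modulus constraint to the energy ball to get the upper bound $N\lambda_{\max}(\boldsymbol{H}_r^H\boldsymbol{H}_r)$), both of your bounding arguments are genuinely different from the paper's. For the lower bound the paper substitutes the phase-matched principal eigenvector $\exp(j\,\textrm{arg}(\boldsymbol{v}_1))$ and invokes the known facts (from the constant-envelope beamforming literature) that $\boldsymbol{v}_1$ is independent of $\lambda_1$ with $\mathbb{E}[\|\boldsymbol{v}_1\|_1^2]=1+(N-1)\pi/4$; you instead phase-match to a single user's channel row, which reduces the computation to the first two moments of a Rayleigh variable, $\mathbb{E}[(\sum_n|h_{1n}|)^2]=N\varrho_h^2+N(N-1)\tfrac{\pi}{4}\varrho_h^2$ --- more elementary, self-contained, and yielding the same constant $\tfrac{\pi}{4}$ (minor slip: the aligning vector should be $[\boldsymbol{w}]_n=e^{-j\,\textrm{arg}([\boldsymbol{H}_r]_{1,n})}$, not $e^{+j\,\textrm{arg}(\cdot)}$). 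For the upper bound the paper simply cites a result of Edelman for $\lim_N\mathbb{E}[\lambda_{1}]/N=\varrho_h^2$ with $K$ fixed, whereas you prove it: the $K\times K$ Gram matrix $\tfrac{1}{N}\boldsymbol{H}_r\boldsymbol{H}_r^H\to\varrho_h^2\boldsymbol{I}_K$ almost surely by the SLLN, eigenvalues of a fixed-size matrix are continuous, and the passage from almost-sure to $L^1$ convergence is justified by dominating $\lambda_{\max}/N$ by the normalized trace, which converges in $L^1$ and is therefore uniformly integrable. The interchange-of-limit issue you correctly identify as the delicate step is exactly what the paper outsources to the citation; your uniform-integrability argument is a valid and more transparent substitute. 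In short, your route buys self-containedness at the cost of slightly more probabilistic bookkeeping, while the paper's route is shorter by leaning on two external results.
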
 

\begin{proof}[Proof]
	See Appendix B.
\end{proof}

The multiuser power scaling law of the total received power shows that if the RIS is large enough, we can scale down the transmission power of the BS by a factor of $1/N^2$ without compromising the total received power of all the $K$ users.

\subsection{Quantization Loss due to Discrete Reflection Phases}

Since tunable components with finite phase-shift levels are cheaper, in order to ensure the scalability of the RIS, it is more practical to consider that the phases of the RIS elements can only take discrete values \cite{wu2019beamforming, D12020Hybrid, You2020Channel}. A simple way to obtain the discrete phases is to directly quantify the optimal continuous phases. That is, to obtain the phase-shift matrix with discrete phase values, denoted by $\boldsymbol{\Psi}_q$, we directly quantify the phase of each diagonal element of $\boldsymbol{\Psi}^{op}$ uniformly, where $\boldsymbol{\Psi}^{op}$ refers to the optimal phase-shift matrix that maximizes the total received power.

Specifically, $[\boldsymbol{\Psi}_q]_{n, n}$ only takes $2^b$ discrete values, i.e.,
\begin{equation} 
[\boldsymbol{\Psi}_q]_{n, n} \in \mathcal{F} \triangleq \left\{\textrm{exp}\left(j\frac{(2l+1)\pi}{2^b}\right)\right\}_{l=0}^{2^{b}-1},
\end{equation}
where $b$ denotes the phase resolution in number of bits. We set
\begin{equation} 
[\boldsymbol{\Psi}_q]_{n, n}=e^{j\frac{(2l+1)\pi}{2^b}},
\end{equation}
if
\begin{equation} 
\textrm{arg}\left([\boldsymbol{\Psi}^{op}]_{n, n}\right) \in \left[\frac{2\pi l}{2^b},\frac{2\pi(l+1)}{2^b}\right),\forall n,l.
\end{equation}
The above quantization operation is described by projection $\mathrm{Proj}_{\mathcal{F}}(\cdot)$, that is, $\boldsymbol{\Psi}_q=\mathrm{Proj}_{\mathcal{F}}(\boldsymbol{\Psi}^{op})$. When the input to projection $\mathrm{Proj}_{\mathcal{F}}(\cdot)$ is a vector, it does a separate uniform quantization of each element of the input vector.

There is no doubt that taking discrete phase values will bring some performance loss compared with the case where the phase can be taken continuously. To draw useful insights, we analyze the performance loss in a simple wireless power transfer system, where the BS-user direct channel is ignored and $\boldsymbol{G}$ is a MIMO channel with only the line-of-sight path between the antennas. The analysis results are shown in Proposition 2. Note that similar results have appeared in \cite[Proposition 1]{wu2019beamforming}. Compared with the RIS-aided single-user model considered in \cite{wu2019beamforming}, we consider a more general multiuser model. In the single-user case, since the expression of the optimal solution is known, the analysis is relatively easy. However, it is much more difficult to analyze the performance loss caused by quantification in the multiuser case, since the optimal solution is not known. %In the proof of Proposition 2, we obtained the result in multiuser case by using some conclusions of high-dimensional probability. The analysis result shows that the conclusion obtained in the single-user case is still valid in the multiuser case. 

Let $Q_{q}$ and $Q^{op}$ represent the received power of the user after employing $\boldsymbol{v}_q$ and $\boldsymbol{v}^{op}$, respectively. Then, define the average power ratio given by
\begin{equation}
\Delta \triangleq \frac{\mathbb{E}\left[Q_{q}\right]}{\mathbb{E}\left[ Q^{op} \right]}.
\end{equation}
Note that $1/\Delta$ is the performance loss. We are now ready to present our result.
\begin{mypro}
	Assume  that the elements of $\sqrt{N} \boldsymbol{H}_r$ are independently drawn from a common distribution $\mathcal{CN}\left(0,1\right)$, and that $\boldsymbol{G}=\varrho_g e^{-\frac{j2\pi d}{\lambda}}\boldsymbol{e}_r \boldsymbol{e}_t^H$. Then as $\textrm{N} \to \infty$, we have $\Delta \geq \frac{2^{2b}}{\pi^2}sin^2\left(\frac{\pi}{2^b}\right) $.
\end{mypro}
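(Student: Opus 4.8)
The plan is to exploit the rank-one structure of $\boldsymbol{G}$ to collapse the system into a single effective beamforming vector, and then to isolate the quantization effect through a second-moment computation that becomes exact as $N\to\infty$. First I would optimize over $\boldsymbol{x}$. Since $\boldsymbol{G}=\varrho_g e^{-j2\pi d/\lambda}\boldsymbol{e}_r\boldsymbol{e}_t^H$, we have $\boldsymbol{G}\boldsymbol{x}=\varrho_g e^{-j2\pi d/\lambda}(\boldsymbol{e}_t^H\boldsymbol{x})\boldsymbol{e}_r$, and under $|[\boldsymbol{x}]_m|=\sqrt{P/M}$ the scalar factor $|\boldsymbol{e}_t^H\boldsymbol{x}|^2$ is maximized to the value $PM$ by aligning $\boldsymbol{x}$ with $\boldsymbol{e}_t$, irrespective of $\boldsymbol{\Psi}$. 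Hence the same $\boldsymbol{x}$ is optimal for the continuous and the quantized phase configurations, and
\[
Q=\varrho_g^2 PM\,\big\|\boldsymbol{H}_r\boldsymbol{\Psi}\boldsymbol{e}_r\big\|^2 .
\]
Writing $\psi_n=\theta_n-2\pi(n-1)\Delta_r$ for the effective phase at element $n$ and $g_{k,n}=[\boldsymbol{h}_{r,k}]_n^{\star}\,e^{j\psi_n}$, the RIS objective becomes $\sum_{k}|f_k|^2$ with $f_k=\sum_n g_{k,n}$, and the common factor $\varrho_g^2 PM$ cancels in $\Delta$. It therefore suffices to compare $\sum_k|f_k|^2$ at the optimal phases $\{\psi_n^{op}\}$ and at their quantized counterparts.

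Next I would introduce the quantization residual $\delta_n=\theta_n^{q}-\theta_n^{op}\in[-\pi/2^b,\pi/2^b)$. Because quantization shifts every effective phase by the same $\delta_n$, we get $f_k^{q}=\sum_n g_{k,n}e^{j\delta_n}$ with $g_{k,n}=[\boldsymbol{h}_{r,k}]_n^{\star}e^{j\psi_n^{op}}$, so that
\[
\mathbb{E}\big[|f_k^{q}|^2\big]=\sum_{n}\mathbb{E}\big[|g_{k,n}|^2\big]+\sum_{n\neq m}\mathbb{E}\big[g_{k,n}g_{k,m}^{\star}e^{j(\delta_n-\delta_m)}\big],
\]
whereas $\mathbb{E}[|f_k^{op}|^2]$ is the same expression with the exponential factor removed. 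The entire comparison thus reduces to evaluating the off-diagonal cross terms.

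The key asymptotic step is as follows. As $N\to\infty$ the optimal phases spread finely over $[-\pi,\pi)$, so the residuals $\{\delta_n\}$ become asymptotically uniform on $[-\pi/2^b,\pi/2^b)$ and asymptotically independent both of one another and of the magnitudes $\{|g_{k,n}|\}$; invoking this equidistribution I would replace $\mathbb{E}[e^{j(\delta_n-\delta_m)}]$ by $|\mathbb{E}[e^{j\delta}]|^2$ for $n\neq m$. A direct integration gives $\mathbb{E}[e^{j\delta}]=\tfrac{2^b}{\pi}\sin(\pi/2^b)$, hence $\kappa:=|\mathbb{E}[e^{j\delta}]|^2=\tfrac{2^{2b}}{\pi^2}\sin^2(\pi/2^b)$, which is at most one since $\sin(\pi/2^b)\le\pi/2^b$. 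Using $\sum_{n\neq m}\mathbb{E}[g_{k,n}g_{k,m}^{\star}]=\mathbb{E}[|f_k^{op}|^2]-\sum_n\mathbb{E}[|g_{k,n}|^2]$ and $|g_{k,n}|=|[\boldsymbol{h}_{r,k}]_n|$, this collapses to
\[
\mathbb{E}\big[|f_k^{q}|^2\big]=\kappa\,\mathbb{E}\big[|f_k^{op}|^2\big]+(1-\kappa)\sum_{n}\mathbb{E}\big[|[\boldsymbol{h}_{r,k}]_n|^2\big].
\]

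Finally I would sum over $k$ and set $B=\sum_k\mathbb{E}[|f_k^{op}|^2]>0$ and $A=\sum_k\sum_n\mathbb{E}[|[\boldsymbol{h}_{r,k}]_n|^2]\ge 0$ (in fact $A=K$ under the stated normalization $\mathbb{E}[|[\boldsymbol{h}_{r,k}]_n|^2]=1/N$). Then $\mathbb{E}[Q_q]/(\varrho_g^2PM)=\kappa B+(1-\kappa)A$ while $\mathbb{E}[Q^{op}]/(\varrho_g^2PM)=B$, giving
\[
\Delta=\frac{\kappa B+(1-\kappa)A}{B}=\kappa+(1-\kappa)\frac{A}{B}\ \ge\ \kappa=\frac{2^{2b}}{\pi^2}\sin^2\!\Big(\frac{\pi}{2^b}\Big),
\]
using $A\ge 0$, $B>0$ and $\kappa\le 1$; since $A=K$ is $O(1)$ while $B$ grows like $N$ (cf. the $N^2$ scaling of Proposition~1 under the present normalization), the bound is asymptotically tight. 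I expect the equidistribution step to be the main obstacle: unlike the single-user case of \cite{wu2019beamforming}, where $\psi_n^{op}=\textrm{arg}([\boldsymbol{h}_{r,k}]_n)$ is exactly uniform and independent of $|[\boldsymbol{h}_{r,k}]_n|$, here the phases $\{\psi_n^{op}\}$ solve a coupled non-convex maximization with no closed form, so rigorously showing that their quantization residuals decorrelate from the channel magnitudes and from each other as $N\to\infty$ is the delicate part; a careful argument would bound the correlation of the cross terms and show it vanishes in the limit, which is precisely what drives $A/B\to 0$.
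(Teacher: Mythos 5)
Your reduction of the problem (optimal $\boldsymbol{x}=\sqrt{P/M}\,\boldsymbol{e}_t$ via the rank-one $\boldsymbol{G}$, the diagonal/cross-term split of $\mathbb{E}[|f_k|^2]$, and the computation $\kappa=\tfrac{2^{2b}}{\pi^2}\sin^2(\pi/2^b)$ from a uniform residual on $[-\pi/2^b,\pi/2^b)$) matches the paper, and your final inequality $\Delta=\kappa+(1-\kappa)A/B\ge\kappa$ would indeed close the argument \emph{if} the cross terms factored as $\mathbb{E}\bigl[g_{k,n}g_{k,m}^{\star}e^{j(\delta_n-\delta_m)}\bigr]=\kappa\,\mathbb{E}\bigl[g_{k,n}g_{k,m}^{\star}\bigr]$. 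But that factorization is exactly the step you flag and do not prove, and it is a genuine gap rather than a technicality: the residuals $\delta_n$ are deterministic functions of the optimal phases, which solve a coupled non-convex multiuser maximization with no closed form, so there is no a priori reason they are uniform, mutually independent, or independent of the magnitudes $|g_{k,n}|$. As the paper itself stresses, this is precisely what makes the multiuser case hard, and your proposal contains no mechanism for establishing it.

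The paper supplies the missing mechanism by a surrogate rank-one problem. It sets $\boldsymbol{H}_e=\boldsymbol{H}_r\,\mathrm{diag}(\boldsymbol{e}_r)$, $\boldsymbol{l}=\boldsymbol{H}_e^H\boldsymbol{H}_e\boldsymbol{b}^{op}/\|\boldsymbol{H}_e^H\boldsymbol{H}_e\boldsymbol{b}^{op}\|$, and proves (Lemma 1) via Cauchy--Schwarz plus the strong law of large numbers ($\boldsymbol{H}_e\boldsymbol{H}_e^H\to\boldsymbol{I}$) that $|\boldsymbol{l}^H\boldsymbol{b}|^2\le\|\boldsymbol{H}_e\boldsymbol{b}\|^2$ almost surely as $N\to\infty$, with equality at $\boldsymbol{b}=\boldsymbol{b}^{op}$. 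Consequently $\boldsymbol{b}^{op}$ asymptotically maximizes $|\boldsymbol{l}^H\boldsymbol{b}|^2$ over unit-modulus vectors, whose maximizer is the explicit phase alignment $\exp(j\,\mathrm{arg}(\boldsymbol{l}))$; this yields the closed form $\mathrm{arg}([\boldsymbol{b}^{op}]_n)=\mathrm{arg}([\boldsymbol{l}]_n)$ from which the uniformity and independence of the quantization errors follow as in the single-user case of \cite{wu2019beamforming}. The sandwich also gives $\Delta\ge\tilde{\Delta}$ with $\tilde{\Delta}$ the ratio for the surrogate objective, and the second-moment computation you carried out is then applied to $\tilde{\Delta}$ (where the cross terms $\mathbb{E}[l_nl_i]$ are nonnegative by construction), giving $\tilde{\Delta}\to\kappa$. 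To repair your proof you would need either to import this reduction or to independently justify the asymptotic decorrelation of $\{\delta_n\}$ from the channel; the former is the substantive content of the paper's Appendix C.
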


\begin{proof}[Proof]
	See Appendix C.
	
\end{proof}

Proposition 2 gives an upper bound on performance loss caused by quantization. As expected, the upper bound decreases as the phase resolution $b$ increases. It is instructive to check Proposition 2 experimentally. When $b=1$, $\Delta$ converges asymptotically to $4/ \pi^2 \approx -3.92$ dB as $\textrm{N} \to \infty$, which means that compared to employing RIS with continuous phases, the performance loss of employing discrete-phase RIS is close to $4$ dB. However, when we set $b=2$, $\Delta$ converges asymptotically to $8/\pi^2 \approx -0.91$ dB as $N \to \infty$, which means that the performance loss is less than $1$ dB. In practice, in order to reduce costs and improve scalability, the phases of RIS elements usually have a relatively low phase resolution $b$, e.g., $b=2$. Our result gives a performance guarantee: Even if the phase resolution is very low (e.g. $b=2$), the performance loss caused by quantization is still acceptable (less than $1$ dB).

 It is worth mentioning that, while Proposition 2 is derived under a specific system model assumption, numerical results demonstrate that it provides insights for the design of a wide range of RIS aided wireless power transfer systems. We will elaborate on this in Section VI.

\section{Numerical Results}

Our simulation scenario is shown in Fig. 3, consisting of one BS, two RISs and eight users. Define set $\mathcal{U}_1=\{u_1,u_2,u_3,u_4\}$ and set $\mathcal{U}_2=\{u_5,u_6,u_7,u_8\}$. The baseband equivalent channels of both the BS-$\textrm{RIS}_1$ link and the BS-$\textrm{RIS}_2$ link are modeled as Rician fading channels whose Rician factor is $\beta_{g}$. The baseband equivalent channels of the the links from $\textrm{RIS}_i$ to users in $\mathcal{U}_i$ ($i=1, 2$) are also modeled as Rician fading channels, whose Rician factor is $\beta_{hr}$. In the unstated case, we set $\beta_{g}=2$, $\beta_{hr}=2$. As for the baseband equivalent channels of the BS-users link and the links from $\textrm{RIS}_i$ to users in $\mathcal{U}_j$ with $i,j\in \{1,2\}$, $i\not=j$, they are modeled as Rayleigh fading channels since we consider that there are obstacles on these links. We set the passloss factor $n=3$ for all the channels. Assume that the geometric size of the two RISs is much larger than the wavelength, so that RISs can be modeled as specular reflectors \cite{basar2019wireless}. Further, we consider uniform linear arrays (ULAs) at the BS and the two RISs. Specifically, $\textrm{RIS}_1$ is positioned in parallel to the BS antenna array, and we set $\delta_0=\pi /4$, $\delta_1=\pi /4$, $\delta_2=\pi /3$. The other system parameters are given as: $f_c=755\textrm{MHz}$, $\eta=1$, $P = 10$ W, $K=8$, $d_1=8$ m,$d_2=7$ m, $d_3=4$ m and $d_4=5$ m.

 \begin{figure}[h]
	\centerline{\includegraphics[width=1\columnwidth]{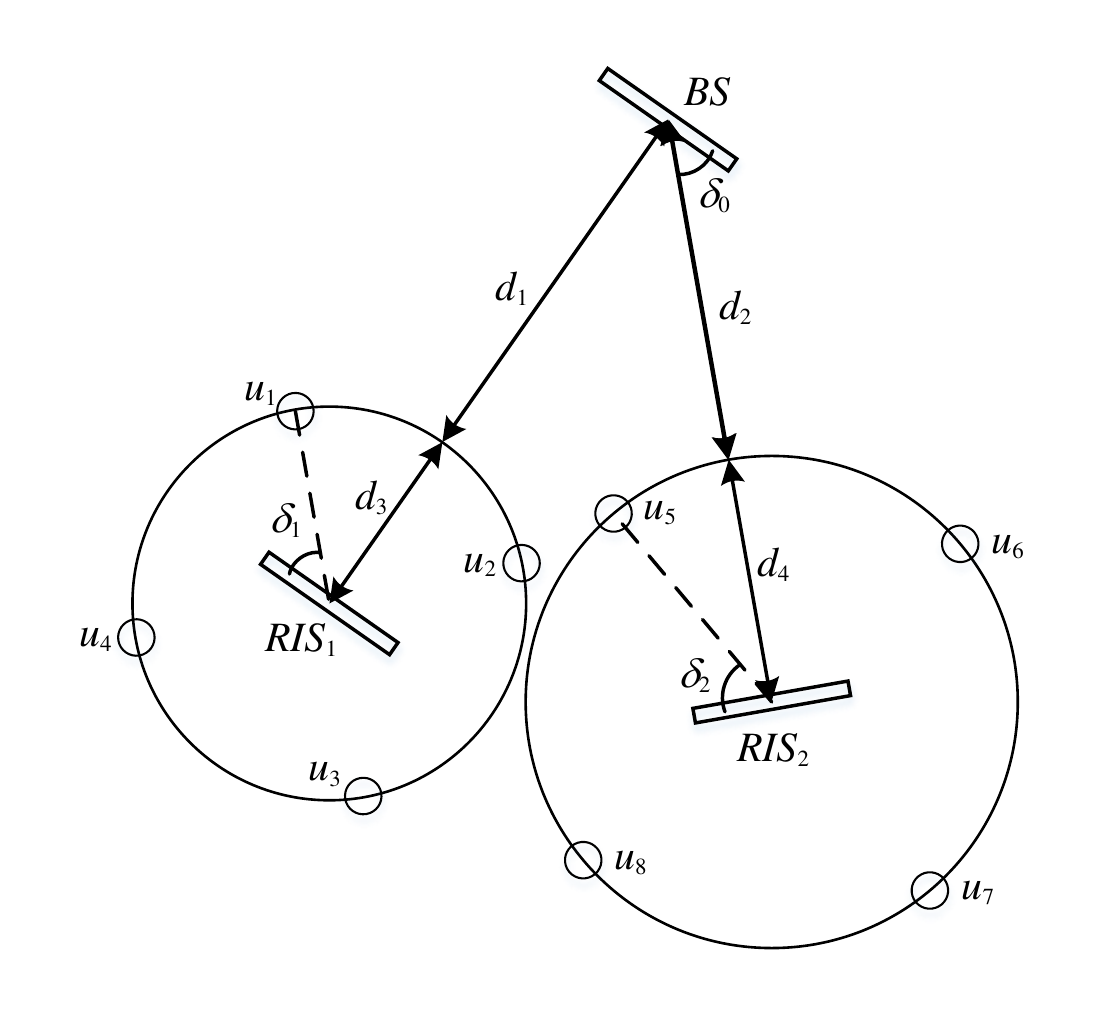}}
	\caption{Simulation setup.}
	\label{1}
\end{figure}

\begin{figure}[htb]
	\centerline{\includegraphics[width=1\columnwidth]{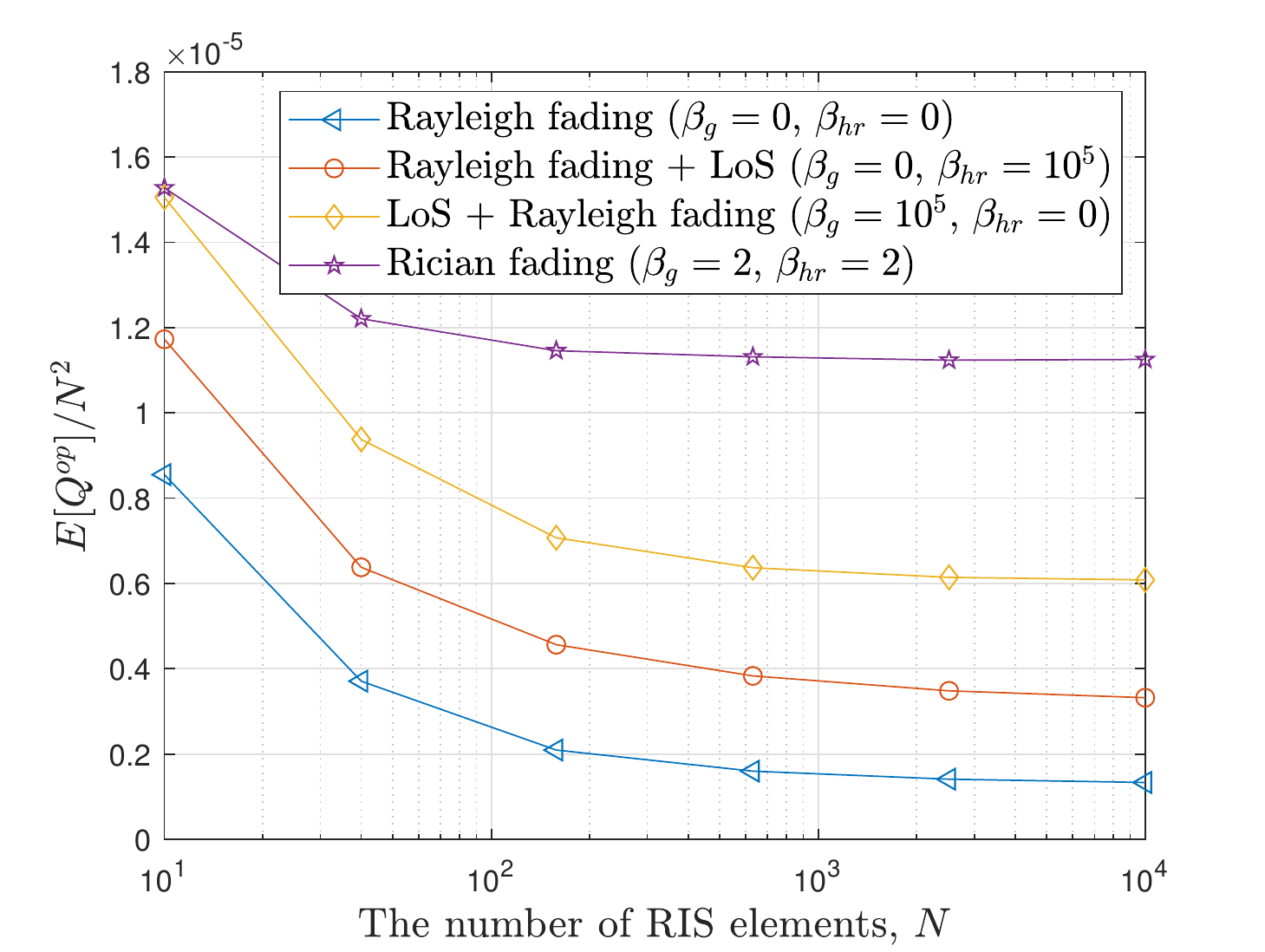}}
	\caption{$\mathbb{E}[Q^{op}]/N$ versus the number of RIS elements, $N$}
	\label{2}
\end{figure}

%The average power ratio defined in (80) versus the number of RIS elements, $N$, is shown in Fig. 4. The analytical result given in Proposition 2 are also shown. We observe that the analytical result matches well with the simulated curve, and both analytical and simulation results converge to the asymptotic bound as $N \to \infty$, which verifies Proposition 2. As mentioned earlier, when $N$ is very large, the performance loss caused by quantization reaches the upper bound. As shown in the figure, this upper bound decreases with the increase of phase resolution $b$. When $b=1$, $b=2$ and $b=3$, the corresponding performance loss upper bound is 3.92 dB, 0.91 dB and 0.22 dB, respectively.

\begin{figure}[htb]
	\centering
	\includegraphics[width=1\columnwidth]{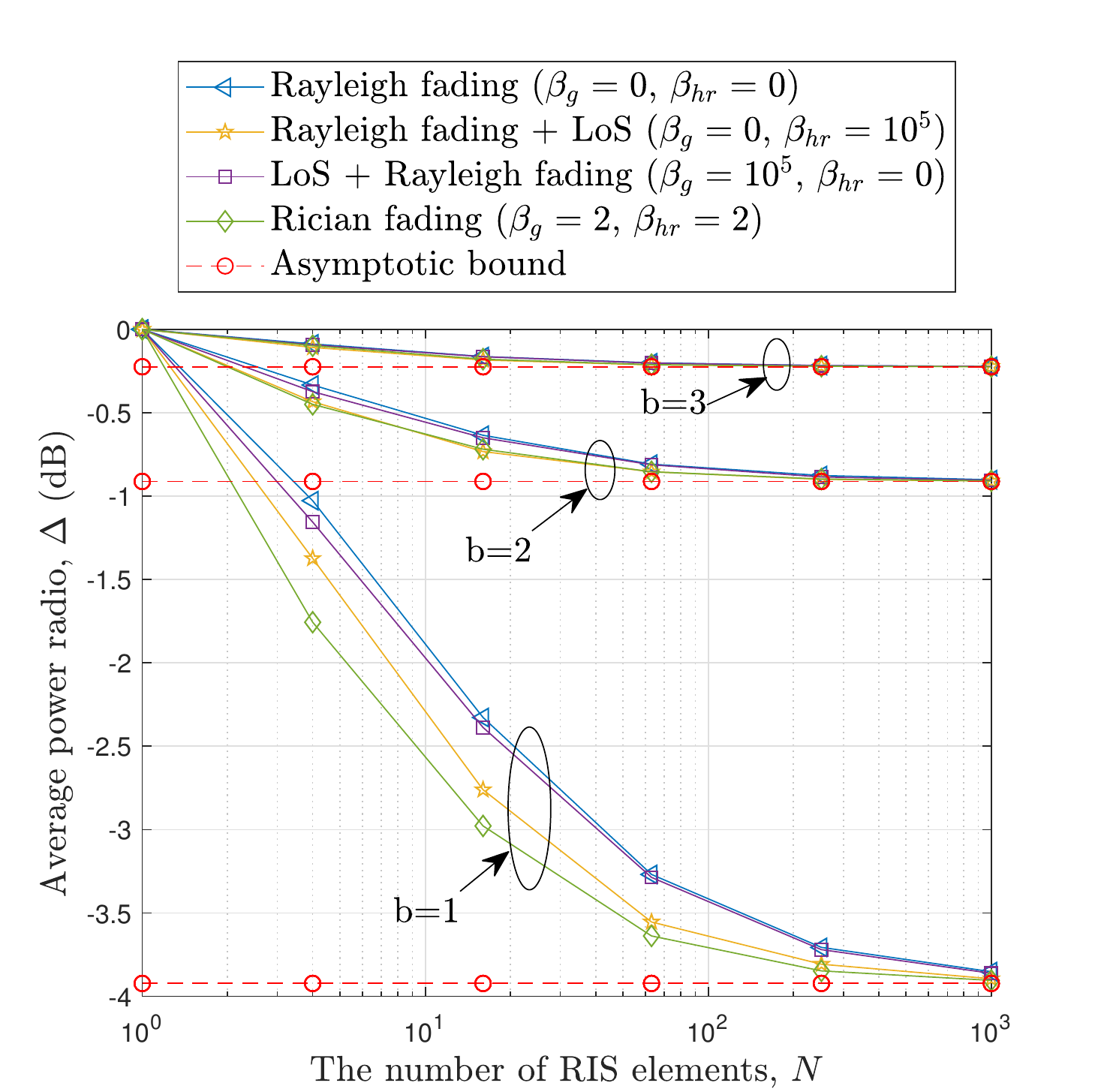}
	\caption{Average power radio versus the number of RIS elements, $N$.}
	\label{3}
\end{figure}

%Recall that Propositions 1 and 2 assume a specific WPT system, where the channels from the BS to the users are ignored, the BS-RIS channels are set as LoS channels and the RIS-user channels are set as Rayleigh fading channels. Whether the conclusions hold in practical channel models is a valuable question, which is illustrated in Fig. 4 and Fig. 5. 

%The simulation settings of Fig. 4 and Fig. 5 are as follows. 

In Fig. 4, we plot $\mathbb{E}[Q^{op}]/N^2$ versus the number of RIS elements with different channel models. Recall that $Q^{op}$ is the optimal objective value of (P1), that is, the maximum total received power. The data required for drawing is obtained by leveraging the SPM-SCA algorithm. To highlight the role of RISs, we ignore the direct channels from the BS to the users. We adopt the scenario in Fig. 3 for simulation where the large-scale fading is considered and the Rician factors $\beta_{g}$ and $\beta_{hr}$ can be adjusted. Note that $\beta_{g} = 0$ means the channels from the BS to the RISs are set as Rayleigh fading channels while $\beta_{g} = 10^5$ and $\beta_{g} = 2$ means they are set as LoS channels and Rician fading channels, respectively; similarly, $\beta_{hr} = 0$, $\beta_{hr} = 10^5$ and $\beta_{hr} = 2$ means the channels from each RIS to the users around it are set as Rayleigh fading channels, LoS channels and Rician fading channels, respectively. In this figure, we observe that as $N \to \infty$, $\mathbb{E}[Q^{op}]/N^2$ tends to a constant under various channel models considered. This means the multiuser power scaling law described in Section V hold not only under the channel model considered in Proposition 1, but also under more practical channel models.

The average power ratio defined in (80) versus the number of RIS elements with different channel models and phase resolutions is shown in Fig. 5. The channel models considered in this figure are exactly the same as that considered in Fig. 5. The asymptotic bounds in this figure are calculated by the expression given in Proposition 2. As shown in the figure, for any phase resolutions ($\boldsymbol{b}=1$, $\boldsymbol{b}=2$ or $\boldsymbol{b}=3$) and channel models (Raleigh fading, Raleigh fading + LoS, LoS + Raleigh fading or Rician fading), as $N \to \infty$, the average power ratio $\Delta$ tends to the corresponding bound. This shows that, although Proposition 2 is derived under a specific channel model assumption, the performance loss upper bound given in Proposition 2 is still valid under a wider range of channel models.

%As shown in Fig. 4 and Fig. 5, the results in Proposition 1 and Proposition 2 hold in a wider range of channel models. Specifically, in Fig. 4, as $N \to \infty$, $\mathbb{E}[Q^{op}]/N^2$ tends to a constant under various channel conditions considered, which means the multiuser power scaling law described in Proposition 1 still hold in the considered channel models. Similarly, in Fig. 5, under various channel conditions considered, the average power ratio defined in (80) tends to the asymptotic bounds given in Proposition 2 as $N \to \infty$, which verifies Proposition 2.
%We set the Rician factors as four combinations for simulation
%
%
% and obtained four figures in Fig. 5. The curves corresponding to the ``continuous value" are given by the SPM-SCA algorithm and the other curves are obtained by quantifying the phase of the RIS elements with different phase resolution $b$. In order to clearly see the performance loss caused by quantization when $N$ is extremely large, we enlarge the last part of the curves, as shown in the small windows. As seen in Fig. 5, even though the Rician factors of the four figures differ greatly, the performance loss caused by quantization is almost the same and agrees well with the analytical result in Proposition 2.

\begin{figure}[htb]
	\centering
	\includegraphics[width=1\columnwidth]{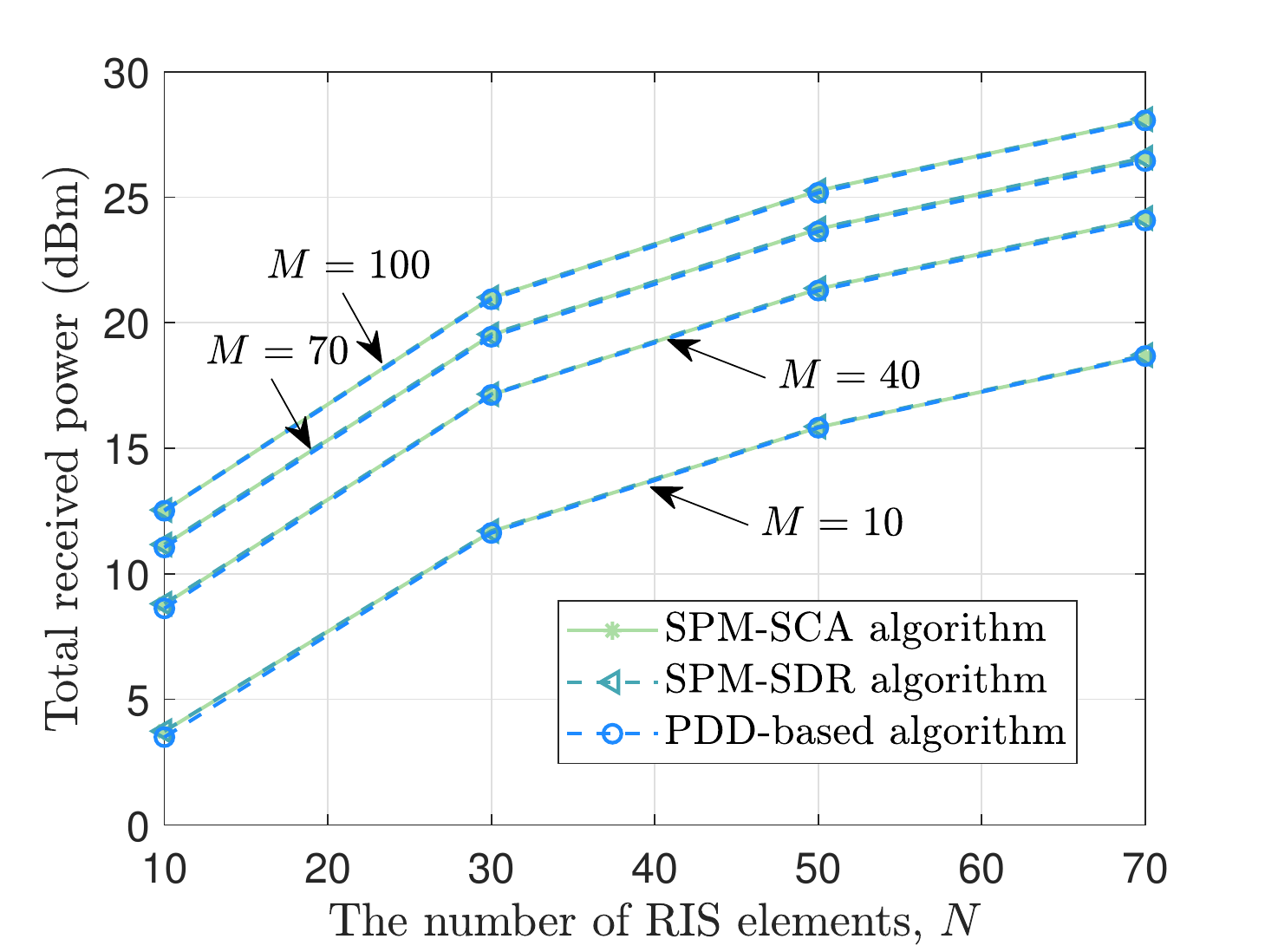}
	\caption{Total received power versus the number of RIS elements, $N$.}
	\label{2}
\end{figure}

%Fig. 6 and Fig. 7 show the variation of the total received power with the number of RIS elements $N$ and the number of BS antennas $M$, respectively. Naturally, the total received power increases when $N$ or $M$ increases. We use solid lines and dotted lines to distinguish the results of the SPM-SCA algorithm from the results of the SPM-SDR algorithm. As shown in the figures, the two kinds of curves almost overlap, indicating that the two algorithms have nearly identical performance. Further, 

In Fig. 6, we plot the total received power versus the number of RIS elements $N$ with different number of BS antennas $M$. Note that the PDD-based algorithm in the figure is proposed in \cite{Zhao2020Intelligent} and can be directly used to solve the two AO subproblems of (P1) (i.e., (P3) and (P5)). We first divide (P1) into two subproblems, and then solve these two subproblems by leveraging the PDD-based algorithm to obtain the curves corresponding to the PDD-based algorithm in the figure. The other two kinds of curves are obtained by leveraging the SPM-SCA and SPM-SDR algorithms proposed in this paper. Naturally, the total received power increases as $N$ increases. Furthermore, the curves corresponding to the three algorithms almost overlap, which indicates that the performance of these three algorithms is almost the same for the problem considered. Note that Table I shows the average execution time of these three algorithms in the cases considered in Fig. 6. From this table, we see that the SPM-SCA algorithm requires the least execution time compared to the benchmark algorithms. Combined with the nearly identical performance shown in Fig. 6, we show that the SPM-SCA algorithm proposed in this paper is an effective low-complexity algorithm.

\begin{figure}[htb]
	\centering
	\includegraphics[width=1\columnwidth]{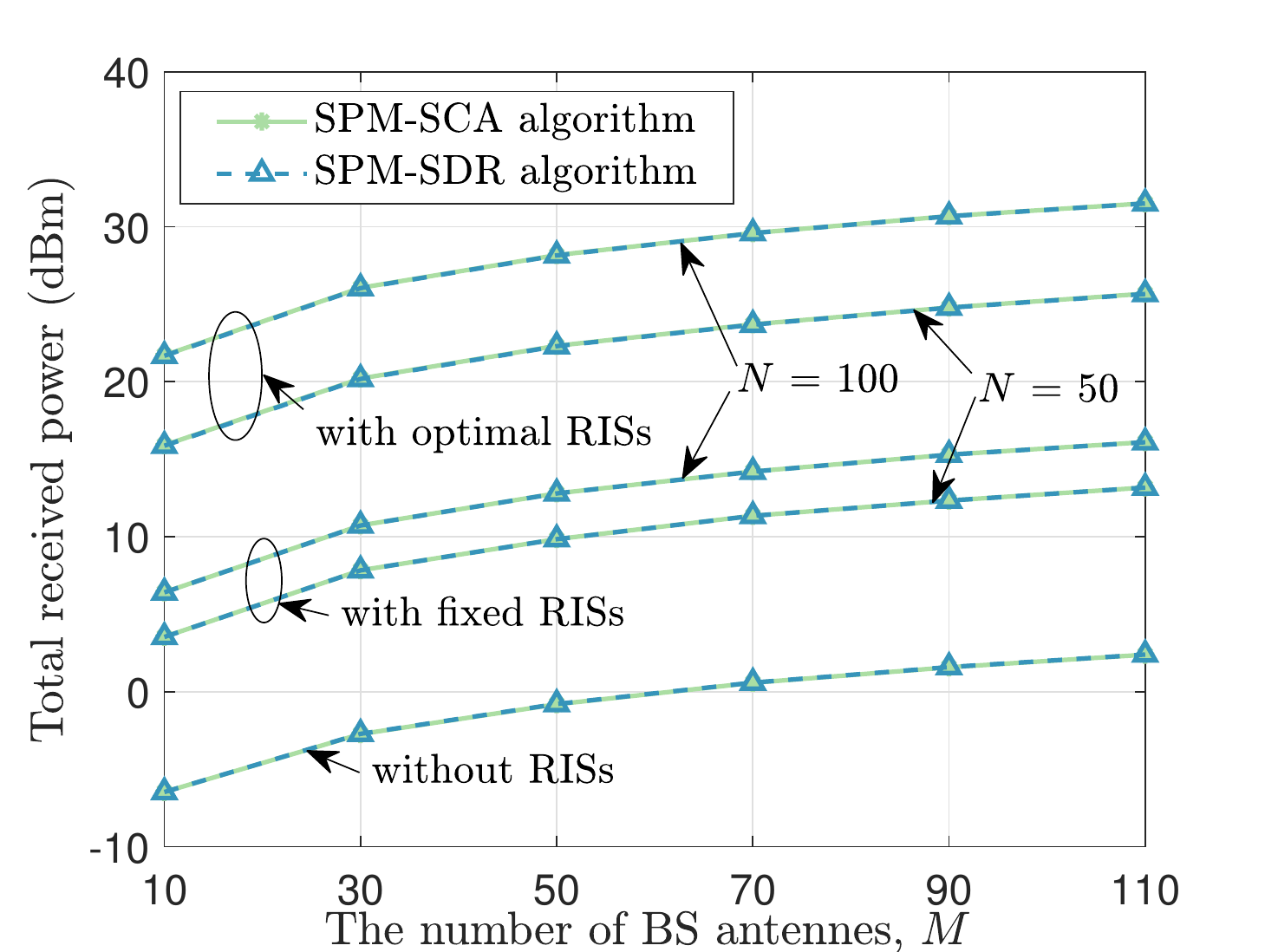}
	\caption{Total received power versus the number of BS antennes, M.}
	\label{2}
\end{figure}

\begin{table}[!h]
	\centering
	\caption{The average execution time of the SPM-SDR, SPM-SCA and PDD-based algorithms.}
	\begin{tabular}{cccccc}
		\hline
		& algorithms & \multicolumn{4}{c}{average execution time ($\times 10^{-1}$ s)} \\ \cline{2-6} 
		&            & N=10           & N=30           & N=50           & N=70           \\ \hline
		\multirow{3}{*}{M=10}  & SPM-SDR    & 128.60         & 137.02         & 145.49         & 160.71         \\ \cline{2-6} 
		& PDD-based  & 3.00           & 4.76           & 7.71           & 12.42          \\ \cline{2-6} 
		& SPM-SCA    & 1.34           & 2.59           & 2.62           & 2.71           \\ \hline
		\multirow{3}{*}{M=40}  & SPM-SDR    & 143.08         & 205.61         & 218.32         & 297.65         \\ \cline{2-6} 
		& PDD-based  & 3.37           & 6.19           & 8.75           & 13.64          \\ \cline{2-6} 
		& SPM-SCA    & 1.81           & 2.10           & 2.01           & 1.93           \\ \hline
		\multirow{3}{*}{M=70}  & SPM-SDR    & 220.14         & 245.38         & 337.98         & 298.75         \\ \cline{2-6} 
		& PDD-based  & 4.74           & 7.30           & 10.25          & 11.59          \\ \cline{2-6} 
		& SPM-SCA    & 1.96           & 1.95           & 1.66           & 1.36           \\ \hline
		\multirow{3}{*}{M=100} & SPM-SDR    & 399.73         & 415.44         & 564.60         & 402.86         \\ \cline{2-6} 
		& PDD-based  & 4.44           & 7.54           & 10.01          & 12.94          \\ \cline{2-6} 
		& SPM-SCA    & 2.99           & 2.09           & 1.41           & 1.27           \\ \hline
	\end{tabular}
\end{table}

Fig. 7 shows the total received power versus the number of BS antennas $M$ under different RIS states. In the case that no RISs are placed in the environment, the corresponding optimization problem is a traditional constant-envelope WPT problem. We solve this problem by leveraging the SCA algorithm. In the case that RISs are fixed, we randomly select the phases of RIS elements in each simulation, and use the SPM-SCA algorithm to optimize the beamformer at the BS. In the case that RISs are placed in the environment and optimized, we use the SPM-SCA algorithm to jointly optimize the beamformer at the BS and the phases of RIS elements. As shown in this figure, the total received power increases as $M$ increases. In the absence of RIS, the total received power is the smallest, while in the presence of RIS, the larger $N$ is, the greater the total received power will be. It is worth noting that, as long as RISs are placed in the environment, even if they are not optimized, the total received power will increase relative to the case without RIS. This makes sense since placing RIS in the environment introduces additional signal propagation paths.

%\begin{figure}[htb]
%	\centering
%	\includegraphics[width=0.8\textwidth]{(P2).pdf}
%	\caption{Minimum received power versus the number of RIS elements, N.}
%	\label{2}
%\end{figure}

%\begin{figure}[htb]
%	\centering
%	\includegraphics[width=0.8\textwidth]{(P3).pdf}
%	\caption{Total received power versus the number of RIS elements, N.}
%	\label{2}
%\end{figure}

\begin{figure}[htb]
	\centering
	\subfigure[Minimum received power]{
		\begin{minipage}[t]{0.5\columnwidth}
			\centering
			\includegraphics[width=\columnwidth]{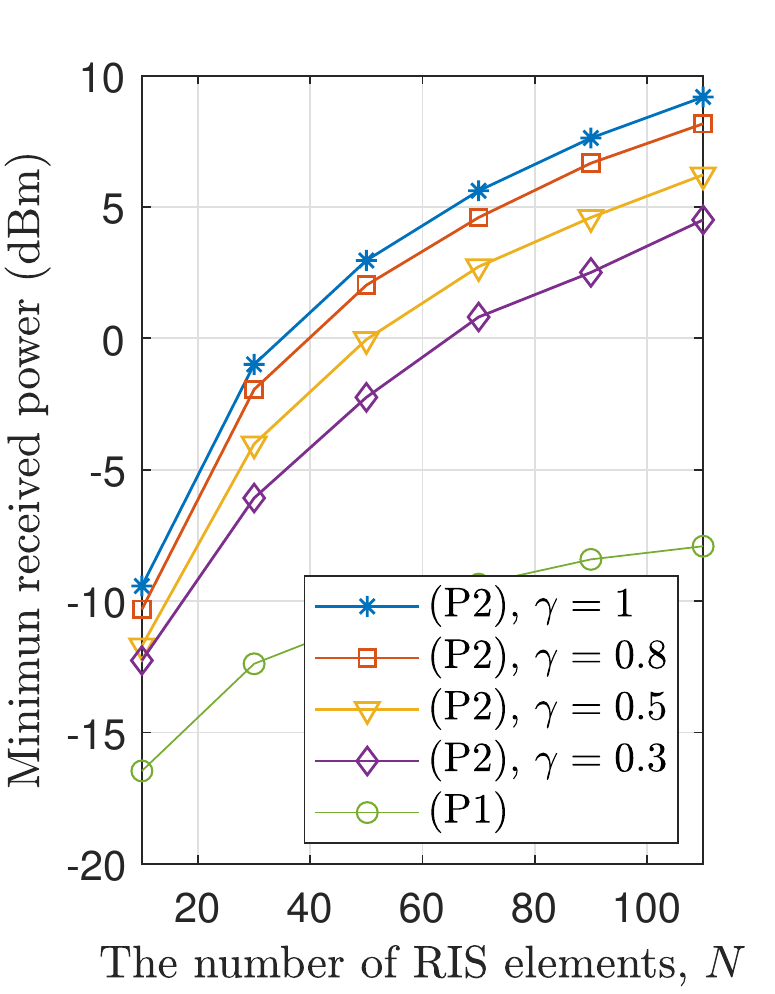}
			%\caption{fig1}
		\end{minipage}%
	}%
	\subfigure[Total received power]{
		\begin{minipage}[t]{0.5\columnwidth}
			\centering
			\includegraphics[width=\columnwidth]{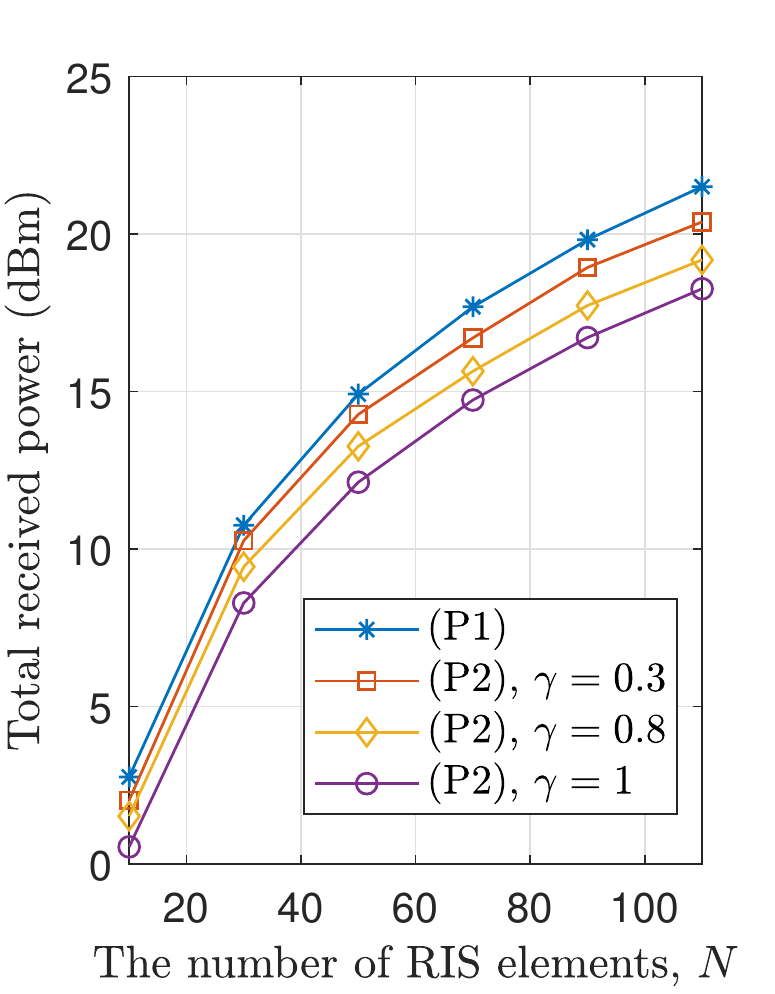}
			%\caption{fig2}
		\end{minipage}%
	}%
	\centering
	\caption{ Simulation results of (P2).}
\end{figure}

The minimum received power and the total received power versus the number of RIS elements $N$ are shown in Fig. 8(a) and (b), respectively. In the figures, the curves related to (P1) are drawn by SPM-SCA algorithm, while the curves related to (P2) are drawn by SPMC-SCA-ADMM algorithm. Note that we set all the elements in $\left\{p_k/\eta\right\}_{k=1}^K$ to be equal to $\gamma Q_{MM}$ in the simulation, where $\gamma \in [0,1]$ is a scaling factor and $Q_{MM}=\max_{\boldsymbol{\Psi}, \boldsymbol{x}} \min_{1,\dots,K} Q_k$. It is not difficult to see that with this setup, (P2) is solvable. In the simulation, we obtain the value of $Q_{MM}$ under different channel states by trial and error. As can be seen from these two figures, with the increase of $N$, both the total received power and the minimum received power increase. And with the increase of $\gamma$, the total received power decreases while the minimum received power increases. This makes sense since a larger $\gamma$ means tighter constraints and higher requirements for minimum received power. For further explanation, a larger $\gamma$ means a stronger QoS requirement, corresponding to a higher minimum received power in Fig. 8(a), while a stronger QoS requirement leads to a reduction of energy efficiency, corresponding to a smaller total received power in Fig. 8(b). Note that the curves corresponding to (P1) plays the role of bound in both figures. In Fig. 8(a), the curve corresponding to (P1) is a lower bound while in Fig. 8(b), it is a upper bound. As can be seen, with the decrease of $\gamma$, the (P2) related curves in the two figures are getting closer to the (P1) related curves. In fact, when we set $\gamma=0$, (P2) is reduced to (P1).

\section{Conclusion}

In this paper, we proposed a novel scheme to improve the energy efficiency of RIS aided wireless power transfer system. We formulated two different problems, (P1) and (P2), to maximize the total received power of the users by jointly optimizing the beamformer at transmitter and the phase-shifts at the RISs, where the difference is that the solution of (P2) needs to satisfy the user QoS constraints while the solution of (P1) does not. We designed low-complexity algorithms for both problems by applying alternating optimization, SCA and ADMM techniques. Our analysis showed that the average received power increases quadratically with the number of the IRS elements. Our analysis also showed that the power loss caused by RIS phase quantization is not significant. In addition, the effectiveness of the proposed algorithms are verified by extensive simulations.

Looking forward, we list a number of research directions worthy for our future endeavour. First, this paper ignores the signals reflected multiple times by RISs. However, in cases where the cooperation of multiple RISs is required to bypass obstacles, considering the signals reflected by RISs for multiple times will greatly improve the system performance \cite{zheng2020efficient, Mei2020Cooperative}. How to extend the work in this paper to allow multiple RIS reflections still remains an open challenge. Second, this paper assumes perfect CSI. However, accurate estimation of these channels is usually costly. To reduce the channel estimation overhead, in \cite{Zhao2020Intelligent, cai2020two, Zhou2020Robust}, beamforming design for RIS aided communications with imperfect CSI is studied. This inspires us to investigate the beamforming design under imperfect CSI in the RIS aided constant-envelope WPT system.

\appendices
\section{Solution to (P14)}
In this appendix, we derive the solution to (P14). First, we rewrite (P14) as follows:

\begin{alignat}{2}
\textrm{(P14):} \quad \min_{\left\{\boldsymbol{e}_k \right\}_{k=1}^{K}}\quad &
\sum_{k=1}^K \left\|\boldsymbol{e}_k-\boldsymbol{x}^{(i+1)}+\boldsymbol{u}^{(i)}_k\right\|^2 \\
\mbox{s.t.}\quad
&\left|\boldsymbol{h}_k^H \boldsymbol{e}_k\right|^2 \geq \frac{p_k}{\eta}, k=1,\dots,K.
\end{alignat}
Since in the above problem, the optimizations of the elements in set $\left\{\boldsymbol{e}_k\right\}_{k=1}^{K}$ are decoupled and their corresponding optimization problems have the same form, we only need to study one of them as
\begin{alignat}{2}
\textrm{(PA1):} \quad \min_{\boldsymbol{e}_k}\quad & \left\|\boldsymbol{e}_k-\boldsymbol{x}^{(i+1)}+\boldsymbol{u}^{(i)}_k\right\|^2 \\
\mbox{s.t.}\quad
&\left|\boldsymbol{h}_k^H \boldsymbol{e}_k\right|^2 \geq \frac{p_k}{\eta}.
\end{alignat}
For this particular objective function, which is the Euclidean distance from $\boldsymbol{e}_k$ to $\boldsymbol{x}^{(i+1)}-\boldsymbol{u}_k^{(i)}$, the optimum must be on the edge of the feasible region if $\boldsymbol{x}^{(i+1)}-\boldsymbol{u}_k^{(i)}$ is not in the feasible region. Thus, we first check whether $\boldsymbol{x}^{(i+1)}-\boldsymbol{u}_k^{(i)}$ is feasible. If yes, let the optimal $\boldsymbol{e}^{op}_k=\boldsymbol{x}^{(i+1)}-\boldsymbol{u}_k^{(i)}$. Otherwise, solve the following optimization problem instead:
\begin{alignat}{2}
\textrm{(PA2):} \quad \min_{\boldsymbol{e}_k}\quad & \left\|\boldsymbol{e}_k-\boldsymbol{x}^{(i+1)}+\boldsymbol{u}^{(i)}_k\right\|^2 \\
\mbox{s.t.}\quad
&\left|\boldsymbol{h}_k^H \boldsymbol{e}_k\right| = \sqrt{\frac{p_k}{\eta}}.
\end{alignat}

The constraint (86) can be written as a linear constraint with an unknown phase $\nu$:
\begin{equation}
\boldsymbol{h}_k^H\boldsymbol{e}_k=\sqrt{\frac{p_k}{\eta}}e^{j\nu}.
\end{equation}
Suppose we know $\nu$. (PA2) becomes a projection onto an affine subspace \cite{huang2016consensus}, whose solution is given by
\begin{equation}
\boldsymbol{e}_k=\boldsymbol{x}^{(i+1)}-\boldsymbol{u}_k^{(i)}+
\frac{\sqrt{\frac{p_k}{\eta}}e^{j\nu}-\boldsymbol{h}_k^H \left(\boldsymbol{x}^{(i+1)}-\boldsymbol{u}_k^{(i)}\right)}{\left\|\boldsymbol{h}_k\right\|^2}\boldsymbol{h}_k.
\end{equation}
Plugging this intermediate solution to the objective function of (PA2), we see that the minimum is attained if we set $\nu=\textrm{arg}\left(\boldsymbol{h}_k^H \left(\boldsymbol{x}^{(i+1)}-\boldsymbol{u}_k^{(i)}\right)\right)$. Plugging into (88), we obtain
\begin{equation}
\begin{split}
\boldsymbol{e}^{op}_k=&\boldsymbol{x}^{(i+1)}-\boldsymbol{u}_k^{(i)}\\
+&\frac{\sqrt{\frac{p_k}{\eta}}-\left|\boldsymbol{h}_k^H \left(\boldsymbol{x}^{(i+1)}-\boldsymbol{u}_k^{(i)}\right)\right|}{\left\|\boldsymbol{h}_k\right\|^2  \left|\boldsymbol{h}_k^H \left(\boldsymbol{x}^{(i+1)}-\boldsymbol{u}_k^{(i)}\right)\right|}\boldsymbol{h}_k \boldsymbol{h}_k^H  \left(\boldsymbol{x}^{(i+1)}-\boldsymbol{u}_k^{(i)}\right).
\end{split}
\end{equation}
Thus, for $k=1,\dots,K$,
\begin{equation}
\begin{cases}
\boldsymbol{e}_k^{op}=\boldsymbol{x}^{(i+1)}-\boldsymbol{u}_k^{(i)};& \textrm{if} \left|\boldsymbol{h}_k^H\left(\boldsymbol{x}^{(i+1)}-\boldsymbol{u}_k^{(i)}\right)\right|^2 \geq \frac{p_k}{\eta},\\
\textrm{(89)};& \textrm{otherwise}.
\end{cases}
\end{equation}

\section{Proof of Proposition 1}
We prove this proposition by finding a pair of upper and lower bounds of $\mathbb{E}\left[Q^{op}\right]/N^2$, and showing that these two bounds are constants as $N \to +\infty$.

We first simplify the problem form. $Q=\varrho_g^2\|\boldsymbol{H}_{r} \boldsymbol{\Psi} \boldsymbol{e}_r\|^2\left|\boldsymbol{e}_t^H\boldsymbol{x}\right|^2$ can be obtained by replacing $\boldsymbol{G}$ in $\|\boldsymbol{H}_{r} \boldsymbol{\Psi} \boldsymbol{G}\boldsymbol{x}\|^2$ with $\varrho_g e^{-\frac{j2\pi d}{\lambda_c}}\boldsymbol{e}_r \boldsymbol{e}_t^H$. Then we have $Q^{op}=\mathop{\max}_{\boldsymbol{x}, \boldsymbol{\Psi}}\varrho_g^2\|\boldsymbol{H}_{r}$ $ \boldsymbol{\Psi} \boldsymbol{e}_r\|^2\left|\boldsymbol{e}_t^H\boldsymbol{x}\right|^2$, subject to $[\boldsymbol{\Psi}]_{n,n}=1$, $\forall n$ and $[\boldsymbol{x}]_{m}=\sqrt{P/M}$, $\forall m$. It is easy to see that the optimization of the two variables is decoupled. Since the optimal $\boldsymbol{x}$ is clearly given by $\boldsymbol{x}^{op}=\sqrt{P/M}\boldsymbol{e}_t$, the hard part is optimizing $\boldsymbol{\Psi}$. The corresponding optimization problem is $Q^{op}=\mathop{\max}_{\boldsymbol{\Psi}}\varrho_g^2PM\|\boldsymbol{H}_{r}\boldsymbol{\Psi} \boldsymbol{e}_r\|^2$, subject to $[\boldsymbol{\Psi}]_{n,n}=1$. Let $\boldsymbol{z}=\boldsymbol{\Psi}\boldsymbol{e}_r$, we have $\|\boldsymbol{H}_{r} \boldsymbol{\Psi} \boldsymbol{e}_r\|^2=\|\boldsymbol{H}_{r} \boldsymbol{z}\|^2$. Since the module of each element of $\boldsymbol{e}_r$ and each diagonal element of  diagonal matrix $\boldsymbol{\Psi}$ is 1, the module of each element of $\boldsymbol{z}$ is also 1. Then the original priblem can be rewritten as
%Define $S^{op}=\max_{\boldsymbol{\Psi}}\|\boldsymbol{H}_{r} \boldsymbol{\Psi} \boldsymbol{e}_r\|^2=\max_{\boldsymbol{z}}\|\boldsymbol{H}_{r} \boldsymbol{z}\|^2$, where $\boldsymbol{z}=\boldsymbol{\Psi}\boldsymbol{e}_r$. Since the module of each element of $\boldsymbol{e}_r$ and each diagonal element of  diagonal matrix $\boldsymbol{\Psi}$ is 1, the module of each element of $\boldsymbol{b}$ is also 1. Then we have
\begin{alignat}{2}
\textrm{(PB1):} \quad \max_{\boldsymbol{z}} \quad & 
Q^{op}=\varrho_g^2PM\|\boldsymbol{H}_{r} \boldsymbol{z}\|^2 & \\
\mbox{s.t.}\quad
&\left|[\boldsymbol{z}]_n\right|=1, n=1,\dots,N.
\end{alignat}

Then, we give an upper bound and a lower bound of $Q^{op}$. According to \cite[Section \uppercase\expandafter{\romannumeral4}]{zhang2017multi}, we obtain $\varrho_g^2PM N \lambda_{1} $ as an upper bound of $Q^{op}$ by relaxing the constraint of $\boldsymbol{b}$ to $\|\boldsymbol{b}\|^2 \leq N$, and obtain $ \varrho_g^2PM \lambda_{1} \|\boldsymbol{v}_1\|_1^2 $ as a lower bound of $Q^{op}$ by substituting a particular solution, $\textrm{exp}\left(j\textrm{arg}(\boldsymbol{v}_1)\right)$, into the objective function, where $\lambda_{1} $ denotes the maximum eigenvalue of Wishart matrix $\boldsymbol{H}_r^H \boldsymbol{H}_r$, and $\boldsymbol{v}_1$ denotes the normalized eigenvector corresponding to $\lambda_1$. The upper and lower bounds of $Q^{op}$ are shown as 

\begin{equation}
\begin{aligned} 
\varrho_g^2 P M \lambda_{1} \|\boldsymbol{v}_1\|_1^2 \leq Q^{op} \leq \varrho_g^2 P M N \lambda_{1}.
\end{aligned}
\end{equation}
Note that $\boldsymbol{v}_1$ is independent of $\lambda_1$ and $\mathbb{E}\left[\|\boldsymbol{v}_1\|_1^2\right]=1+(N-1)\pi/4$ \cite[APPENDIX]{zhang2017multi}. By simply taking the expectation of the three terms in (93), we have

\begin{equation}
\begin{aligned} 
\left[1+\frac{\pi}{4}(N-1)\right]\varrho_g^2 P M \mathbb{E}\left[\lambda_{1}\right] \leq \mathbb{E}\left[Q^{op}\right] \leq \varrho_g^2 P M N \mathbb{E}\left[\lambda_{1}\right].
\end{aligned}
\end{equation}

Further, according to \cite[Lemma 4.1]{edelman1988eigenvalues}, when $K$ is a constant, $\lim_{N \to +\infty}\mathbb{E}\left[\lambda_{1}\right]/N=\varrho_h^2$. Then we have
\begin{equation}
\begin{aligned} 
\lim_{N \to +\infty} \frac{\left[1+\frac{\pi}{4}(N-1)\right]\varrho_g^2 P M \mathbb{E}\left[\lambda_{1}\right]}{N^2} = \frac{\pi}{4}\varrho_g^2 \varrho_h^2  P M,
\end{aligned}
\end{equation}

\begin{equation}
\begin{aligned} 
\lim_{N \to +\infty} \frac{\varrho_g^2 P M N \mathbb{E}\left[\lambda_{1}\right]}{N^2} = \varrho_g^2 \varrho_h^2  P M.
\end{aligned}
\end{equation}
Combined with (94), we have

\begin{equation}
\begin{aligned} 
\frac{\pi}{4}\varrho_g^2 \varrho_h^2  P M \leq \lim_{N \to +\infty} \frac{\mathbb{E}\left[Q^{op}\right]}{N^2} \leq \varrho_g^2 \varrho_h^2  P M.
\end{aligned}
\end{equation}
This completes the proof.

\section{Proof of Proposition 2}
We first simplify the expression of $\Delta$ as follows. With $\boldsymbol{G}=\varrho_g e^{-\frac{j2\pi d}{\lambda_c}}\boldsymbol{e}_r \boldsymbol{e}_t^H$, we obtain $Q=\varrho_g^2\|\boldsymbol{H}_{r} \boldsymbol{\Psi} \boldsymbol{e}_r\|^2\left|\boldsymbol{e}_t^H\boldsymbol{x}\right|^2$. Then $Q^{op}=\mathop{\max}_{\boldsymbol{x}, \boldsymbol{\Psi}}\varrho_g^2\|\boldsymbol{H}_{r} \boldsymbol{\Psi} \boldsymbol{e}_r\|^2\left|\boldsymbol{e}_t^H\boldsymbol{x}\right|^2$, subject to $[\boldsymbol{\Psi}]_{n,n}=1$, $\forall n$ and $[\boldsymbol{x}]_{m}=\sqrt{P/M}$, $\forall m$. Clearly, the optimal $\boldsymbol{x}$ is given by $\boldsymbol{x}^{op}=\sqrt{P/M}\boldsymbol{e}_t$, and so $Q^{op}=\mathop{\max}_{\boldsymbol{\Psi}}\varrho_g^2PM\|\boldsymbol{H}_{r}\boldsymbol{\Psi} \boldsymbol{e}_r\|^2$, subject to $[\boldsymbol{\Psi}]_{n,n}=1$. Let $\boldsymbol{b}=\textrm{diag}(\boldsymbol{\Psi})$ and $\boldsymbol{H}_e = \boldsymbol{H}_{r} \textrm{diag}\left(\boldsymbol{e}_r\right)$. Then we have $\|\boldsymbol{H}_{r}\boldsymbol{\Psi} \boldsymbol{e}_r\|^2 = \|\boldsymbol{H}_{e} \boldsymbol{b}\|^2$. The original optimization problem can therefore be rewritten as
\begin{alignat}{2}
\textrm{(PC1):} \quad \max_{\boldsymbol{b}} \quad & 
\|\boldsymbol{H}_{e} \boldsymbol{b}\|^2 & \\
\mbox{s.t.}\quad
&\left|[\boldsymbol{b}]_{n}\right|=1, n=1,\dots,N.
\end{alignat}
Thus the average power ratio $\Delta$ can be rewritten as
\begin{equation}
\Delta = \frac{\mathbb{E}\left[\left\|\boldsymbol{H}_{e} \boldsymbol{b}^q\right\|^2\right]}{\mathbb{E}\left[\left\|\boldsymbol{H}_{e} \boldsymbol{b}^{op}\right\|^2\right]},
\end{equation}
where $\boldsymbol{b}^{op}$ is the optimal solution of (PC1) and $\boldsymbol{b}^{q} = \mathrm{Proj}_{\mathcal{F}}(\boldsymbol{b}^{op})$.

We now construct the following optimization problem:
\begin{alignat}{2}
\textrm{(PC2):} \quad \max_{\boldsymbol{b}} \quad & 
\left|\boldsymbol{l}^H \boldsymbol{b}\right|^2 & \\
\mbox{s.t.}\quad
&\left|[\boldsymbol{b}]_{n}\right|=1, n=1,\dots,N,
\end{alignat}
where $\boldsymbol{l} = \boldsymbol{H}_e^H \boldsymbol{H}_e \boldsymbol{b}^{op} / \| \boldsymbol{H}_e^H \boldsymbol{H}_e \boldsymbol{b}^{op}\|$. We have the following result.
\begin{lemma}
	As $N \to \infty$, $\left|\boldsymbol{l}^H \boldsymbol{b}\right|^2 \stackrel{a.s.}{\leq} \left\| \boldsymbol{H}_e \boldsymbol{b}\right\|^2$, where the equality holds for $\boldsymbol{b} = \boldsymbol{b}^{op}$.
\end{lemma}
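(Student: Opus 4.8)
The plan is to recognize the claimed inequality as a Cauchy--Schwarz statement that becomes tight precisely because the Gram matrix $\boldsymbol{M}\triangleq\boldsymbol{H}_e^H\boldsymbol{H}_e$ degenerates into a scaled orthogonal projection as $N\to\infty$. Writing the objective as $\|\boldsymbol{H}_e\boldsymbol{b}\|^2=\boldsymbol{b}^H\boldsymbol{M}\boldsymbol{b}$ and noting that $\boldsymbol{l}=\boldsymbol{M}\boldsymbol{b}^{op}/\|\boldsymbol{M}\boldsymbol{b}^{op}\|$ lies in the range of $\boldsymbol{M}$, the whole statement reduces to the following picture: if $\boldsymbol{M}$ behaves like an orthogonal projector $\boldsymbol{P}$, then $|\boldsymbol{l}^H\boldsymbol{b}|^2$ is merely the energy of $\boldsymbol{b}$ along one unit direction inside $\mathrm{range}(\boldsymbol{P})$, which cannot exceed the total projected energy $\|\boldsymbol{P}\boldsymbol{b}\|^2=\boldsymbol{b}^H\boldsymbol{M}\boldsymbol{b}$, and this bound is attained exactly along the direction $\boldsymbol{l}$ itself.

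First I would establish the degeneracy of $\boldsymbol{M}$. Since $\mathrm{diag}(\boldsymbol{e}_r)$ is unitary, $\boldsymbol{H}_e=\boldsymbol{H}_r\,\mathrm{diag}(\boldsymbol{e}_r)$ has the same law as $\boldsymbol{H}_r$, i.e.\ i.i.d.\ $\mathcal{CN}(0,1/N)$ entries. Each entry of the $K\times K$ matrix $\boldsymbol{H}_e\boldsymbol{H}_e^H$ is then an average of $N$ i.i.d.\ terms, so by the strong law of large numbers $\boldsymbol{H}_e\boldsymbol{H}_e^H\to\boldsymbol{I}_K$ almost surely, entrywise and hence in spectral norm since $K$ is fixed. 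Consequently $\|\boldsymbol{M}^2-\boldsymbol{M}\|=\|\boldsymbol{H}_e^H(\boldsymbol{H}_e\boldsymbol{H}_e^H-\boldsymbol{I}_K)\boldsymbol{H}_e\|\le\|\boldsymbol{H}_e\|^2\,\|\boldsymbol{H}_e\boldsymbol{H}_e^H-\boldsymbol{I}_K\|\to0$ almost surely, where $\|\boldsymbol{H}_e\|^2=\|\boldsymbol{H}_e\boldsymbol{H}_e^H\|\to1$. Thus $\boldsymbol{M}$ is asymptotically idempotent and approaches the orthogonal projection $\boldsymbol{P}$ onto the $K$-dimensional row space of $\boldsymbol{H}_e$; equivalently, the $K$ nonzero eigenvalues of $\boldsymbol{M}$ all converge to $1$.

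Given this, the inequality is immediate from Cauchy--Schwarz: because $\boldsymbol{l}\in\mathrm{range}(\boldsymbol{M})$ we have $\boldsymbol{l}=\boldsymbol{P}\boldsymbol{l}$, whence $|\boldsymbol{l}^H\boldsymbol{b}|^2=|\boldsymbol{l}^H\boldsymbol{P}\boldsymbol{b}|^2\le\|\boldsymbol{l}\|^2\,\|\boldsymbol{P}\boldsymbol{b}\|^2=\|\boldsymbol{P}\boldsymbol{b}\|^2=\boldsymbol{b}^H\boldsymbol{M}\boldsymbol{b}=\|\boldsymbol{H}_e\boldsymbol{b}\|^2$ in the limit. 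For the equality case at $\boldsymbol{b}=\boldsymbol{b}^{op}$, I would observe that $\boldsymbol{P}\boldsymbol{b}^{op}$ is by construction parallel to $\boldsymbol{l}$, so the Cauchy--Schwarz step is tight there; written out, $|\boldsymbol{l}^H\boldsymbol{b}^{op}|^2=\big((\boldsymbol{b}^{op})^H\boldsymbol{M}\boldsymbol{b}^{op}\big)^2\big/(\boldsymbol{b}^{op})^H\boldsymbol{M}^2\boldsymbol{b}^{op}$, and this collapses to $(\boldsymbol{b}^{op})^H\boldsymbol{M}\boldsymbol{b}^{op}=\|\boldsymbol{H}_e\boldsymbol{b}^{op}\|^2$ once $\boldsymbol{M}^2$ and $\boldsymbol{M}$ coincide in the limit.

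The main obstacle is that $\boldsymbol{b}^{op}$ is itself random and $N$-dependent with $\|\boldsymbol{b}^{op}\|^2=N$ growing, so the passage from $\|\boldsymbol{M}^2-\boldsymbol{M}\|\to0$ to the ratio $(\boldsymbol{b}^{op})^H\boldsymbol{M}^2\boldsymbol{b}^{op}\big/(\boldsymbol{b}^{op})^H\boldsymbol{M}\boldsymbol{b}^{op}\to1$ must be justified uniformly rather than pointwise. I would handle this spectrally: expanding $\boldsymbol{b}^{op}$ in the eigenbasis of $\boldsymbol{M}$, the two quadratic forms differ only through the factors $\lambda_i^2$ versus $\lambda_i$ over the $K$ nonzero eigenvalues, so that $\big|(\boldsymbol{b}^{op})^H(\boldsymbol{M}^2-\boldsymbol{M})\boldsymbol{b}^{op}\big|\big/(\boldsymbol{b}^{op})^H\boldsymbol{M}\boldsymbol{b}^{op}\le\max_i|\lambda_i^2-\lambda_i|\big/\min_i\lambda_i\to0$ almost surely, independently of the (random) eigen-weights. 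This uniform eigenvalue control is exactly what upgrades the heuristic ``$\boldsymbol{M}$ is a projection'' to the rigorous asymptotic inequality together with its tightness at $\boldsymbol{b}^{op}$.
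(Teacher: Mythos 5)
Your proposal is correct and rests on exactly the same two ingredients as the paper's proof: the Cauchy--Schwarz inequality applied to $\langle \boldsymbol{H}_e\boldsymbol{b}^{op},\boldsymbol{H}_e\boldsymbol{b}\rangle$ (equivalently, to $\boldsymbol{l}^H\boldsymbol{P}\boldsymbol{b}$ in your projection language) together with the strong-law limit $\boldsymbol{H}_e\boldsymbol{H}_e^H\to\boldsymbol{I}$, which in the paper appears as the prefactor $\|\boldsymbol{H}_e\boldsymbol{b}^{op}\|^2/\|\boldsymbol{H}_e^H\boldsymbol{H}_e\boldsymbol{b}^{op}\|^2\to 1$ and in your write-up as the asymptotic idempotence of $\boldsymbol{M}$. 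Your final paragraph makes the uniformity of that limit over the $N$-dependent random $\boldsymbol{b}^{op}$ more explicit than the paper does, but the route is the same.
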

\begin{proof}
	First, we have 
	\begin{equation}
	\begin{aligned}
	\left|\boldsymbol{l}^H \boldsymbol{b}\right|^2 \!=\! \left|\frac{(\boldsymbol{b}^{op})^H \boldsymbol{H}_e^H}{\| \boldsymbol{H}_e^H \boldsymbol{H}_e \boldsymbol{b}^{op}\|}\boldsymbol{H}_e \boldsymbol{b}\right|^2
	\!\stackrel{(a)}{\leq}\! \frac{ \left\|\boldsymbol{H}_e\boldsymbol{b}^{op}\right\|^2}{\| \boldsymbol{H}_e^H \boldsymbol{H}_e \boldsymbol{b}^{op}\|^2}\| \boldsymbol{H}_e \boldsymbol{b}\|^2,
	\end{aligned}
	\end{equation}
	where step ($a$) is obtained by the Cauchy–Schwarz inequality and the equality holds for $\boldsymbol{b} = \boldsymbol{b}^{op}$.
	
	According to the strong law of large numbers, we have the following element-wise convergence result:
	\begin{equation}
	\begin{aligned}
	\boldsymbol{H}_e \boldsymbol{H}_e^H \xrightarrow[N \to \infty]{a.s.} \boldsymbol{I}_M.
	\end{aligned}
	\end{equation}
	Thus we obtain the following result:
	\begin{equation}
	\begin{aligned}
	\frac{ \left\|\boldsymbol{H}_e\boldsymbol{b}^{op}\right\|^2}{\| \boldsymbol{H}_e^H \boldsymbol{H}_e \boldsymbol{b}^{op}\|^2} = \frac{ (\boldsymbol{b}^{op})^H\boldsymbol{H}_e^H\boldsymbol{H}_e\boldsymbol{b}^{op}}{ (\boldsymbol{b}^{op})^H\boldsymbol{H}_e^H(\boldsymbol{H}_e\boldsymbol{H}_e^H)\boldsymbol{H}_e\boldsymbol{b}^{op}} \xrightarrow[N \to \infty]{a.s.} 1.
	\end{aligned}
	\end{equation}
	Combined with (103), as $N \to \infty$, we have
	\begin{equation}
	\begin{aligned}
	\left|\boldsymbol{l}^H \boldsymbol{b}\right|^2 \stackrel{a.s.}{\leq} \left\| \boldsymbol{H}_e \boldsymbol{b}\right\|^2,
	\end{aligned}
	\end{equation}
	where the equality holds for $\boldsymbol{b} = \boldsymbol{b}^{op}$. 
	
	This completes the proof.
\end{proof}

\begin{corollary}
	As $N \to \infty$, $\boldsymbol{b}^{op}$ is the optimal solution of (PC2).
\end{corollary}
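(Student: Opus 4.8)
The plan is to derive this corollary as an immediate consequence of Lemma 1 together with the defining optimality of $\boldsymbol{b}^{op}$ for (PC1). The key observation is that (PC1) and (PC2) share \emph{exactly} the same feasible set, namely the unit-modulus constraint $|[\boldsymbol{b}]_n|=1$ for $n=1,\dots,N$. Hence $\boldsymbol{b}^{op}$ is itself a feasible point of (PC2), and it suffices to show that $\boldsymbol{b}^{op}$ attains the largest value of the (PC2) objective $|\boldsymbol{l}^H\boldsymbol{b}|^2$ among all feasible points.

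First I would fix an arbitrary feasible $\boldsymbol{b}$ (so $|[\boldsymbol{b}]_n|=1$ for all $n$) and assemble a chain of inequalities valid as $N\to\infty$. By Lemma 1 we have $|\boldsymbol{l}^H\boldsymbol{b}|^2 \stackrel{a.s.}{\leq} \|\boldsymbol{H}_e\boldsymbol{b}\|^2$. Next, since $\boldsymbol{b}^{op}$ is by definition the maximizer of $\|\boldsymbol{H}_e\boldsymbol{b}\|^2$ over the same constraint set, we have $\|\boldsymbol{H}_e\boldsymbol{b}\|^2 \leq \|\boldsymbol{H}_e\boldsymbol{b}^{op}\|^2$. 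Finally, invoking the equality clause of Lemma 1 at the point $\boldsymbol{b}=\boldsymbol{b}^{op}$ gives $\|\boldsymbol{H}_e\boldsymbol{b}^{op}\|^2 = |\boldsymbol{l}^H\boldsymbol{b}^{op}|^2$. Concatenating these three steps yields, as $N\to\infty$,
\begin{equation}
\left|\boldsymbol{l}^H\boldsymbol{b}\right|^2 \stackrel{a.s.}{\leq} \left\|\boldsymbol{H}_e\boldsymbol{b}\right\|^2 \leq \left\|\boldsymbol{H}_e\boldsymbol{b}^{op}\right\|^2 = \left|\boldsymbol{l}^H\boldsymbol{b}^{op}\right|^2.
\end{equation}

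Since $\boldsymbol{b}$ was an arbitrary feasible point, this shows that $\boldsymbol{b}^{op}$ achieves the maximum of the (PC2) objective over the feasible set, i.e.\ $\boldsymbol{b}^{op}$ is the optimal solution of (PC2) in the limit $N\to\infty$, which is exactly the claim. There is no genuine analytical obstacle here: the entire content of the argument has already been packed into Lemma 1 (the Cauchy–Schwarz bound together with the asymptotic normalization $\boldsymbol{H}_e\boldsymbol{H}_e^H \to \boldsymbol{I}_M$), so the corollary is purely a bookkeeping step. The only point worth stating explicitly in the writeup is that the two feasibility sets coincide, so that $\boldsymbol{b}^{op}$ is admissible for (PC2) and the optimality of $\boldsymbol{b}^{op}$ for (PC1) can be transferred; this is what converts the pointwise inequality of Lemma 1 into a global optimality statement for (PC2).
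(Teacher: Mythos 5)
Your proposal is correct and uses exactly the same ingredients as the paper's own proof (the inequality and equality clauses of Lemma 1 plus the optimality of $\boldsymbol{b}^{op}$ for (PC1)); the paper merely arranges them as a sandwich around the (PC2)-optimizer $\tilde{\boldsymbol{b}}$ rather than bounding an arbitrary feasible point directly. Your arrangement is, if anything, slightly more direct, but it is essentially the same argument.
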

\begin{proof}
	Let $\tilde{\boldsymbol{b}}$ be the optimal solution to (PC2). As $N \to \infty$, we obtain the following result:
	\begin{equation}
	\begin{cases}
	\left|\boldsymbol{l}^H \tilde{\boldsymbol{b}}\right|^2 \stackrel{(a)}{\geq} \left|\boldsymbol{l}^H \boldsymbol{b}^{op}\right|^2 \stackrel{(b)}{=} \| \boldsymbol{H}_e \boldsymbol{b}^{op}\|^2 \stackrel{(c)}{\geq} \| \boldsymbol{H}_e \tilde{\boldsymbol{b}}\|^2,\\
	\left|\boldsymbol{l}^H \tilde{\boldsymbol{b}}\right|^2 \stackrel{(d)}{\leq} \| \boldsymbol{H}_e \tilde{\boldsymbol{b}}\|^2.
	\end{cases}
	\end{equation}
	where step ($a$) and ($c$) are obtained from the fact that $\tilde{\boldsymbol{b}}$ and $\boldsymbol{b}^{op}$ are respectively the optimal solution to (PC2) and (PC1), and step ($b$) and ($d$) are obtained by Lemma 1. Then we have $|\boldsymbol{l}^H \tilde{\boldsymbol{b}}|^2 = |\boldsymbol{l}^H \boldsymbol{b}^{op}|^2$ as $N \to \infty$, which completes the proof.

\end{proof}

Let 
\begin{equation}
\tilde{\Delta} =  \frac{\mathbb{E}\left[\left|\boldsymbol{l}^H \boldsymbol{b}^q\right|^2\right]}{\mathbb{E}\left[\left|\boldsymbol{l}^H \boldsymbol{b}^{op}\right|^2\right]}.
\end{equation}
Then we have the following result.

\begin{corollary}
	As $N \to \infty$, $\Delta \geq \tilde{\Delta}$.
\end{corollary}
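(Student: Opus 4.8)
The plan is to read off the claim from Lemma~1 by comparing the numerators and denominators of $\Delta$ and $\tilde{\Delta}$ one at a time, exploiting the fact that Lemma~1 provides an inequality for \emph{every} feasible unit-modulus vector but an \emph{equality} exactly at $\boldsymbol{b}^{op}$.

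First I would handle the numerators. Since $\boldsymbol{b}^q=\mathrm{Proj}_{\mathcal{F}}(\boldsymbol{b}^{op})$ has unit-modulus entries, it is feasible for (PC1) and (PC2), so Lemma~1 applies to it and gives, almost surely as $N\to\infty$, $|\boldsymbol{l}^H\boldsymbol{b}^q|^2\le\|\boldsymbol{H}_e\boldsymbol{b}^q\|^2$; taking expectations yields $\mathbb{E}[|\boldsymbol{l}^H\boldsymbol{b}^q|^2]\le\mathbb{E}[\|\boldsymbol{H}_e\boldsymbol{b}^q\|^2]$. Next I would use the equality branch of Lemma~1 at $\boldsymbol{b}=\boldsymbol{b}^{op}$, which gives $|\boldsymbol{l}^H\boldsymbol{b}^{op}|^2=\|\boldsymbol{H}_e\boldsymbol{b}^{op}\|^2$ almost surely as $N\to\infty$ and hence $\mathbb{E}[|\boldsymbol{l}^H\boldsymbol{b}^{op}|^2]=\mathbb{E}[\|\boldsymbol{H}_e\boldsymbol{b}^{op}\|^2]$. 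Combining the two, the denominators of $\Delta$ and $\tilde{\Delta}$ coincide while the numerator of $\Delta$ dominates that of $\tilde{\Delta}$, so
\begin{equation}
\Delta=\frac{\mathbb{E}\left[\|\boldsymbol{H}_e\boldsymbol{b}^q\|^2\right]}{\mathbb{E}\left[\|\boldsymbol{H}_e\boldsymbol{b}^{op}\|^2\right]}\ge\frac{\mathbb{E}\left[|\boldsymbol{l}^H\boldsymbol{b}^q|^2\right]}{\mathbb{E}\left[|\boldsymbol{l}^H\boldsymbol{b}^{op}|^2\right]}=\tilde{\Delta},
\end{equation}
which is the desired conclusion.

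The hard part will be the interchange of the almost-sure limit with the expectation: Lemma~1 supplies an almost-sure statement valid only as $N\to\infty$, whereas $\Delta$ and $\tilde{\Delta}$ are ratios of expectations. I would justify the passage by recalling the concentration result $\boldsymbol{H}_e\boldsymbol{H}_e^H\to\boldsymbol{I}_M$ (almost surely) established in the proof of Lemma~1, which drives the ratio $\|\boldsymbol{H}_e\boldsymbol{b}^{op}\|^2/\|\boldsymbol{H}_e^H\boldsymbol{H}_e\boldsymbol{b}^{op}\|^2$ to $1$, and then appealing to nonnegativity of the quantities involved together with a suitable convergence theorem (monotone or dominated convergence) to move the limit inside the expectations; an auxiliary check that $\boldsymbol{b}^q$ is genuinely feasible, so that Lemma~1 applies to it verbatim, completes the argument.
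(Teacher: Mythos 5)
Your proposal is correct and follows essentially the same route as the paper: apply Lemma~1 to the feasible point $\boldsymbol{b}^q$ to compare the numerators, use the equality case at $\boldsymbol{b}^{op}$ to identify the denominators, and take expectations to conclude $\Delta \geq \tilde{\Delta}$. Your additional remarks on the feasibility of $\boldsymbol{b}^q$ and on interchanging the almost-sure limit with the expectation make explicit steps the paper leaves implicit, but do not change the argument.
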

\begin{proof}
	From Lemma 1, as $N \to \infty$, $\mathbb{E}\left[\left\|\boldsymbol{H}_{e} \boldsymbol{b}^q\right\|^2\right] \geq \mathbb{E}\left[\left|\boldsymbol{l}^H \boldsymbol{b}^q\right|^2\right]$ and $\mathbb{E}\left[\left\|\boldsymbol{H}_{e} \boldsymbol{b}^{op}\right\|^2\right] = \mathbb{E}\left[\left|\boldsymbol{l}^H \boldsymbol{b}^{op}\right|^2\right]$.
	Thus we have
	\begin{equation}
	\Delta = \frac{\mathbb{E}\left[\left\|\boldsymbol{H}_{e} \boldsymbol{b}^q\right\|^2\right]}{\mathbb{E}\left[\left\|\boldsymbol{H}_{e} \boldsymbol{b}^{op}\right\|^2\right]} \geq  \frac{\mathbb{E}\left[\left|\boldsymbol{l}^H \boldsymbol{b}^q\right|^2\right]}{\mathbb{E}\left[\left|\boldsymbol{l}^H \boldsymbol{b}^{op}\right|^2\right]} = \tilde{\Delta}.
	\end{equation}
\end{proof}

To prove Proposition 2, we need the following lemma:
%Given Corollary 2, in order to obtain Proposition 2, we only need to prove the following lemma:
\begin{lemma}
	As $N \to \infty$, $\tilde{\Delta} \to \frac{2^{2b}}{\pi^2}sin^2\left(\frac{\pi}{2^b}\right)$. 
\end{lemma}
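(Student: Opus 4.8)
The plan is to use Corollary~1 to replace the hard combinatorial problem (PC1) by the rank-one surrogate (PC2), whose optimizer has an explicit phase-matched form, and then to average the quantization error analytically so that $\tilde\Delta$ collapses to a ratio controlled by a single scalar gain $\mathbb{E}[\|\boldsymbol{l}\|_1^2]$.

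First I would note that, by Corollary~1, as $N\to\infty$ the vector $\boldsymbol{b}^{op}$ is optimal for (PC2). Since (PC2) maximizes $|\boldsymbol{l}^H\boldsymbol{b}|^2$ over unit-modulus $\boldsymbol{b}$, its maximizer co-phases each entry, i.e. $[\boldsymbol{b}^{op}]_n=e^{j\arg([\boldsymbol{l}]_n)}$, which yields $|\boldsymbol{l}^H\boldsymbol{b}^{op}|^2=\|\boldsymbol{l}\|_1^2$ and hence $\mathbb{E}[|\boldsymbol{l}^H\boldsymbol{b}^{op}|^2]=\mathbb{E}[\|\boldsymbol{l}\|_1^2]$, the denominator of $\tilde\Delta$. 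Writing $g_n=|[\boldsymbol{l}]_n|$ and $\theta_n=\arg([\boldsymbol{l}]_n)$, the quantized vector satisfies $[\boldsymbol{b}^q]_n=e^{j\hat\theta_n}$, where $\hat\theta_n$ is the grid point of $\mathcal{F}$ nearest to $\theta_n$, so that $\boldsymbol{l}^H\boldsymbol{b}^q=\sum_{n}g_n e^{j\delta_n}$ with quantization error $\delta_n=\hat\theta_n-\theta_n\in(-\pi/2^b,\pi/2^b]$.

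The key probabilistic input is that, because the entries of $\boldsymbol{H}_e$ are circularly symmetric complex Gaussian, the phases $\{\theta_n\}$ are (asymptotically) i.i.d.\ uniform on $[-\pi,\pi)$ and independent of the magnitudes $\{g_n\}$; consequently the errors $\{\delta_n\}$ are i.i.d.\ $\mathrm{Unif}(-\pi/2^b,\pi/2^b]$ and independent of $\{g_n\}$. I would then expand the numerator of $\tilde\Delta$ into its diagonal and off-diagonal parts, \[ |\boldsymbol{l}^H\boldsymbol{b}^q|^2=\sum_n g_n^2+\sum_{n\neq m}g_n g_m e^{j(\delta_n-\delta_m)}, \] and average. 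The diagonal contributes $\mathbb{E}[\sum_n g_n^2]=\mathbb{E}[\|\boldsymbol{l}\|_2^2]=1$, while each off-diagonal term factors as $\mathbb{E}[g_n g_m]\,\mathbb{E}[e^{j\delta_n}]\mathbb{E}[e^{-j\delta_m}]=c^2\,\mathbb{E}[g_n g_m]$ with the real constant $c=\mathbb{E}[e^{j\delta}]=\frac{2^b}{\pi}\sin(\pi/2^b)$. Summing gives $\mathbb{E}[|\boldsymbol{l}^H\boldsymbol{b}^q|^2]=(1-c^2)+c^2\,\mathbb{E}[\|\boldsymbol{l}\|_1^2]$, so that \[ \tilde\Delta=\frac{(1-c^2)+c^2\,\mathbb{E}[\|\boldsymbol{l}\|_1^2]}{\mathbb{E}[\|\boldsymbol{l}\|_1^2]}. \]

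It then remains to show $\mathbb{E}[\|\boldsymbol{l}\|_1^2]\to\infty$, which forces the first term to vanish and leaves $\tilde\Delta\to c^2=\frac{2^{2b}}{\pi^2}\sin^2(\pi/2^b)$, as claimed. Since $\boldsymbol{l}$ is a unit vector whose mass is spread over $N$ coordinates of magnitude $\Theta(1/\sqrt{N})$, $\|\boldsymbol{l}\|_1$ scales like $\sqrt{N}$ and $\mathbb{E}[\|\boldsymbol{l}\|_1^2]$ grows linearly in $N$, exactly as the analogous quantity $\mathbb{E}[\|\boldsymbol{v}_1\|_1^2]=1+(N-1)\pi/4$ does in the proof of Proposition~1. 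The hard part will be the rigorous justification of the two asymptotic statements that make everything collapse: (i) that the coupling between $\boldsymbol{b}^{op}$ and $\boldsymbol{l}$ does not destroy the uniformity and independence of the phase errors $\{\delta_n\}$, and (ii) the divergence $\mathbb{E}[\|\boldsymbol{l}\|_1^2]\to\infty$; both rest on the convergence $\boldsymbol{H}_e\boldsymbol{H}_e^H\to\boldsymbol{I}_M$ and the circular symmetry of the Gaussian channel already exploited in Lemma~1.
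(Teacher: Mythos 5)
Your proposal is correct and follows essentially the same route as the paper's proof: invoke Corollary~1 to identify $\boldsymbol{b}^{op}$ with the co-phasing maximizer of (PC2), split $|\boldsymbol{l}^H\boldsymbol{b}^q|^2$ into diagonal and cross terms, use the i.i.d.\ uniform quantization errors to produce the factor $\frac{2^{2b}}{\pi^2}\sin^2(\pi/2^b)$ on the cross terms, and let the cross terms dominate as $N\to\infty$. Your bookkeeping via $\|\boldsymbol{l}\|_2^2=1$ and $\mathbb{E}[\|\boldsymbol{l}\|_1^2]=\Theta(N)$ is a slightly cleaner way of stating the paper's order-of-magnitude comparison, and you flag the same unaddressed subtlety (the coupling between $\boldsymbol{l}$ and $\boldsymbol{b}^{op}$) that the paper also glosses over.
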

\begin{proof}
%	 First, we show that as $N \to \infty$, the phases of the elements in random vector $\boldsymbol{l}$ are independently and uniformly distributed, and are independent of the amplitudes.
	 We start with the distribution of random vector $\boldsymbol{l}$.	Recall that $\boldsymbol{H}_e = \boldsymbol{H}_{r} \textrm{diag}\left(\boldsymbol{e}_r\right)$. Since $\boldsymbol{e}_r$ is a constant vector and the module of each element of $\boldsymbol{e}_r$ is 1, $\boldsymbol{H}_e$ has the same statistical properties as $\boldsymbol{H_r}$, that is, the elements of $\sqrt{N} \boldsymbol{H}_e$ are independently drawn from a common distribution $\mathcal{CN}\left(0,1\right)$. Thus we have $\sqrt{N}\boldsymbol{h}_{m} \sim \mathcal{CN}\left(\boldsymbol{0},\boldsymbol{I}_M\right)$, where $\boldsymbol{h}_m$ is the $m$-th column of $\boldsymbol{H}_e^H$.
	 Note that $\boldsymbol{l}$ can be written as $\boldsymbol{l} =\sum_{m=1}^M \alpha_m\boldsymbol{h}_m$, where $\alpha_m = \boldsymbol{h}_m^H \boldsymbol{b}^{op}/ \| \sum_{m=1}^M \boldsymbol{h}_m^H \boldsymbol{b}^{op} \boldsymbol{h}_m\|$. Thus we have $\sqrt{N}\boldsymbol{l}\sim \mathcal{CN}\left(\boldsymbol{0},\sum_{m=1}^M |\alpha_m|^2\boldsymbol{I}_M\right)$. Moreover, from (105), it is not difficult to show that, as $N \to \infty$, $\sum_{m=1}^M |\alpha_m|^2= \left\|\boldsymbol{H}_e\boldsymbol{b}^{op}\right\|^2/\| \boldsymbol{H}_e^H \boldsymbol{H}_e \boldsymbol{b}^{op}\|^2\xrightarrow[]{a.s.} 1$.

	  Next, we expand $\mathbb{E}\left[\left|\boldsymbol{l}^H \boldsymbol{b}^{op}\right|^2\right]$ and $\mathbb{E}\left[\left|\boldsymbol{l}^H \boldsymbol{b}^{q}\right|^2\right]$. Let $\varphi_n^{op} = \textrm{arg}([\boldsymbol{b}^{op}]_n)$, $\varphi_n^{q} = \textrm{arg}([\boldsymbol{b}^q]_n)$, $\chi_{n} = \textrm{arg}([\boldsymbol{l}]_n)$ and $l_n = \left|[\boldsymbol{l}]_n\right|$. Note that, from Corollary 1, $\boldsymbol{b}^{op}$ is the optimal solution of (PC2) as $N \to \infty$, and by inspection, the optimal solution of (PC2) is $\textrm{exp}(j\textrm{arg}(\boldsymbol{l}))$. Thus, as $N \to \infty$, we have $\varphi_n^{op}=\textrm{arg}([\boldsymbol{b}^{op}]_n)= \textrm{arg}([\boldsymbol{l}]_n)= \chi_{n}$, $\forall n$.
	  Then
	  \begin{equation}
	  \begin{aligned} 
	  &\ \ \ \ \mathbb{E}\left[\left|\boldsymbol{l}^H \boldsymbol{b}^{op}\right|^2\right]\\
	  &=\mathbb{E}\left[\sum_{n=1}^N l_n^2   +  \sum_{n=1}^N l_n  \sum_{i=1, i \neq n}^{N} l_i e^{j\left(\varphi_n^{op} - \chi_{n} - (\varphi_i^{op} - \chi_{i})\right)}\right]\\
	  &=\sum_{n=1}^N \mathbb{E}\left[l_{n}^2\right] \\
	  &\ \ \ + 2 \sum_{n=1}^{N-1} \sum_{i=n+1}^{N} \mathbb{E}\left[l_{n}l_{i}\right] \mathbb{E}\left[ cos\left(\varphi_n^{op}- \chi_{n} - (\varphi_i^{op}-\chi_{i})\right)\right]\\
      &\xrightarrow[N \to \infty]{(a)}\sum_{n=1}^N \mathbb{E}\left[l_{n}^2\right] + 2 \sum_{n=1}^{N-1} \sum_{i=n+1}^{N} \mathbb{E}\left[l_{n}l_{i}\right],
	  \end{aligned}
	  \end{equation}
	  where step ($a$) is obtained by the fact that, as $N \to \infty$, $\varphi_n^{op} = \chi_{n}$, $\forall n$.
	  
	  Similarly, we have
	  \begin{equation}
	  \begin{aligned} 
	  \mathbb{E}\left[\left|\boldsymbol{l}^H \boldsymbol{b}^{q}\right|^2\right]
	  \xrightarrow{N \to \infty}&\sum_{n=1}^N \mathbb{E}\left[l_{n}^2\right]\\
	  & + 2 \sum_{n=1}^{N-1} \sum_{i=n+1}^{N} \mathbb{E}\left[l_{n}l_{i}\right] \mathbb{E}\left[cos\left(e_n - e_i\right)\right].
	  \end{aligned}
	  \end{equation}
	  where quantization error $e_n = \varphi_n^q - \varphi_n^{op}$, $\forall n$. 
	  
	  Since $\varphi_n^{op} = \chi_{n}\sim \mathcal{U}(0, 2\pi)$ and $\varphi_n^{q}$ is the result of uniform quantization of $\varphi_n^{op}$, we have $e_n \sim \mathcal{U}\left(-\pi/2^b,\pi/2^b\right)$, $\forall n$. Since $\varphi_n^{op}$ is independent of $\varphi_i^{op}$, we have $e_n$ is independent of $e_i$, $\forall n \neq i$. As a result,
	  \begin{equation}
	  \begin{aligned} 
	  &\mathbb{E}\left[ cos\left(e_n-e_i\right)\right]\\
	  =&\mathbb{E}_{e_i}\left[\mathbb{E}_{e_n}\left[ cos\left(e_n-e_i | e_i\right)\right]\right]\\
	  =&\int_{-\frac{\pi}{2^b}}^{\frac{\pi}{2^b}}\frac{2^b}{2\pi}\int_{-\frac{\pi}{2^b}}^{\frac{\pi}{2^b}}\frac{2^b}{2\pi}cos\left(r_n-r_i\right)d r_n d r_i\\
	  =&\frac{2^{2b}}{\pi^2}sin^2\left(\frac{\pi}{2^b}\right).
	  \end{aligned}
	  \end{equation}
	  Thus, combining (110), (111) and (112), we have
	  \begin{equation}
	  \begin{aligned}
	  \tilde{\Delta}&\!=\!\frac{\sum_{n=1}^N \mathbb{E}\left[l_{n}^2\right] \!+\! 2 \sum_{n=1}^{N-1}\! \sum_{i=n+1}^{N} \mathbb{E}\left[l_{n}l_{i}\right] \mathbb{E}\left[cos\left(e_n \!-\! e_i\right)\right]}
	  {\sum_{n=1}^N \mathbb{E}\left[l_{n}^2\right] + 2 \sum_{n=1}^{N-1} \sum_{i=n+1}^{N} \mathbb{E}\left[l_{n}l_{i}\right]}\\
	  =&\frac{\sum_{n=1}^N \mathbb{E}\left[l_{n}^2\right] \!+\! 2 \sum_{n=1}^{N-1}\! \sum_{i=n+1}^{N} \mathbb{E}\left[l_{n}l_{i}\right] \frac{2^{2b}}{\pi^2}sin^2\left(\frac{\pi}{2^b}\right)}
	  {\sum_{n=1}^N \mathbb{E}\left[l_{n}^2\right] + 2 \sum_{n=1}^{N-1} \sum_{i=n+1}^{N} \mathbb{E}\left[l_{n}l_{i}\right]}.
	  \end{aligned}
	  \end{equation}
	  As $N \to \infty$, $\sum_{n=1}^N \mathbb{E}\left[l_{n}^2\right] = \mathcal{O}(N)$ and $\sum_{n=1}^{N-1}\! \sum_{i=n+1}^{N} \mathbb{E}\left[l_{n}l_{i}\right] = \mathcal{O}(N^2)$. Then
	  \begin{equation}
	  \tilde{\Delta} \to \frac{2^{2b}}{\pi^2}sin^2\left(\frac{\pi}{2^b}\right).
	  \end{equation}

\end{proof}

Combining Lemma 2 and Corollary 2, we finally obtain that, as $N \to \infty$, $\Delta \geq \frac{2^{2b}}{\pi^2}sin^2\left(\frac{\pi}{2^b}\right)$,
which completes the proof.

\ifCLASSOPTIONcaptionsoff
  \newpage
\fi

% trigger a \newpage just before the given reference
% number - used to balance the columns on the last page
% adjust value as needed - may need to be readjusted if
% the document is modified later
%\IEEEtriggeratref{8}
% The "triggered" command can be changed if desired:
%\IEEEtriggercmd{\enlargethispage{-5in}}

% references section

% can use a bibliography generated by BibTeX as a .bbl file
% BibTeX documentation can be easily obtained at:
% http://mirror.ctan.org/biblio/bibtex/contrib/doc/
% The IEEEtran BibTeX style support page is at:
% http://www.michaelshell.org/tex/ieeetran/bibtex/
%\bibliographystyle{IEEEtran}
% argument is your BibTeX string definitions and bibliography database(s)
%\bibliography{IEEEabrv,../bib/paper}
%
% <OR> manually copy in the resultant .bbl file
% set second argument of \begin to the number of references
% (used to reserve space for the reference number labels box)

\bibliographystyle{IEEEtran}

\bibliography{ref.bib}

% Generated by IEEEtran.bst, version: 1.14 (2015/08/26)
\begin{thebibliography}{10}
\providecommand{\url}[1]{#1}
\csname url@samestyle\endcsname
\providecommand{\newblock}{\relax}
\providecommand{\bibinfo}[2]{#2}
\providecommand{\BIBentrySTDinterwordspacing}{\spaceskip=0pt\relax}
\providecommand{\BIBentryALTinterwordstretchfactor}{4}
\providecommand{\BIBentryALTinterwordspacing}{\spaceskip=\fontdimen2\font plus
\BIBentryALTinterwordstretchfactor\fontdimen3\font minus
  \fontdimen4\font\relax}
\providecommand{\BIBforeignlanguage}[2]{{%
\expandafter\ifx\csname l@#1\endcsname\relax
\typeout{** WARNING: IEEEtran.bst: No hyphenation pattern has been}%
\typeout{** loaded for the language `#1'. Using the pattern for}%
\typeout{** the default language instead.}%
\else
\language=\csname l@#1\endcsname
\fi
#2}}
\providecommand{\BIBdecl}{\relax}
\BIBdecl

\bibitem{yang2020reconfigurable}
H.~{Yang}, X.~{Yuan}, J.~{Fang}, and Y.~{Liang}, ``Reconfigurable intelligent
  surface aided constant-envelope wireless power transfer,'' [Online]
  Available: https://arxiv.org/abs/2006.01696.

\bibitem{bi2015wireless}
S.~{Bi}, C.~K. {Ho}, and R.~{Zhang}, ``Wireless powered communication:
  Opportunities and challenges,'' \emph{IEEE Commun. Mag.}, vol.~53, no.~4, pp.
  117--125, Apr. 2015.

\bibitem{zeng2017communications}
Y.~{Zeng}, B.~{Clerckx}, and R.~{Zhang}, ``Communications and signals design
  for wireless power transmission,'' \emph{IEEE Trans. Commun.}, vol.~65,
  no.~5, pp. 2264--2290, Mar. 2017.

\bibitem{yang2015throughput}
G.~{Yang}, C.~K. {Ho}, R.~{Zhang}, and Y.~L. {Guan}, ``Throughput optimization
  for massive {MIMO} systems powered by wireless energy transfer,'' \emph{IEEE
  J. Sel. Areas Commun.}, vol.~33, no.~8, pp. 1640--1650, Jan. 2015.

\bibitem{wang2017beamforming}
W.~{Wang}, R.~{Wang}, H.~{Mehrpouyan}, N.~{Zhao}, and G.~{Zhang}, ``Beamforming
  for simultaneous wireless information and power transfer in two-way relay
  channels,'' \emph{IEEE Access}, vol.~5, pp. 9235--9250, May 2017.

\bibitem{xu2014multiuser}
J.~{Xu}, L.~{Liu}, and R.~{Zhang}, ``Multiuser {MISO} beamforming for
  simultaneous wireless information and power transfer,'' \emph{IEEE Trans.
  Signal Process.}, vol.~62, no.~18, pp. 4798--4810, Jul. 2014.

\bibitem{zhang2014massive}
S.~{Zhang}, R.~{Zhang}, and T.~J. {Lim}, ``Massive {MIMO} with per-antenna
  power constraint,'' in \emph{Proc. IEEE GLOBECOM}, 2014, pp. 642--646.

\bibitem{hur2013millimeter}
S.~{Hur}, T.~{Kim}, D.~J. {Love}, J.~V. {Krogmeier}, T.~A. {Thomas}, and
  A.~{Ghosh}, ``Millimeter wave beamforming for wireless backhaul and access in
  small cell networks,'' \emph{IEEE Trans. Commun.}, vol.~61, no.~10, pp.
  4391--4403, Sept. 2013.

\bibitem{zhang2017multi}
J.~{Zhang}, T.~{Wei}, X.~{Yuan}, and R.~{Zhang}, ``Multi-antenna constant
  envelope wireless power transfer,'' \emph{IEEE Trans. Green Commun. Netw.},
  vol.~1, no.~4, pp. 458--467, Aug. 2017.

\bibitem{mohammed2013per}
S.~K. {Mohammed} and E.~G. {Larsson}, ``Per-antenna constant envelope precoding
  for large multi-user {MIMO} systems,'' \emph{IEEE Trans. Commun.}, vol.~61,
  no.~3, pp. 1059--1071, Feb. 2013.

\bibitem{mancuso2011reducing}
V.~{Mancuso} and S.~{Alouf}, ``Reducing costs and pollution in cellular
  networks,'' \emph{IEEE Commun. Mag.}, vol.~49, no.~8, pp. 63--71, Aug. 2011.

\bibitem{yu2019miso}
X.~{Yu}, D.~{Xu}, and R.~{Schober}, ``{MISO} wireless communication systems via
  intelligent reflecting surfaces : (invited paper),'' in \emph{Proc. IEEE
  ICCC}, 2019, pp. 735--740.

\bibitem{liang2019large}
Y.-C. Liang, R.~Long, Q.~Zhang, J.~Chen, H.~V. Cheng, and H.~Guo, ``Large
  intelligent surface/antennas ({LISA}): Making reflective radios smart,''
  [Online] Available: https://arxiv.org/abs/1906.06578.

\bibitem{Chen2019Intelligent}
J.~{Chen}, Y.~{Liang}, Y.~{Pei}, and H.~{Guo}, ``Intelligent reflecting
  surface: A programmable wireless environment for physical layer security,''
  \emph{IEEE Access}, vol.~7, pp. 82\,599--82\,612, Jun. 2019.

\bibitem{Guo2020Weighted}
H.~{Guo}, Y.~{Liang}, J.~{Chen}, and E.~G. {Larsson}, ``Weighted sum-rate
  maximization for reconfigurable intelligent surface aided wireless
  networks,'' \emph{IEEE Trans. Wireless Commun.}, vol.~19, no.~5, pp.
  3064--3076, Feb. 2020.

\bibitem{wu2019intelligent}
Q.~{Wu} and R.~{Zhang}, ``Intelligent reflecting surface enhanced wireless
  network via joint active and passive beamforming,'' \emph{IEEE Trans.
  Wireless Commun.}, vol.~18, no.~11, pp. 5394--5409, Aug. 2019.

\bibitem{yuan2020reconfigurable}
X.~{Yuan}, Y.~{Zhang}, Y.~{Shi}, W.~{Yan}, and H.~{Liu},
  ``Reconfigurable-intelligent-surface empowered 6g wireless communications:
  Challenges and opportunities,'' [Online] Available:
  https://arxiv.org/abs/2001.00364.

\bibitem{Mishra2019Channel}
D.~{Mishra} and H.~{Johansson}, ``Channel estimation and low-complexity
  beamforming design for passive intelligent surface assisted {MISO} wireless
  energy transfer,'' in \emph{Proc. IEEE ICASSP}, 2019, pp. 4659--4663.

\bibitem{Wu2020Weighted}
Q.~{Wu} and R.~{Zhang}, ``Weighted sum power maximization for intelligent
  reflecting surface aided {SWIPT},'' \emph{IEEE Wireless Commun. Lett.},
  vol.~9, no.~5, pp. 586--590, Dec. 2020.

\bibitem{Pan2020Intelligent}
C.~{Pan}, H.~{Ren}, K.~{Wang}, M.~{Elkashlan}, A.~{Nallanathan}, J.~{Wang}, and
  L.~{Hanzo}, ``Intelligent reflecting surface aided {MIMO} broadcasting for
  simultaneous wireless information and power transfer,'' \emph{IEEE J. Sel.
  Areas Commun.}, vol.~38, no.~8, pp. 1719--1734, Jun. 2020.

\bibitem{Wu2020Joint}
Q.~{Wu} and R.~{Zhang}, ``Joint active and passive beamforming optimization for
  intelligent reflecting surface assisted {SWIPT} under {QoS} constraints,''
  \emph{IEEE J. Sel. Areas Commun.}, vol.~38, no.~8, pp. 1735--1748, Jul. 2020.

\bibitem{wu2019towards}
Q.~{Wu} and R.~{Zhang}, ``Towards smart and reconfigurable environment:
  Intelligent reflecting surface aided wireless network,'' \emph{IEEE Commun.
  Mag.}, vol.~58, no.~1, pp. 106--112, Nov. 2020.

\bibitem{He2020Cascaded}
Z.~{He} and X.~{Yuan}, ``Cascaded channel estimation for large intelligent
  metasurface assisted massive {MIMO},'' \emph{IEEE Wireless Commun. Lett.},
  vol.~9, no.~2, pp. 210--214, Oct. 2019.

\bibitem{Wang2020Channel}
Z.~{Wang}, L.~{Liu}, and S.~{Cui}, ``Channel estimation for intelligent
  reflecting surface assisted multiuser communications,'' in \emph{Proc. IEEE
  WCNC}, 2020, pp. 1--6.

\bibitem{chen2019channel}
J.~Chen, Y.-C. Liang, H.~V. Cheng, and W.~Yu, ``Channel estimation for
  reconfigurable intelligent surface aided multi-user {MIMO} systems,''
  [Online] Available: https://arxiv.org/abs/1912.03619.

\bibitem{Liu2020Matrix}
H.~{Liu}, X.~{Yuan}, and Y.~J.~A. {Zhang}, ``Matrix-calibration-based cascaded
  channel estimation for reconfigurable intelligent surface assisted multiuser
  {MIMO},'' \emph{IEEE J. Sel. Areas Commun.}, vol.~38, no.~11, pp. 2621--2636,
  Jul. 2020.

\bibitem{Luo2010Semidefinite}
Z.~{Luo}, W.~{Ma}, A.~M. {So}, Y.~{Ye}, and S.~{Zhang}, ``Semidefinite
  relaxation of quadratic optimization problems,'' \emph{IEEE Signal Process.
  Mag.}, vol.~27, no.~3, pp. 20--34, Apr. 2010.

\bibitem{Wiesel2005Semidefinite}
A.~{Wiesel}, Y.~C. {Eldar}, and S.~{Shamai}, ``Semidefinite relaxation for
  detection of 16-{QAM} signaling in {MIMO} channels,'' \emph{IEEE Signal
  Process. Lett.}, vol.~12, no.~9, pp. 653--656, Aug. 2005.

\bibitem{huang2016consensus}
K.~{Huang} and N.~D. {Sidiropoulos}, ``{Consensus-ADMM} for general
  quadratically constrained quadratic programming,'' \emph{IEEE Trans. Signal
  Process.}, vol.~64, no.~20, pp. 5297--5310, Jul. 2016.

\bibitem{wu2019beamforming}
Q.~{Wu} and R.~{Zhang}, ``Beamforming optimization for wireless network aided
  by intelligent reflecting surface with discrete phase shifts,'' \emph{IEEE
  Trans. Commun.}, vol.~68, no.~3, pp. 1838--1851, Dec. 2020.

\bibitem{D12020Hybrid}
B.~{Di}, H.~{Zhang}, L.~{Song}, Y.~{Li}, Z.~{Han}, and H.~V. {Poor}, ``Hybrid
  beamforming for reconfigurable intelligent surface based multi-user
  communications: Achievable rates with limited discrete phase shifts,''
  \emph{IEEE J. Sel. Areas Commun.}, vol.~38, no.~8, pp. 1809--1822, Jun. 2020.

\bibitem{You2020Channel}
C.~{You}, B.~{Zheng}, and R.~{Zhang}, ``Channel estimation and passive
  beamforming for intelligent reflecting surface: Discrete phase shift and
  progressive refinement,'' \emph{IEEE J. Sel. Areas Commun.}, vol.~38, no.~11,
  pp. 2604--2620, Jul. 2020.

\bibitem{basar2019wireless}
E.~{Basar}, M.~{Di Renzo}, J.~{De Rosny}, M.~{Debbah}, M.~{Alouini}, and
  R.~{Zhang}, ``Wireless communications through reconfigurable intelligent
  surfaces,'' \emph{IEEE Access}, vol.~7, pp. 116\,753--116\,773, Aug. 2019.

\bibitem{Zhao2020Intelligent}
M.~M. {Zhao}, Q.~{Wu}, M.~J. {Zhao}, and R.~{Zhang}, ``Intelligent reflecting
  surface enhanced wireless network: Two-timescale beamforming optimization,''
  \emph{IEEE Trans. Wireless Commun.}, pp. 1--1, Sept. 2020.

\bibitem{zheng2020efficient}
B.~Zheng, C.~You, and R.~Zhang, ``Efficient channel estimation for double-{IRS}
  aided multi-user {MIMO} system,'' [Online] Available:
  https://arxiv.org/abs/2011.00738.

\bibitem{Mei2020Cooperative}
W.~{Mei} and R.~{Zhang}, ``Cooperative beam routing for multi-{IRS} aided
  communication,'' \emph{IEEE Wireless Commun. Lett.}, pp. 1--1, Oct. 2020.

\bibitem{cai2020two}
C.~Cai, H.~Yang, X.~Yuan, and Y.-C. Liang, ``Two-timescale optimization for
  intelligent reflecting surface aided {D2D} underlay communication,'' [Online]
  Available: https://arxiv.org/abs/2006.01604.

\bibitem{Zhou2020Robust}
G.~{Zhou}, C.~{Pan}, H.~{Ren}, K.~{Wang}, M.~D. {Renzo}, and A.~{Nallanathan},
  ``Robust beamforming design for intelligent reflecting surface aided {MISO}
  communication systems,'' \emph{IEEE Wireless Commun. Lett.}, vol.~9, no.~10,
  pp. 1658--1662, Jun. 2020.

\bibitem{edelman1988eigenvalues}
A.~{Edelman}, ``Eigenvalues and condition numbers of random matrices,''
  \emph{SIAM J. Matrix Anal. Appl.}, vol.~9, no.~4, pp. 543--560, 1988.

\end{thebibliography}

%\begin{thebibliography}{1}

%\bibitem{IEEEhowto:kopka}

%\end{thebibliography}

% biography section
% 
% If you have an EPS/PDF photo (graphicx package needed) extra braces are
% needed around the contents of the optional argument to biography to prevent
% the LaTeX parser from getting confused when it sees the complicated
% \includegraphics command within an optional argument. (You could create
% your own custom macro containing the \includegraphics command to make things
% simpler here.)
%\begin{IEEEbiography}[{\includegraphics[width=1in,height=1.25in,clip,keepaspectratio]{mshell}}]{Michael Shell}
% or if you just want to reserve a space for a photo:

%\begin{IEEEbiography}{Michael Shell}
%Biography text here.
%\end{IEEEbiography}

% if you will not have a photo at all:
%\begin{IEEEbiographynophoto}{John Doe}
%Biography text here.
%\end{IEEEbiographynophoto}

% insert where needed to balance the two columns on the last page with
% biographies
%\newpage

%\begin{IEEEbiographynophoto}{Jane Doe}
%Biography text here.
%\end{IEEEbiographynophoto}

% You can push biographies down or up by placing
% a \vfill before or after them. The appropriate
% use of \vfill depends on what kind of text is
% on the last page and whether or not the columns
% are being equalized.

%\vfill

% Can be used to pull up biographies so that the bottom of the last one
% is flush with the other column.
%\enlargethispage{-5in}

% that's all folks
\end{document}